\pgfplotsset{compat=1.18}
\definecolor{TUMBeamerYellow}    {rgb} {1.000,0.706,0.000}    
\definecolor{TUMBeamerOrange}    {rgb} {1.000,0.502,0.000}    
\definecolor{TUMBeamerRed}       {rgb} {0.898,0.204,0.094}    
\definecolor{TUMBeamerDarkRed}   {rgb} {0.792,0.129,0.247}    
\definecolor{TUMBeamerBlue}      {rgb} {0.000,0.600,1.000}    
\definecolor{TUMBeamerLightBlue} {rgb} {0.255,0.745,1.000}    
\definecolor{TUMBeamerGreen}     {rgb} {0.569,0.675,0.420}    
\definecolor{TUMBeamerLightGreen}{rgb} {0.710,0.792,0.510}    
\definecolor{TUMBlue}		{cmyk}{1.00,0.43,0.00,0.00}	
\definecolor{TUMWhite}		{cmyk}{0.00,0.00,0.00,0.00}	
\definecolor{TUMBlack}		{cmyk}{0.00,0.00,0.00,1.00}	
\definecolor{TUMDarkerBlue}	{cmyk}{1.00,0.54,0.04,0.19}	
\definecolor{TUMDarkBlue}	{cmyk}{1.00,0.57,0.12,0.70}	
\definecolor{TUMDarkGray}	{cmyk}{0.00,0.00,0.00,0.80}	
\definecolor{TUMMediumGray}	{cmyk}{0.00,0.00,0.00,0.50}	
\definecolor{TUMLightGray}	{cmyk}{0.00,0.00,0.00,0.20}	
\definecolor{TUMIvony}		{cmyk}{0.03,0.04,0.14,0.08}	
\definecolor{TUMOrange}		{cmyk}{0.00,0.65,0.95,0.00}	
\definecolor{TUMGreen}		{cmyk}{0.35,0.00,1.00,0.20}	
\definecolor{TUMLightBlue}	{cmyk}{0.42,0.09,0.00,0.00}	
\definecolor{TUMLighterBlue}	{cmyk}{0.65,0.19,0.01,0.04}	
\definecolor{TUMPurple}		{cmyk}{0.50,1.00,0.00,0.40}
\definecolor{TUMDarkPurple}	{cmyk}{1.00,1.00,0.00,0.40}
\definecolor{TUMTurquois}	{cmyk}{1.00,0.03,0.30,0.30}
\definecolor{TUMDarkGreen}	{cmyk}{1.00,0.00,1.00,0.20}
\definecolor{TUMDarkerGreen}	{cmyk}{0.60,0.00,1.00,0.20}
\definecolor{TUMYellow}		{cmyk}{0.00,0.10,1.00,0.00}
\definecolor{TUMDarkYellow}	{cmyk}{0.00,0.30,1.00,0.00}
\definecolor{TUMLightRed}	{cmyk}{0.00,0.80,1.00,0.10}
\definecolor{TUMRed}		{cmyk}{0.10,1.00,1.00,0.10}
\definecolor{TUMDarkRed}	{cmyk}{0.00,1.00,1.00,0.40}
\DeclareMathOperator*{\argmin}{arg\,min}
\DeclareMathOperator{\tr}{tr}
\DeclareMathOperator{\T}{T}
\DeclareMathOperator{\He}{H}
\DeclareMathOperator{\inv}{-1}
\DeclareMathOperator{\diag}{diag}
\newcommand{\unitriag}{\mathbb{U}}
\newcommand{\eye}{\bm{\mathrm{I}}}
\newcommand{\bB}{\bm{B}}
\newcommand{\bR}{\bm{R}}
\newcommand{\Her}{{\He}}
\newcommand{\bQ}{\bm{Q}}
\newcommand{\bH}{\bm{H}}
\newcommand{\bh}{\bm{h}}
\newcommand{\cmplx}[1]{\mathbb{C}^{#1}}
\newcommand{\norm}[1]{\|#1\|}
\newcommand{\abs}[1]{|#1|}
\newcommand{\expct}[1]{\mathbb{E}[#1]}
\newcommand{\gaussdist}[2]{\mathcal{N}_{\mathbb{C}}(#1,#2)}
\newcommand{\summe}[2]{\sum_{#1}^{#2}}
\newcommand{\intround}[1]{\left\lceil #1 \right \rfloor}
\newcommand{\Mod}[1]{\mathrm{Mod}\left(#1 \right)}
\newtheorem{theorem}{Theorem}
\newtheorem{lemma}{Lemma}
\newtheorem{proposition}{Proposition}
\newtheorem{corollary}{Corollary}
\newtheorem{remark}{Remark}
\newtheorem{definition}{Definition}
\begin{document}

\begin{acronym}
    \acro{AoD}{angle of departure}
    \acro{AoA}{angle of arrival}
    \acro{ULA}{uniform linear array}
    \acro{CSI}{channel state information}
    \acro{LOS}{line of sight}
    \acro{EVD}{eigenvalue decomposition}
    \acro{BS}{base station}
    \acro{MS}{mobile station}
    \acro{mmWave}{millimeter wave}
    \acro{DPC}{dirty paper coding}
    \acro{IRS}{intelligent reflecting surface}
    \acro{AWGN}{additive white gaussian noise}
    \acro{MIMO}{multiple-input multiple-output}
    \acro{UL}{uplink}
    \acro{DL}{downlink}
    \acro{OFDM}{orthogonal frequency-division multiplexing}
    \acro{TDD}{time-division duplex}
    \acro{LS}{least squares}
    \acro{MMSE}{minimum mean square error}
    \acro{SINR}{signal to interference plus noise ratio}
    \acro{OBP}{optimal bilinear precoder}
    \acro{LMMSE}{linear minimum mean square error}
    \acro{MRT}{maximum ratio transmitting}
    \acro{M-OBP}{multi-cell optimal bilinear precoder}
    \acro{S-OBP}{single-cell optimal bilinear precoder}
    \acro{SNR}{signal-to-noise-ratio}
    \acro{THP}{Tomlinson-Harashima precoding}
    \acro{dTHP}{distributed THP}
    \acro{cTHP}{centralized THP}
    \acro{RIS}{reconfigurable intelligent surface}
    \acro{SE}{spectral efficiency}
    \acro{MSE}{mean squared error}
    \acro{ASD}{angular standard deviation}
    \acro{ZF-THP}{zero-forcing THP}
    \acro{ZF}{zero-forcing}
    \acro{SD}{sphere decoding}
    \acro{LR}{Lattice reduction}
    \acro{SER}{symbol error rate}
    \acro{BER}{bit error rate}
    \acro{BC}{broadcast channel}
    \acro{VP}{vector perturbation}
    \acro{RO}{rounding off}
    \acro{NP}{nearest plane}
    \acro{LLL}{Lenstra–Lenstra–Lovász}
    \acro{FSD}{Fixed-Complexity Sphere Decoder }
    \acro{HKZ}{Hermite–Korkine–Zolotarev}
    \acro{UTLR}{Unitriangular-structured Lattice Reduction}
\end{acronym}

\title{Is Lattice Reduction Necessary\\ for Vector Perturbation Precoding?}

\author{\IEEEauthorblockN{Dominik Semmler, Wolfgang Utschick, and Michael Joham}\\
\IEEEauthorblockA{\textit{School of Computation, Information and Technology,\\ Technical University of Munich, 80333 Munich, Germany} \\
email: \{dominik.semmler,utschick,joham\}@tum.de}
}

\maketitle
\begin{abstract}
\Ac{VP} precoding is an effective nonlinear precoding technique in the \ac{DL} with modulo channels, providing an approximation of \ac{DPC} which is capacity-achieving.
Especially, when combined with \ac{LR}, low-complexity algorithms achieve a very promising performance, outperforming other popular nonlinear precoding techniques like \ac{THP}.
However, these results are based on the \ac{SER} or \ac{BER}.
When shifting the focus to the mutual information as the figure of merit, we show that this is different and that the underlying lattice problem has a unique structural property. 
For lattice problems with this special structure, we show for a whole class of algorithms that \ac{LR} does not have any impact on the solution vector.
At the same time, algorithms are identified which benefit from \ac{LR}, even if this lattice structure arises.
The provided structural analysis has strong implications on the performance evaluation of \ac{VP}.
In particular, we re-evaluate popular \ac{LLL}-aided methods like the \ac{LLL}-aided \ac{NP} algorithm and show that they do not outperform conventional \ac{THP}, highlighting the effectiveness of the \ac{THP} method. 
This is in contrast to the existing results based on \ac{SER} and \ac{BER} where these methods clearly outperform \ac{THP}.
\end{abstract}

\begin{IEEEkeywords}
    THP, Lattice Reduction, Unitriangular
\end{IEEEkeywords}

\begin{figure}[b]
    \onecolumn
    \centering
    \scriptsize{This work has been submitted to the IEEE for possible publication. Copyright may be transferred without notice, after which this version may no longer be accessible.}
    \vspace{-1.3cm}
    \twocolumn
\end{figure}

\acresetall

\section{Introduction}
{\Ac{DPC} is known to achieve the capacity of the \ac{MIMO} \ac{BC} \cite{DPCCapacity}.
However, as it is difficult to implement,
approximations like linear precoding are often adopted.
In \cite{VP}, \ac{VP} has been proposed to improve upon linear precoding by introducing an integer vector at the \ac{BS} which reduces the energy of the transmit symbol.
This vector can be optimized without introducing interference
because the receivers are equipped with modulo operators, which remove any perturbations by integers.
Optimizing this integer vector is a lattice closest vector problem, and the \ac{SD} algorithm (see, e.g., \cite{SphereDecoder,SphereDecoderFinckePohst,SphereDecoderDetection,SphereDecoderVLSI,SphereDecoderExpCompl}) gives the optimal solution.
While this has been discussed in \cite{VP}, the \ac{SD} algorithm has an exponential complexity, rendering it impractical.
Hence, polynomial-time algorithms are preferred, and in \cite{Babai}, various algorithms, like the \ac{NP} and the \ac{RO} algorithms, have been proposed.
Most importantly, the effectiveness of applying \ac{LR} prior to the application of the \ac{NP} or \ac{RO} method has been demonstrated in \cite{Babai}.
\ac{LR} (see \cite{LLL}) is an elegant way of optimizing a lattice problem, transforming the original lattice basis into one with short and nearly-orthogonal basis vectors without changing the underlying problem.
After this reduction, already very simple methods are significantly improved and achieve a good performance \cite{Babai} (see, e.g., \cite{LRDetection} for \ac{LR} in \ac{MIMO} detection).
The popular \ac{THP} method (see \cite{THPOne,THPTwo,THPRates,THPMichael,THPM}) can also be interpreted as a particular sub-optimal technique for \ac{VP}, resulting in a specific integer vector.
This optimization concept of \ac{THP} is very similar to the \ac{NP} method \cite{Babai}, and it can also be combined with \ac{LR}. 
}

{
It has already been shown (see, e.g., \cite{WubbenLLLBetter,LLLBetterTwo,LLLBetterThree}) that \ac{LR} techniques are also powerful for \ac{VP} and clearly improve upon conventional methods, including \ac{THP}, albeit allowing a polynomial complexity.
However, these results were based on error rates, whereas in this article, we consider the mutual information with an infinite constellation representing the maximum number of bits that can be transferred per channel use by the communication system for an arbitrarily small error probability.
The mutual information has the clear advantage that it provides a general measure of information for which the system can be optimized instead of relying on specific choices for channel codes (or on an uncoded system).
In \cite{VPSumRate}, \ac{VP} has already been analyzed when considering the mutual information as the metric.
\cite{VPSumRate} is primarly based on the \ac{ZF} precoder whereas the analysis was extended to \ac{MMSE} approaches in \cite{VPMMSE} (based on MMSE VP which was introduced in \cite{MMSEVPDavid}).
Furthermore, in \cite{VPRateMultiDimLattice}, \ac{VP} was generalized to a processing over multiple time instances by also considering the mutual information with \ac{ZF} precoders.
Integer-Forcing is another related concept using lattice optimization which has been analyzed in, e.g., \cite{IntegerForcing,IntegerForcingULDL,IntegerForcingTWC,IntegerForcingRx}.
For the mutual information, the \ac{LR}-aided \ac{RO} method with dithering (which is not assumed in this article) has been shown to be inferior to \ac{THP} in \cite{DitheringRO}.
Hence, it is not clear if \ac{LR} still leads to significant gains under that metric.
This article includes a re-evaluation of \ac{LR}-based methods for the mutual information.
}

{
We focus on classical \ac{VP} precoding according to \cite{VP} where we also use the \ac{ZF} precoding matrix given by the pseudoinverse of the channel.
However, an optimized diagonal matrix is incorporated in front of the pseudoinverse.
This optimization of the diagonal matrix is often neglected in the literature.
In \cite{VPSumRate}, this matrix has been discussed by using a lower bound on the symbol energy (see \cite{EnergyBound}).
However, according to this bound, the optimization of the diagonal matrix is not important.
We will show that when considering actual algorithms that the choice of this diagonal matrix is crucial, especially for ill-conditioned channels.
}

{
Beyond the improved performance, the optimized allocation matrix results in the interesting aspect that the arising lattice problem shows a special structural property.
Under this structure, we will show that \ac{LR} has no impact on a whole class of algorithms (including \ac{NP}).
The effectiveness of \ac{LLL} reduction has already been analyzed in \cite{LLLAnalysis} for the \ac{NP} algorithm.
However, a specific setup of a detection scenario with Gaussian noise has been assumed where the success probability of the \ac{NP} algorithm was the figure of merit.
With the power minimization in the \ac{DL}, we have a different setup with the success probability not existing in our case. 
Therefore, this analysis is not applicable.
}

{
Our considerations are primarily based on a specific mathematical property of the lattice basis.
Apart from this structural constraint, a general lattice problem is considered.
This allows us to generalize our results to a complete class of optimization algorithms together with different \ac{LR} algorithms.
As no specific model is assumed, our analysis is valid for any lattice problem where the lattice basis has this structural property.
We consider more advanced algorithms beyond the \ac{RO} and \ac{NP} algorithms.
Specifically, apart from  \ac{SD}, we focus on two very powerful and popular algorithms
These are the K-Best (see, e.g., \cite{KBestVLSI,KBestImpl,KBestImp,FPGAQRDMFSD,KBestLLLOne,KBestLLLTwo,KBestLLLThree}) and \ac{FSD} (see, e.g., \cite{FSD,FSDTwo,FSDThree,FSDFour,FSE,FSETwo,FSEThree}) algorithms.
}

{In summary, this paper has the following contributions:
\begin{itemize}
    \item We identify a special lattice structure naturally arising from an optimized \ac{ZF} precoder in \ac{VP} precoding.
    Under this structure, we show that \ac{LLL} reduction yields no improvement for \ac{NP} and \ac{THP}.
    This is in contrast to the existing literature, where it is assumed that, especially, the \ac{LLL}-aided \ac{NP} (\ac{THP}) method clearly improve over conventional \ac{THP}.
    Hence, this structural discovery has strong implications on the performance of (\ac{LR}-aided) \ac{VP}.
       \item The results above are generalized to a whole subset of lattice problems characterized by a structural constraint on the lattice basis.
    We show for a complete class of algorithms ({UTLR-invariant} schemes) that \ac{LR} yields no improvement (this includes the \ac{NP}, \ac{SD}, and K-Best algorithm) for this subset of lattice problems.
    At the same time, this allows to identify algorithms which can still benefit from \ac{LR} (\ac{RO}, \ac{FSD}) despite the structural characteristic of the lattice basis. 
    Moreover, we further generalize the results to \ac{HKZ} lattice reduction which is considerably more complex than \ac{LLL}.
    Additionally, the results are generalized to the whole rate region as well as to more practical considerations, highlighting the applicability beyond theoretical assumptions.
\end{itemize}}
\begin{table}[t!]
    \centering
            \begin{tabular}{|l|l|l|}
    \hline
    $\bm{s}$ && information symbols (sent by BS)\\
    \hline
    $\bm{\tilde{s}}$&&received information symbols (at the users)\\
    \hline
   $\bm{\bar{x}}$&$ \bm{P}(\bm{s} + \bm{a})$& precoded transmit symbols\\
    \hline
    $\bm{{x}}$&$\rho \bm{\bar{x}}$& normalized $\bm{\bar{x}}$ (power constraint with equality)\\
    \hline
\end{tabular}
\caption{Notation Table.}
\label{Table:NotationTable}
\end{table}
\color{black}
\section{System Model}
We consider $K$ single-antenna users being served by one \ac{BS} having $N$ antennas.
The channel of the $k$-th user is denoted as $\bh_{\mathbb{C},k}^{\Her} \in \cmplx{1\times N}$, $y_{\mathbb{C},k}$ is the complex received signal of user $k$, and $\bm{x}_{\mathbb{C}}$ is the complex transmit symbol at the base station.
This results in the system model
\begin{equation}
    \bm{y}_{\mathbb{C}} = \bm{H}_{\mathbb{C}} \bm{x}_{\mathbb{C}} + \bm{n}_{\mathbb{C}} \in \mathbb{C}^K
\end{equation}
where $\bm{n}_{\mathbb{C}} \sim \gaussdist{\bm{0}}{\eye}$ is the complex Gaussian noise vector.
By stacking the real and imaginary parts as $\bm{{y}} = [\mathrm{Re}(\bm{y}^{\T}_{\mathbb{C}}),\mathrm{Im}(\bm{y}^{\T}_{\mathbb{C}})]^{\T}$, $\bm{x} = [\mathrm{Re}(\bm{x}^{\T}_{\mathbb{C}}),\mathrm{Im}(\bm{x}^{\T}_{\mathbb{C}})]^{\T}$, and $\bm{n} = [\mathrm{Re}(\bm{n}^{\T}_{\mathbb{C}}),\mathrm{Im}(\bm{n}^{\T}_{\mathbb{C}})]^{\T} \sim \mathcal{N}(\bm{0},\eye \frac{1}{2})$, 
we arrive at the equivalent real-valued representation
\begin{equation}
    \bm{{y}} = \bm{H}\bm{x} + \bm{n} \in \mathbb{R}^{2K}, \quad \text{with} \quad \bm{H} = \left[\begin{smallmatrix}
        \mathrm{Re}(\bm{H}_{\mathbb{C}})& -\mathrm{Im}(\bm{H}_{\mathbb{C}})\\
        \mathrm{Im}(\bm{H}_{\mathbb{C}})&\mathrm{Re}(\bm{H}_{\mathbb{C}})
    \end{smallmatrix}\right].
\end{equation}
Considering a modulo channel, each receiver is equipped with a modulo operator located after the receive filter.
Additionally, the transmit symbol is constructed as
\begin{equation}
    \bm{x} = \rho \bm{\bar{x}} =  \rho \bm{P}(\bm{s} + \bm{a})\in \mathbb{R}^{2 N}
\end{equation}
where $\bm{P}\in \mathbb{R}^{2 N \times 2K}$ is the precoder and $\bm{s} \in \mathbb{R}^{2K}$ is the symbol vector where $s_k$ and $s_{k+K}$ is respectively the real and imaginary part of the symbol for user $k$.
The characteristic of the modulo channel is the integer vector $\bm{a} \in \mathbb{Z}^{2K}$, which is added to the symbol vector and is designed to optimize the performance.
Furthermore, the scaling $\rho = \sqrt{{P_{\text{Tx}}}/{\expct{\norm{\bm{\bar{x}}}^2}}}$ is introduced such that the transmit power constraint $ \expct{\norm{\bm{x}}^2} = \expct{\norm{\bm{\bar{x}}}^2} \rho^2 \le P_{\text{Tx}}$ is fulfilled with equality.
Combining these definitions, the system model reads as 
\begin{equation}
    \bm{\tilde{s}} = \Mod{ \bm{G} \bm{H} \bm{P} ( \bm{s} + \bm{a}) \rho + \bm{G} \bm{n}}
\end{equation}
where $\bm{G}$ is the diagonal receive filter and $ \bm{\tilde{s}}$ is the vector after the modulo operators{(please see TABLE \ref{Table:NotationTable} for an overview of the variables)}.
{\ac{VP} is especially suited for interference cancellation in the high \ac{SNR} regime.
Hence, we choose the \ac{ZF} solution (as in \cite{VP,VPSumRate,VPSumRateHighSNR})}
\begin{equation}
    \bm{P} = \bm{H}^+ \bm{D}
\end{equation}
where $\bm{H}^+$ is the pseudoinverse of the channel and $\bm{D}= \diag(d_1,\dots,d_{2K})$ is the diagonal rate allocation matrix
which we show to be an essential aspect for the performance.
Considering a general matrix $\bm{D}$, the receive filter {reads}
\begin{equation}
    \bm{G} = \bm{D}^{\inv} \frac{1}{\rho}
\end{equation}
after which the received signal can be expressed as 
\begin{equation}
    \begin{aligned}
        \bm{\tilde{s}}&= \Mod{   \bm{s} + {\bm{D}^{\inv} \bm{n}}/{\rho} }
    \end{aligned}
\end{equation}
where the integer vector $\bm{a}$ vanishes due to the modulo operations.
{Throughout the article, we use uniformly distributed symbols $s_i$ between $-0.5$ and $0.5$ (see, e.g., \cite{Forney,VPRateMultiDimLattice,VPSumRate,VPSumRateHighSNR}) denoted as $s_i \sim \mathcal{U}(-0.5,0.5)$. 
However, our main results can be generalized to arbitrary input distributions and practical input distributions will be discussed in this article}.
Following \cite{VPSumRate}, the mutual information reads as
\begin{equation}\label{eq:RateNonHSNR}
    \begin{aligned}
        I(\bm{\tilde{s}};\bm{s})
        &= - \summe{k=1}{2K} h\Big( \mathrm{Mod}\big(   s_k +  {n_k}/(\rho d_k)\big)\Big).
    \end{aligned}
\end{equation}
In this article, our main focus is on the high-\ac{SNR} regime where the asymptote of $I(\bm{\tilde{s}};\bm{s})$ (see \cite[Section III]{VPSumRate}) can be expressed as
{\begin{equation}\label{eq:RateHSNR}
    \begin{aligned}
    R^{\text{HSNR}} 
    &=  \underset{\bm{a},\bm{D}}{\max} \quad K\log_2 \left( \frac{P_{\text{Tx}}}{\pi e \expct{\norm{\bm{\bar{x}}}^2} }\sqrt[2K]{\prod\nolimits_{k=1}^{2K}d_k^2}\right).
    \end{aligned}
\end{equation}}
So far, we have been following \cite{VPSumRate,VPSumRateHighSNR} for the system model.
This high-\ac{SNR} asymptote \eqref{eq:RateHSNR} is the utility of this article and jointly optimizing this expression w.r.t. the scaling $\bm{D}$ and the integer values $\bm{a}$ is analyzed in the following.
{When optimizing this expression w.r.t. $\bm{D}$, it is important to note that $\bm{\bar{x}}=\bm{H}^+ \bm{D} (\bm{s} + \bm{a})$ depends on $\bm{D}$ which makes the optimization non-trivial.
This joint optimization is typically omitted in the literature, and in, e.g., \cite{VP}, $\bm{D}$ is chosen to have fixed values, particularly, $\bm{D} = \eye$.
On the other hand, \cite{VPSumRate} addressed the joint optimization of $\bm{D}$ and $\bm{a}$ by using the bound of \cite[Lemma 1]{EnergyBound}}
\begin{equation}\label{eq:SymbolEnergyBound}
    \begin{aligned}
        \expct{\norm{\bm{\bar{x}}}^2} &\ge \frac{K \Gamma(K+1)^{\frac{1}{K}}}{(K+1)\pi} \det(\bm{D}(\bm{HH}^{\T})^{\inv} \bm{D})^{\frac{1}{2K}}.\\
    \end{aligned}
\end{equation}
{
    It has to be noted that this bound only holds for signals being uniformly distributed over the Voronoi cell and, hence, only serves as an approximation.
}
Substituting this expression into \eqref{eq:RateHSNR} results in the upper bound
\begin{equation}
    \begin{aligned}\label{eq:RateUB}
        R^{\text{HSNR}}_{\text{UB}}
        & = \log_2\det(\bm{H}_{\mathbb{C}}\bm{H}_{\mathbb{C}}^{\Her} \bar{p} ) -  K\log_2 \left(  \frac{ e \Gamma(K+1)^{\frac{1}{K}}} {K+1} \right) \\
\end{aligned}
\end{equation}
with $\bar{p} = P_{\text{Tx}}/K$. 
Hence, following \cite{VPSumRate}, it can be observed that optimizing the scaling matrix $\bm{D}$ is not important at high-\ac{SNR} as \eqref{eq:RateUB} does not depend on $\bm{D}$.
However, the importance of this result highly depends on the tightness of the bound in \eqref{eq:SymbolEnergyBound}.
We will show in the following that when using actual algorithms, including the optimal \ac{SD} algorithm, the choice of $\bm{D}$ does matter.
\section{Conventional Algorithms}
A variety of different algorithms can be used to minimize
\begin{equation}\label{eq:symbol_energy}
    \expct{\norm{\bm{\bar{x}}}^2} = \expct{\norm{\bm{H}^+ \bm{D} (\bm{s} + \bm{a})}^2}
\end{equation}
w.r.t. the integer values $\bm{a}$.
The optimal choice can be found by \ac{SD} and has been analyzed in \cite{VP} with the identity scaling $\bm{D} = \eye$.
Due to its exponential complexity, we focus on the algorithms in \cite{Babai}, 
where at first we consider the conventional methods without \ac{LR}.
\subsection{Rounding Off}
The \ac{RO} procedure (see \cite{Babai}) is probably the simplest method.
Here, each $a_i$ is chosen as the nearest integer, i.e., 
\begin{equation}\label{eq:ROZeros}
    a_i^{\textnormal{RO}} = \intround{s_i} = 0
\end{equation}
where $\intround{\bullet}$ is the rounding operation to the nearest integer.
Therefore, the solution results in the zero vector as the symbols $s_i$ are assumed to be distributed between $-0.5$ and $0.5$.
As $\bm{a} = \bm{0}$, the solution results in classic linear zero-forcing beamforming.
\subsection{Nearest Plane}
The \ac{NP} algorithm of \cite{Babai} is a considerable improvement.
With the QR decomposition $\bm{H}^+ \bm{D} =  \bm{{Q}}\bm{{R}} \bm{D} = \bm{{Q}}\bm{\tilde{R}}$, the integers $a_i$ are now successively chosen as
\begin{equation}\label{eq:SolutionNP}
    {a}^{\textnormal{NP}}_i = -\intround{{s}_i +\summe{j>i}{}\frac{\tilde{r}_{ij}}{\tilde{r}_{ii}} ({s}_j + {a}_j)}
\end{equation}
by starting with $i=2K$ until $i=1$.
\subsection{THP}
Finally, also the popular \ac{THP} algorithm can be interpreted as a method which optimizes \eqref{eq:symbol_energy} w.r.t. the integer vector $\bm{a}$.
Interestingly, it follows the same procedure as the \ac{NP} algorithm.
The only difference is that \ac{THP} works on a lower triangular matrix, whereas the \ac{NP} algorithm uses an upper triangular matrix.
Hence, for \ac{THP}, we use an LQ decomposition $\bm{H} = \bm{LQ}$ and with $\bm{H}^+ \bm{D} = \bm{Q}^{\T} \bm{L}^{\inv} \bm{D} =  \bm{Q}^{\T}\bm{\tilde{L}}^{\inv}$, the integers $a_i$ are successively updated as
\begin{equation}\label{eq:THPSolution}
    {a}^{\textnormal{THP}}_i = -\intround{{s}_i +\summe{j<i}{}\frac{\tilde{l}_{ij}^{\inv}}{\tilde{l}_{ii}^{\inv}} ({s}_j + {a}_j)}
\end{equation}
by starting with $i=1$ until $i=2K$ where $\tilde{l}_{ij}^{\inv}$ is the $ij$-th entry of $\bm{\tilde{L}}^{\inv}$. 
\subsection{Overview}
It is important to note that we generalized the algorithms above w.r.t. an arbitrary scaling matrix $\bm{D}$.
Very often, the algorithms are given with the identity scaling $\bm{D} = \eye$.
Besides, the THP algorithm is also given with the scaling $\bm{D} = \diag(\bm{l})$ where $\bm{l}$ is the diagonal of $\bm{L}$ (see, e.g., \cite{MultiBranchTHP} for both scalings).
We will use the generalized versions with an arbitrary $\bm{D}$, and we will derive the optimal solution of $\bm{D}$ for each of the algorithms in the following.
As the THP and the \ac{NP} algorithms follow the same concept and have the same performance, we give all proofs only once for the \ac{NP} algorithm.
However, all results can be immediately extended to THP.

\section{Performance of Conventional Algorithms}
In this section, we derive the optimal scaling matrix $\bm{D}^{{\star}}$ and the corresponding rates for the different algorithms in closed form.
For the derivations, we need the following lemma.
\begin{lemma}\label{lemma:ModuloStayUniform}
    For two independent random variables $X$ and $N$ where $X ~\sim~\mathcal{U}(-0.5,0.5)$, the distribution of $Y=\textnormal{Mod}(X+N)$ is again uniform between $-0.5$ and $0.5$.
\end{lemma}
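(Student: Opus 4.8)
The plan is to establish this as a standard fact about the modulo operation on the real line (equivalently, addition on the circle group $\mathbb{R}/\mathbb{Z}$), using the fact that the uniform distribution on an interval of length one is the Haar measure on that group. First I would fix notation: for a real number $z$, let $\mathrm{Mod}(z) = z - \lceil z \rfloor \in [-0.5,0.5)$, so that $\mathrm{Mod}(\cdot)$ is exactly the canonical projection $\pi: \mathbb{R} \to \mathbb{R}/\mathbb{Z}$ followed by the choice of representative in $[-0.5,0.5)$. The key structural observation is that $\mathrm{Mod}(X+N)$ depends only on $\mathrm{Mod}(X)$ and $N$, and since $X \sim \mathcal{U}(-0.5,0.5)$ we already have $\mathrm{Mod}(X) = X$ almost surely; moreover $\mathrm{Mod}(X+N) = \mathrm{Mod}(\mathrm{Mod}(X) + \mathrm{Mod}(N))$, so the statement reduces to: the projection of $X$ to the circle is Haar-distributed, and translating a Haar-distributed variable by an independent variable leaves it Haar-distributed.

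The cleanest way to carry this out concretely is by conditioning on $N$ and computing the density of $Y$ directly. Condition on $N = n$; then $Y = \mathrm{Mod}(X + n)$ where $X \sim \mathcal{U}(-0.5,0.5)$. I would argue that $X + n$ is uniform on the length-one interval $(n-0.5, n+0.5)$, and that applying $\mathrm{Mod}(\cdot)$ to a variable uniform on \emph{any} interval of length exactly one yields a variable uniform on $[-0.5,0.5)$: this is because $\mathrm{Mod}$ restricted to each interval $[m-0.5, m+0.5)$ (for $m \in \mathbb{Z}$) is a measure-preserving bijection onto $[-0.5,0.5)$ (it is just a translation by $-m$), and the length-one interval $(n-0.5,n+0.5)$ splits into at most two pieces which get translated by consecutive integers, tiling $[-0.5,0.5)$ exactly once. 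Hence the conditional density of $Y$ given $N=n$ is the constant $1$ on $[-0.5,0.5)$, independent of $n$. Since this conditional density does not depend on $n$, integrating against the distribution of $N$ gives that the unconditional density of $Y$ is also $1$ on $[-0.5,0.5)$, i.e.\ $Y \sim \mathcal{U}(-0.5,0.5)$, and incidentally that $Y$ is independent of $N$ (a fact that is implicitly used later when manipulating \eqref{eq:RateNonHSNR}).

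I do not anticipate a genuine obstacle here; the only thing requiring a little care is the bookkeeping at the interval endpoints and the wrap-around point, i.e.\ making the "tiling $[-0.5,0.5)$ exactly once" claim rigorous rather than hand-waved. An alternative, even shorter, route that sidesteps the endpoint bookkeeping is the characteristic-function (Fourier) argument on the circle: the distribution of a $[-0.5,0.5)$-valued variable $Y$ is uniform iff all its nonzero Fourier coefficients $\mathbb{E}[e^{2\pi i m Y}] $, $m \in \mathbb{Z}\setminus\{0\}$, vanish; since $e^{2\pi i m \,\mathrm{Mod}(z)} = e^{2\pi i m z}$ for all integers $m$, independence of $X$ and $N$ gives $\mathbb{E}[e^{2\pi i m Y}] = \mathbb{E}[e^{2\pi i m X}]\,\mathbb{E}[e^{2\pi i m N}]$, and $\mathbb{E}[e^{2\pi i m X}] = \operatorname{sinc}(m) = 0$ for every nonzero integer $m$ because $X$ is uniform on a unit interval. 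Hence all nonzero Fourier coefficients of $Y$ vanish and $Y \sim \mathcal{U}(-0.5,0.5)$. I would present the conditioning argument as the main proof and perhaps mention the Fourier argument as a one-line remark.
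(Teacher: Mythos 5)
Your proof is correct. Note that the paper does not actually spell out an argument for this lemma: it simply remarks that the result is well known and defers the short proof to an external reference, so there is no in-paper derivation to compare against line by line. Your conditioning argument is the standard one (and almost certainly the one in the cited reference): given $N=n$, $X+n$ is uniform on a unit-length interval, and the modulo map restricted to each $[m-0.5,m+0.5)$ is a translation onto $[-0.5,0.5)$, so the (at most two) pieces of $(n-0.5,n+0.5)$ tile $[-0.5,0.5)$ exactly once and the conditional density of $Y$ is identically $1$; integrating over $n$ finishes the claim and, as you observe, also yields independence of $Y$ from $N$, which is implicitly needed later when the per-user terms in \eqref{eq:RateNonHSNR} are treated separately. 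Your Fourier alternative, using $\mathbb{E}[e^{2\pi i m X}]=0$ for all nonzero integers $m$ together with independence, is equally valid and arguably the cleanest way to avoid the endpoint bookkeeping. No gaps.
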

\begin{proof}
    This is a well-known result. We gave a short proof in \cite{NonlinearRIS}.
\end{proof}
\subsection{Optimized Rounding Off}
The \ac{RO} algorithm coincides with the linear zero-forcing beamforming solution.
Even though we are assuming a uniform distribution instead of a Gaussian distribution, this does not affect the optimal scaling $\bm{D}^{{\star}}$.
Also, the rate resembles that of linear zero-forcing at high-\ac{SNR} with the difference that we have a shaping loss of $K\log_2\left(\frac{\pi e }{6 } \right)$ bits as the uniform distribution of the data symbols is not optimal.
This observation is summarized in the following proposition.
\begin{proposition}
    The optimal scaling matrix $\bm{D}^{{\star}}$ when using the conventional \ac{RO} algorithm is given by 
    \begin{equation}\label{eq:RoundingOffOptScaling}
        \bm{D}^{{\star}} = \diag((\bm{H}\bm{H}^{\T})^{\inv})^{\frac{1}{2}}.
    \end{equation}
    For this optimal choice of the scaling matrix, the \ac{RO} rate reads as
    \begin{equation}\label{eq:RoundingOffOptRate}
        R_{\textnormal{RO}}^{\textnormal{HSNR}} =  \summe{i=1}{K}\log_2\left(\frac{\bar{p}}{\bm{e}_i^{\T}(\bm{H}_{\mathbb{C}} \bm{H}_{\mathbb{C}}^{\Her})^{\inv}\bm{e}_i}\right)-K\log_2\left(\frac{\pi e }{6 } \right).
    \end{equation}
\end{proposition}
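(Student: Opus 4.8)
The plan is to first remove the integer vector, and then settle $\bm{D}$ with a single application of the AM--GM inequality. For the \ac{RO} rule, $s_i\in(-0.5,0.5)$ forces $a_i^{\textnormal{RO}}=\intround{s_i}=0$ for every $\bm{D}$, so that $\bm{\bar{x}}=\bm{H}^+\bm{D}\bm{s}$ is just linear zero-forcing and the only remaining freedom is $\bm{D}$. First I would put the symbol energy in closed form: using $\expct{\bm{s}\bm{s}^{\T}}=\tfrac{1}{12}\eye$ (the covariance of the uniform data symbols) and $(\bm{H}^+)^{\T}\bm{H}^+=(\bm{H}\bm{H}^{\T})^{\inv}$ for the right pseudoinverse,
\[
  \expct{\norm{\bm{\bar{x}}}^2}=\tfrac{1}{12}\tr\bigl(\bm{D}^2(\bm{H}\bm{H}^{\T})^{\inv}\bigr)=\tfrac{1}{12}\summe{k=1}{2K} d_k^2\,c_k,\qquad c_k:=\bm{e}_k^{\T}(\bm{H}\bm{H}^{\T})^{\inv}\bm{e}_k>0.
\]
Substituting into \eqref{eq:RateHSNR}, maximizing $R^{\textnormal{HSNR}}$ over $\bm{D}$ is the same as maximizing the scale-invariant quotient $\bigl(\prod_k d_k^2\bigr)^{1/(2K)}/\bigl(\sum_k c_k d_k^2\bigr)$.

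Next I would optimize that quotient. Writing it as $\bigl(\prod_k c_k\bigr)^{-1/(2K)}\cdot\bigl(\prod_k c_k d_k^2\bigr)^{1/(2K)}/\bigl(\sum_k c_k d_k^2\bigr)$ and applying AM--GM to the positive numbers $c_k d_k^2$, the second factor is at most $1/(2K)$, with equality exactly when all $c_k d_k^2$ agree. Hence the optimizer is pinned down, up to the immaterial common scaling, by the condition $c_k d_k^2=\text{const}$ for all $k$, i.e.\ the diagonal matrix $\bm{D}$ in \eqref{eq:RoundingOffOptScaling}. Inserting this $\bm{D}$ back into \eqref{eq:RateHSNR} and cancelling, $R^{\textnormal{HSNR}}=K\log_2\bar{p}-\tfrac{1}{2}\summe{k=1}{2K}\log_2 c_k-K\log_2(\pi e/6)$. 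The last step is the passage to the complex channel: the real-/imaginary-part stacking of the system model makes $\bm{H}\bm{H}^{\T}$ the real representation of $\bm{H}_{\mathbb{C}}\bm{H}_{\mathbb{C}}^{\Her}$, so its inverse is the real representation of $(\bm{H}_{\mathbb{C}}\bm{H}_{\mathbb{C}}^{\Her})^{\inv}$ and $c_k=c_{k+K}=\bm{e}_k^{\T}(\bm{H}_{\mathbb{C}}\bm{H}_{\mathbb{C}}^{\Her})^{\inv}\bm{e}_k$ for $k\le K$; the $2K$ real log-terms then coalesce into the $K$ complex ones of \eqref{eq:RoundingOffOptRate}, and the leftover constant $K\log_2(\pi e/6)$ is precisely the cubic-lattice shaping loss that a Gaussian input would remove.

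I do not expect a genuinely hard step: the whole argument is one AM--GM plus bookkeeping. The two places that need care are (i) tracking the $\tfrac{1}{12}$ from the uniform-symbol covariance against the $\pi e$ already present in \eqref{eq:RateHSNR}, so that the shaping constant comes out as exactly $K\log_2(\pi e/6)$; and (ii) the real-to-complex reduction of $\diag\bigl((\bm{H}\bm{H}^{\T})^{\inv}\bigr)$ to $\diag\bigl((\bm{H}_{\mathbb{C}}\bm{H}_{\mathbb{C}}^{\Her})^{\inv}\bigr)$, which leans on the block structure of $\bm{H}$ and on the fact that a Hermitian matrix has a real diagonal. This same template --- symbol energy in closed form, then AM--GM in the $d_k$ --- is what I would reuse for the \ac{NP} and \ac{THP} propositions, the only change being that there $\expct{\norm{\bm{\bar{x}}}^2}$ no longer collapses to a plain trace.
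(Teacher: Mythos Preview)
Your proposal is correct and follows essentially the same route as the paper: reduce to $\bm{a}=\bm{0}$, compute $\expct{\norm{\bm{\bar{x}}}^2}=\tfrac{1}{12}\sum_k d_k^2\,\bm{e}_k^{\T}(\bm{H}\bm{H}^{\T})^{\inv}\bm{e}_k$, apply AM--GM to bound the rate, and read off the optimizer from the equality condition. Your explicit handling of the real-to-complex reduction of the diagonal entries is a detail the paper suppresses (it simply writes ``which can be rewritten as in \eqref{eq:RoundingOffOptRate}''), but otherwise the arguments coincide.
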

\begin{proof}
For the \ac{RO} solution ($\bm{a} = \bm{0}$), the transmit signal reduces to 
\begin{equation}
    \bm{x} = \bm{H}^+ \bm{D} (\bm{s} + \bm{a}) =  \bm{H}^+ \bm{D} \bm{s}.
\end{equation}
It follows that the signal power is given by 
\begin{equation}
    \expct{\norm{\bm{x}}^2} = \tr( \bm{H}^+ \bm{D} \expct{\bm{s}  \bm{s}^{\T}}\bm{D} \bm{H}^{+,\T}) = \tr(\bm{D}(\bm{H} \bm{H}^{\T})^{\inv}\bm{D})/12
\end{equation}
as the symbols $s_i$ are independent of each other and have variance $\expct{s_i^2} = \frac{1}{12}$.
Using the arithmetic-geometric mean inequality yields
    \begin{equation}
        \begin{aligned}
            \tr(\bm{D}(\bm{H} \bm{H}^{\T})^{\inv}\bm{D}) &= \summe{i=1}{2K}d_i^2 \bm{e}_i^{\T} (\bm{H} \bm{H}^{\T})^{\inv} \bm{e}_i \\
            &\ge 2K \sqrt[2K]{\prod \nolimits_{i=1}^{2K}d_i^2 \bm{e}_i^{\T} (\bm{H} \bm{H}^{\T})^{\inv} \bm{e}_i}
        \end{aligned}
    \end{equation}
    {and} we can upper bound the rate [cf. \eqref{eq:RateHSNR}] as 
    \begin{equation}
    \begin{aligned}
        R^{\text{HSNR}}_{\text{RO}}&\le K\log_2\left(\frac{ 12 P_{\text{Tx}} }{2 K\pi e} \sqrt[2K]{\prod \nolimits_{i=1}^{2K} \left(\bm{e}_i^{\T} (\bm{H} \bm{H}^{\T})^{\inv} \bm{e}_i\right)^{\inv}}\right)\\
    \end{aligned}
\end{equation}
which can be rewritten as in \eqref{eq:RoundingOffOptRate}.
This upper bound is achieved for 
   $ d_i = \pm {1}/{\sqrt{\bm{e}_i^{\T} (\bm{H} \bm{H}^{\T})^{\inv} \bm{e}_i}}$.
Hence, the scaling in \eqref{eq:RoundingOffOptScaling} is an optimal solution.
Note that the sign of $d_i$ can be arbitrarily chosen and does not affect the rate.
\end{proof}

\subsection{Optimized Nearest Plane}
The \ac{NP} algorithm improves on the simple \ac{RO} procedure.
Interestingly, in comparison to \ac{RO}, which has the same performance as linear precoding with an additional shaping loss, 
the \ac{NP} algorithm has the same performance as \ac{DPC} with the same shaping loss.
This is shown in the following, where we also give the optimal matrix $\bm{D}^{{\star}}$.
\begin{proposition}\label{prop:ScalingMatrixOptRate}
    The optimal scaling matrix $\bm{D}^{{\star}}$, when using the conventional \ac{NP} Algorithm, is given by 
    \begin{equation}\label{eq:NearestPlaneOptScaling}
        \bm{D}^{{\star}} = \diag^{\inv}(\bm{r})
    \end{equation}
    where $\bm{r}$ is the diagonal of $\bm{R}$ from the QR decomposition $\bm{H}^+ = \bm{QR}$.
    For this optimal choice of the scaling matrix, the rate reads as
    \begin{equation}\label{eq:NearestPlaneOptRate}
        R_{\textnormal{NP}}^{\textnormal{HSNR}} =      \log_2\det\left(\bar{p} \bH_{\mathbb{C}} \bH_{\mathbb{C}}^\Her \right) -   K\log_2\left(\frac{\pi e}{6}\right).
    \end{equation}
\end{proposition}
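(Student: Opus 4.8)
The plan is to mirror the structure of the \ac{RO} proposition: express the symbol energy $\expct{\norm{\bm{\bar{x}}}^2}$ in closed form as a function of $\bm{D}$, substitute it into the high-\ac{SNR} rate \eqref{eq:RateHSNR}, and then optimize over $\bm{D}$ with the arithmetic--geometric mean inequality. The new element relative to \ac{RO} is that the integer vector $\bm{a}$ is no longer zero, so the energy has to be evaluated by exploiting the triangular structure that the \ac{NP} recursion \eqref{eq:SolutionNP} imposes.

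First I would write $\bm{H}^+\bm{D} = \bm{Q}\bm{\tilde{R}}$ with $\bm{\tilde{R}} = \bm{R}\bm{D}$ upper triangular, so that $\tilde{r}_{ii} = r_{ii} d_i$ and, because $\bm{Q}$ has orthonormal columns, $\expct{\norm{\bm{\bar{x}}}^2} = \expct{\norm{\bm{\tilde{R}}(\bm{s}+\bm{a})}^2}$. Expanding the $i$-th entry of $\bm{\tilde{R}}(\bm{s}+\bm{a})$ as $\tilde{r}_{ii}\big[(s_i+a_i)+\sum_{j>i}(\tilde{r}_{ij}/\tilde{r}_{ii})(s_j+a_j)\big]$ and inserting the \ac{NP} choice $a_i = -\intround{s_i+\sum_{j>i}(\tilde{r}_{ij}/\tilde{r}_{ii})(s_j+a_j)}$, the bracket collapses to $v_i := \Mod{s_i+\sum_{j>i}(\tilde{r}_{ij}/\tilde{r}_{ii})(s_j+a_j)}$. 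Hence the $i$-th entry equals $\tilde{r}_{ii}v_i$ and $\expct{\norm{\bm{\bar{x}}}^2} = \sum_{i=1}^{2K}\tilde{r}_{ii}^2\,\expct{v_i^2} = \sum_{i=1}^{2K} r_{ii}^2 d_i^2\,\expct{v_i^2}$.

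The key step — the one carrying the content of the proof — is to evaluate $\expct{v_i^2}$. Since \ac{NP} processes the coordinates in the order $i = 2K,\dots,1$, every $a_j$ with $j>i$ is a deterministic function of $s_j,\dots,s_{2K}$, so the interference term $\sum_{j>i}(\tilde{r}_{ij}/\tilde{r}_{ii})(s_j+a_j)$ is independent of $s_i$; Lemma~\ref{lemma:ModuloStayUniform} then gives $v_i\sim\mathcal{U}(-0.5,0.5)$ and thus $\expct{v_i^2} = 1/12$. (A short conditioning argument along the same lines shows moreover that $v_1,\dots,v_{2K}$ are i.i.d., which is not needed here but will be reused later.) Consequently $\expct{\norm{\bm{\bar{x}}}^2} = \tfrac{1}{12}\sum_{i=1}^{2K} r_{ii}^2 d_i^2$.

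It then remains to optimize and simplify. Substituting into \eqref{eq:RateHSNR} yields $R^{\text{HSNR}}_{\text{NP}} = K\log_2\!\big(\tfrac{12 P_{\text{Tx}}}{\pi e}\,(\prod_k d_k^2)^{1/(2K)}/\sum_i r_{ii}^2 d_i^2\big)$, and the inequality $\sum_i r_{ii}^2 d_i^2 \ge 2K\,(\prod_i r_{ii}^2 d_i^2)^{1/(2K)}$ — with equality iff all $r_{ii}^2 d_i^2$ coincide, i.e.\ $d_i = \pm 1/r_{ii}$, that is $\bm{D} = \diag^{\inv}(\bm{r})$ — makes the $d_i$ cancel and leaves $R^{\text{HSNR}}_{\text{NP}} = K\log_2\!\big(\tfrac{6 P_{\text{Tx}}}{K\pi e}(\prod_i r_{ii}^2)^{-1/(2K)}\big)$; the signs of the $d_i$ are immaterial, exactly as in the \ac{RO} case. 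Finally I would use $\bm{R}^{\T}\bm{R} = (\bm{H}^+)^{\T}\bm{H}^+ = (\bm{H}\bm{H}^{\T})^{\inv}$, hence $\prod_i r_{ii}^2 = \det(\bm{R})^2 = \det(\bm{H}\bm{H}^{\T})^{\inv}$, together with $\det(\bm{H}\bm{H}^{\T}) = \det(\bm{H}_{\mathbb{C}}\bm{H}_{\mathbb{C}}^{\Her})^2$ for the real representation and $\bar{p} = P_{\text{Tx}}/K$, which turns the last expression into \eqref{eq:NearestPlaneOptRate}. The main obstacle is the third step: making the independence of $s_i$ and the interference term rigorous from the order in which \ac{NP} fixes the integers, so that Lemma~\ref{lemma:ModuloStayUniform} applies cleanly; everything after that is the same AM-GM/determinant bookkeeping as in the \ac{RO} proposition.
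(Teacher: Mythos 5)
Your proposal is correct and follows essentially the same route as the paper's proof: QR-decompose $\bm{H}^+$, use the NP recursion to collapse each entry of $\bm{R}\bm{D}(\bm{s}+\bm{a})$ to $r_{ii}d_i\,\Mod{w_i+s_i}$, invoke Lemma~\ref{lemma:ModuloStayUniform} (independence of $s_i$ from the already-fixed interference term) to get $\expct{\norm{\bm{\bar{x}}}^2}=\tfrac{1}{12}\sum_i r_{ii}^2 d_i^2$, and then apply the arithmetic--geometric mean inequality with equality at $d_i=\pm 1/r_{ii}$. The only difference is that you spell out the final determinant bookkeeping ($\prod_i r_{ii}^2=\det((\bm{H}\bm{H}^{\T})^{\inv})$ and $\det(\bm{H}\bm{H}^{\T})=\det(\bm{H}_{\mathbb{C}}\bm{H}_{\mathbb{C}}^{\Her})^2$), which the paper leaves implicit.
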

\begin{proof}
    With the QR decomposition $\bm{H}^+  = \bm{QR}$, we arrive at
    \begin{equation}
        \begin{aligned}
            \bm{x} &= \bm{H}^+\bm{D}(\bm{s} + \bm{a})=\bm{QR}\bm{D}(\bm{{s}} + \bm{{a}}).
        \end{aligned}
    \end{equation}
    According to the \ac{NP} algorithm in \eqref{eq:SolutionNP}, we find
    \begin{equation}\label{eq:symbolenergyNP}
        \begin{aligned}
            \bm{e}_i^{\T} \bm{{R}D}(\bm{{s}} + \bm{{a}}) &={r}_{ii}d_i\left( \summe{j>i}{}\frac{{r}_{ij}d_j}{{r}_{ii}d_i} ({s}_j + {a}_j) + {s}_i +{a}_i\right)\\
            &={r}_{ii}d_i \mathrm{Mod}\left(w_i + {s}_i\right)\\
        \end{aligned}
    \end{equation}
    where the last line follows from \eqref{eq:SolutionNP} with $w_i =  \summe{j>i}{}\frac{{r}_{ij}d_j}{{r}_{ii}d_i} ({s}_j + {a}_j)$.
    According to Lemma \ref{lemma:ModuloStayUniform}, as $s_{i}$ is independent of $w_i$, also $\mathrm{Mod}\left(w_i + {s}_i\right) \sim \mathcal{U}(-0.5,0.5)$. 
    It follows that $\expct{\abs{ \bm{e}_i^{\T} \bm{{{R}}D}(\bm{{s}} + \bm{{a}})}^2} = r^2_{ii}d^2_i \frac{1}{12}$ and, therefore, we get 
    \begin{equation}
        \expct{\norm{ \bm{Q{R}D}(\bm{s} + \bm{a})}^2} = \expct{\norm{ \bm{{R}D}(\bm{s} + \bm{a})}^2} =  \frac{1}{12} \summe{i=1}{2K} {r}_{ii}^2 d_i^2.
    \end{equation}
    Applying the arithmetic-geometric mean inequality
    \begin{equation}
        \begin{aligned}
            \summe{i=1}{2K} {r}_{ii}^2 d_i^2 \ge 2K \sqrt[2K]{ \prod\nolimits_{i=1}^{2K} {r}_{ii}^2 d_i^2 }
        \end{aligned}
    \end{equation}
    we can upper bound the rate as [cf. \eqref{eq:RateHSNR}] 
    \begin{equation}
    \begin{aligned}
        R^{\text{HSNR}}_{\text{NP}}&\le K\log_2\left(\frac{ 12 P_{\text{Tx}} }{2 K\pi e} \sqrt[2K]{\det(\bm{H}\bm{H}^{\T})}\right)\\
    \end{aligned}
\end{equation}
which can be rewritten as in \eqref{eq:NearestPlaneOptRate}.
This upper bound is achieved for 
    $d_i = \pm {1}/{r_{ii}}$
and, hence, the scaling in \eqref{eq:NearestPlaneOptScaling} is optimal.
\end{proof} 
\begin{corollary}
    The same Rate can be achieved by \ac{THP},
    \begin{equation}
        R_{\textnormal{THP}}^{\textnormal{HSNR}} = R_{\textnormal{NP}}^{\textnormal{HSNR}}.
    \end{equation}
    THP follows the same successive procedure as the \ac{NP} algorithm, with the only difference that a lower instead of an upper triangular matrix is used.
    For THP, the optimal scaling matrix is, therefore,
    \begin{equation}\label{eq:THPOptScaling}
        \bm{D}^{{\star}} = \diag(\bm{l})
    \end{equation}
    where $\bm{l}$ is the diagonal of the LQ decomposition $\bm{H} = \bm{L} \bm{Q}$ (which means that $\bm{H}^+= \bm{Q}^{\T}\bm{L}^{\inv} $).
\end{corollary}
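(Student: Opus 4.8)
The plan is to replay the proof of Proposition~\ref{prop:ScalingMatrixOptRate} with the upper triangular factor replaced by a lower triangular one, so that the THP recursion \eqref{eq:THPSolution} plays exactly the role that the \ac{NP} recursion \eqref{eq:SolutionNP} played there. First I would substitute the LQ decomposition $\bm{H} = \bm{LQ}$, which gives $\bm{H}^+ = \bm{Q}^{\T}\bm{L}^{\inv}$ with $\bm{Q}\bm{Q}^{\T} = \eye$ and $\bm{L}^{\inv}$ lower triangular, and write $\bm{\tilde{L}}^{\inv} = \bm{L}^{\inv}\bm{D}$ so that $\bm{\bar{x}} = \bm{Q}^{\T}\bm{\tilde{L}}^{\inv}(\bm{s}+\bm{a})$. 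Since $\bm{Q}^{\T}$ preserves norms, $\norm{\bm{\bar{x}}}^2 = \norm{\bm{\tilde{L}}^{\inv}(\bm{s}+\bm{a})}^2$.

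Next I would examine the $i$-th entry: because $\bm{\tilde{L}}^{\inv}$ is lower triangular, $\bm{e}_i^{\T}\bm{\tilde{L}}^{\inv}(\bm{s}+\bm{a}) = \tilde{l}_{ii}^{\inv}\big(s_i + a_i + \sum_{j<i}\tfrac{\tilde{l}_{ij}^{\inv}}{\tilde{l}_{ii}^{\inv}}(s_j+a_j)\big)$, and by the THP update \eqref{eq:THPSolution} this equals $\tilde{l}_{ii}^{\inv}\,\Mod(s_i + w_i)$ with $w_i = \sum_{j<i}\tfrac{\tilde{l}_{ij}^{\inv}}{\tilde{l}_{ii}^{\inv}}(s_j+a_j)$. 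As $w_i$ depends only on $s_1,\dots,s_{i-1}$ (each $a_j$ with $j<i$ being a function of $s_1,\dots,s_j$), it is independent of $s_i$, so Lemma~\ref{lemma:ModuloStayUniform} yields $\Mod(s_i+w_i)\sim\mathcal{U}(-0.5,0.5)$ and hence $\expct{\abs{\bm{e}_i^{\T}\bm{\tilde{L}}^{\inv}(\bm{s}+\bm{a})}^2} = (\tilde{l}_{ii}^{\inv})^2/12$. Summing over $i$ and using that the diagonal of the inverse of a triangular matrix is the reciprocal diagonal, i.e.\ $\tilde{l}_{ii}^{\inv} = d_i/l_{ii}$, gives $\expct{\norm{\bm{\bar{x}}}^2} = \tfrac{1}{12}\sum_{i=1}^{2K} d_i^2/l_{ii}^2$.

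Finally I would apply the arithmetic--geometric mean inequality to $\sum_i d_i^2/l_{ii}^2$, insert the result into \eqref{eq:RateHSNR}, and use $\prod_i l_{ii}^2 = \det(\bm{L}\bm{L}^{\T}) = \det(\bm{L}\bm{Q}\bm{Q}^{\T}\bm{L}^{\T}) = \det(\bm{H}\bm{H}^{\T})$ to arrive at exactly the bound $K\log_2\!\big(\tfrac{12 P_{\text{Tx}}}{2K\pi e}\sqrt[2K]{\det(\bm{H}\bm{H}^{\T})}\big)$ that appears in the proof of Proposition~\ref{prop:ScalingMatrixOptRate}, i.e.\ \eqref{eq:NearestPlaneOptRate}. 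Equality in AM--GM requires $d_i^2/l_{ii}^2$ to be constant in $i$, which is attained precisely for $\bm{D} = \diag(\bm{l})$ (signs being irrelevant), establishing both the optimal scaling \eqref{eq:THPOptScaling} and $R_{\textnormal{THP}}^{\textnormal{HSNR}} = R_{\textnormal{NP}}^{\textnormal{HSNR}}$. I do not expect a genuine obstacle here: the only point to watch is that the THP recursion runs from $i=1$ upward, which is exactly what makes $w_i$ a function of the earlier symbols only and matches the lower-triangular structure, after which the argument is the verbatim mirror image of the \ac{NP} proof.
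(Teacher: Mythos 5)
Your proposal is correct and follows exactly the route the paper intends: the corollary is justified in the paper simply by noting that \ac{THP} is the lower-triangular mirror image of the \ac{NP} recursion, and your write-up is precisely the proof of Proposition~\ref{prop:ScalingMatrixOptRate} transcribed to the LQ factorization, with the same use of Lemma~\ref{lemma:ModuloStayUniform}, the arithmetic--geometric mean inequality, and $\prod_i l_{ii}^2 = \det(\bm{H}\bm{H}^{\T})$. No gaps; the only cosmetic remark is that, as in the \ac{NP} case, any $\bm{D}$ making $d_i^2/l_{ii}^2$ constant (not only $\bm{D}=\diag(\bm{l})$ up to signs) attains equality, which does not affect the stated result.
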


\section{Performance of LLL-Aided Algorithms}
It has to be noted that the \ac{RO} and the \ac{NP} algorithms are typically combined with \ac{LR} (see \cite{Babai}) by using the \ac{LLL} algorithm \cite{LLL}.
This clearly improves the performance of the algorithms and has been demonstrated w.r.t the \ac{SER} and \ac{BER} in \cite{WubbenLLLBetter,LLLBetterTwo,LLLBetterThree}.
In this case, the solution of the \ac{RO} method is not the zero-vector anymore and, hence, the LLL-aided \ac{RO} improves upon the linear zero-forcing method.
However, while the \ac{RO} procedure is improved, we will see that the solution of the \ac{NP} (or \ac{THP}) cannot be improved by simply using \ac{LR}.
To this end, we need the following lemma. 
\begin{lemma}\label{lem:LLL_Ordered}
    Let $\bB$ be the basis of a lattice $\mathcal{L}$ with QR decomposition $\bB = \bQ \bR$ and 
    and let $\bm{\tilde{B}}  = \bB \bm{T} = \bm{\tilde{Q}} \bm{\tilde{R}}$ be the QR decomposition of the LLL-reduced basis $\bm{\tilde{B}}$, where $\bm{T}$ is a unimodular integer matrix.
    Then, if $\abs{r_{ii}} \le \abs{r_{jj}}, \, \text{for} \, i<j$, it follows that $\tilde{r}_{ii} = r_{ii}\, \forall i$ and that $\bm{T}$ is an upper triangular matrix with $t_{ii} = 1, \, \forall i$.
\end{lemma}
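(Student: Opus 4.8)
The plan is to argue directly from the mechanics of the LLL algorithm, exploiting the fact that the only operations which can alter the diagonal of the $\bR$-factor are column swaps, whereas the size-reduction steps leave it untouched. Concretely, I would recall that LLL (with factor $\delta\in(1/4,1)$, e.g.\ $\delta=3/4$) processes the columns $\bm{b}_2,\dots,\bm{b}_n$ in turn and, at the current index $k$, (i) \emph{size-reduces} $\bm{b}_k$ by subtracting integer multiples of the earlier columns $\bm{b}_1,\dots,\bm{b}_{k-1}$, and (ii) tests the Lov\'asz condition $\delta\,|r_{k-1,k-1}|^2 \le |r_{k-1,k}|^2 + |r_{kk}|^2$ for the pair $(k-1,k)$, performing a swap of $\bm{b}_{k-1}$ and $\bm{b}_k$ and decrementing $k$ precisely when this test fails, and advancing to $k+1$ otherwise. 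The key elementary observation is that a size-reduction step adds to $\bm{b}_k$ only a vector lying in $\mathrm{span}(\bm{b}_1,\dots,\bm{b}_{k-1})$, hence it changes neither the Gram--Schmidt vector $\bm{b}_k^{\ast}$ nor any $|r_{ii}|$ (and in fact leaves $\bm{Q}$ and all of $r_{ii}$ unchanged), so only swaps can modify the diagonal of $\bR$.

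The core claim to establish is then that, under the hypothesis $|r_{11}|\le|r_{22}|\le\dots\le|r_{nn}|$, no swap is ever triggered. I would prove this by induction over the execution of the algorithm: as long as no swap has occurred so far, every diagonal entry still equals the original $r_{ii}$, so at any pair $(k-1,k)$ reached by the algorithm the sorting hypothesis gives $|r_{k-1,k-1}|\le|r_{kk}|$, whence
\[
\delta\,|r_{k-1,k-1}|^2 \;\le\; |r_{k-1,k-1}|^2 \;\le\; |r_{kk}|^2 \;\le\; |r_{k-1,k}|^2 + |r_{kk}|^2 ,
\]
so the Lov\'asz condition holds and $k$ is incremented without a swap. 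Consequently $k$ increases monotonically from $2$ to $n$, and the algorithm performs size reductions only.

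Finally I would translate "size reductions only" into the stated structure. Each elementary step subtracts an integer multiple of an earlier column from a later one, i.e.\ right-multiplies the current basis by a unit upper-triangular elementary integer matrix; composing these shows $\bm{\tilde B} = \bB\bm{T}$ with $\bm{T}$ integer, upper triangular, and $t_{ii}=1$ for all $i$ (so $\det\bm{T}=1$, consistent with $\bm{T}$ being unimodular). Since no Gram--Schmidt vector was ever altered, $\tilde r_{ii} = r_{ii}$ for all $i$, which is exactly the assertion of the lemma; the corresponding THP version follows by transposing the triangular structure, as the paper notes.

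I expect the main obstacle to be one of care rather than depth: pinning down the LLL specification precisely enough (ordering of the size-reduction versus the Lov\'asz test, whether size reduction at step $k$ is "lazy" against $\bm{b}_{k-1}$ only or against all of $\bm{b}_1,\dots,\bm{b}_{k-1}$, and the precise value/convention for $\delta$) so that the no-swap induction covers every common variant, and making fully explicit the invariance of the $\bR$-diagonal under size reduction, which is what makes the argument go through. The displayed inequality chain only uses $\delta\le 1$, so it is robust to these conventions.
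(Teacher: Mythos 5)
Your proposal is correct and follows essentially the same route as the paper's proof: show that under the ordering hypothesis the Lov\'asz test is always satisfied, so the algorithm performs only size-reduction steps, which preserve the diagonal of $\bR$ and compose into a unit upper-triangular integer $\bm{T}$. Your version is in fact marginally more careful, since the inequality chain with an explicit $\delta\le 1$ covers general LLL parameter conventions, whereas the paper works with the $\delta=1$ form of the swap condition from its pseudocode reference.
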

\begin{proof}
    We follow the LLL algorithm of \cite{LLL} with the code given in \cite{LLLPseudo} where $k$ is iterated from $k=2$ until $k=n$.
    The algorithm is initialized with $k=2$, $\bm{T} = \eye$, $\bm{\tilde{Q}} = \bQ$, and $\bm{\tilde{R}} = \bR$.
    According to the LLL-algorithm, the $k$-th columns of $\bm{\tilde{R}}$ and $\bm{T}$ are updated as
    \begin{align}\label{eq:LLLFirstStep}
        \bm{\tilde{r}}_{1:l,k} = \bm{\tilde{r}}_{1:l,k} - \intround{ \frac{\tilde{r}_{lk}}{\tilde{r}_{ll}} }\bm{\tilde{r}}_{1:l,l} \quad \text{and} \quad 
        \bm{t}_k = \bm{t}_k - \intround{ \frac{\tilde{r}_{lk}}{\tilde{r}_{ll}} } \bm{t}_l
    \end{align}
    where the variable $l$ is decreased from $k-1$ to $1$ (see \cite{LLLPseudo}).
    Combining \eqref{eq:LLLFirstStep} with the fact that $l\le k-1$, we can see that only the first $k-1$ entries of $\bm{\tilde{r}}_k$ and $\bm{t}_k$ are modified.
    Hence, the upper triangular structure of $\bm{\tilde{R}}$ and $\bm{T}$ is preserved with this step.
    Additionally,
        \begin{equation}\label{eq:LemLLLOrderdCheckCond}
        \tilde{r}^2_{k-1,k-1} \le   \tilde{r}^2_{kk} +  \tilde{r}^2_{k-1,k}
    \end{equation}
    holds as $\tilde{r}^2_{k-1,k-1} \le  \tilde{r}^2_{kk} $ by assumption.
    Therefore, the current iteration is completed \cite{LLLPseudo} and $k$ is incremented to $k+1$.
    In addition to the upper triangular structure, the diagonal of $\bm{T}$ and $\bm{\tilde{R}}$ was preserved.
    These conservations hold for all iterations and, hence, the final matrices $\bm{\tilde{R}}$ and $\bm{T}$ are upper triangular with $t_{ii} = 1$, $\tilde{r}_{ii} = r_{ii},\, \forall i$.
\end{proof}
\begin{remark}
    Please note that if the scaling in \eqref{eq:NearestPlaneOptScaling} is used, the matrix $\bm{R} \diag^{\inv}(\bm{r})$ has only ones on the diagonal and, hence, the condition in Lemma \ref{lem:LLL_Ordered} is fulfilled.
\end{remark}
{
   Furthermore, in order to proof optimality later on, we need, additionally, the following lemma.
   \begin{lemma}\label{lem:ConstantLRMatrix}
        Given a point $\bm{\bar{d}}$ such that $\bm{Q}\bm{R}\diag(\bm{\bar{d}})$ with $\abs{r_{ii}\bar{d}_i} \le \abs{r_{jj} \bar{d}_j}$ for $i<j$ ($\bm{\bar{d}}$ fulfills the condition of Lemma \ref{lem:LLL_Ordered}).
        Then, the unimodular integer matrix $\bm{T}$ of the \ac{LLL} algorithm is locally constant, i.e. 
        \begin{equation}
            \begin{aligned}
                           \exists \delta > 0:\quad &\bm{T}\left(\bm{\bar{d}} +  \bm{\Delta} \right) =  \bm{T}\left(\bm{\bar{d}} \right)\\
            & \forall \bm{\Delta}, \,\norm{\bm{\Delta}}_2 \le \delta
            \end{aligned}
        \end{equation}
        \begin{proof}
            The triangular matrix is initialized as
            \begin{equation}\label{eq:LemmaLLLConstTriangInit}
                \bm{\tilde{R}}(\bm{\Delta}) = \bm{R}\diag(\bm{\bar{d}} +  \bm{\Delta} )
            \end{equation}
            where $\bm{\tilde{R}}(\bm{0})$ is the original matrix at the point $\bm{\bar{d}}$.
            We show now that throughout the algorithm $\bm{\tilde{R}}(\bm{\Delta})$ stays an upper triangular matrix with $\tilde{r}_{ii} = r_{ii}(d_i + \Delta_i)$ and additionally fulfills
            \begin{equation}\label{eq:LemmaLLLConstAsymR}
                \underset{\norm{\bm{\Delta}}_2 \rightarrow 0}{\lim}\bm{\tilde{R}}(\bm{\Delta}) = \bm{\tilde{R}}(\bm{0}).
            \end{equation}
            For the initial matrix in \eqref{eq:LemmaLLLConstTriangInit}, \eqref{eq:LemmaLLLConstAsymR} is fulfilled.
            Considering \eqref{eq:LLLFirstStep} of the LLL algorithm in Lemma \ref{lem:LLL_Ordered}, we can always find a small enough $\norm{\bm{\Delta}}_2$ such that
            \begin{equation}\label{eq:LemmaLLLConstSameIntMatrix}
                \intround{ \frac{\tilde{r}_{lk}(\bm{\Delta})}{\tilde{r}_{ll}(\bm{\Delta})} } = \intround{ \frac{\tilde{r}_{lk}(\bm{0})}{\tilde{r}_{ll}(\bm{0})} }
            \end{equation}
            due to \eqref{eq:LemmaLLLConstAsymR} and the rounding operation.
            With \eqref{eq:LemmaLLLConstSameIntMatrix}, the matrix is modified as 
            \begin{equation}
                \bm{\tilde{r}}_{1:l,k}(\bm{\Delta}) = \bm{\tilde{r}}_{1:l,k}(\bm{\Delta}) - \intround{ \frac{\tilde{r}_{lk}(\bm{0})}{\tilde{r}_{ll}(\bm{0})} }\bm{\tilde{r}}_{1:l,l}(\bm{\Delta})
            \end{equation}
            and after this modification the new matrix $\bm{\tilde{R}}(\bm{\Delta})$ still fulfills \eqref{eq:LemmaLLLConstAsymR}.
            Additionally, it follows from \eqref{eq:LemmaLLLConstSameIntMatrix} that $\bm{T}$ is modified as
            \begin{equation}
                \bm{t}_k = \bm{t}_k - \intround{ \frac{\tilde{r}_{lk}(\bm{0})}{\tilde{r}_{ll}(\bm{0})} } \bm{t}_l
            \end{equation}
            and, therefore, the integer matrix $\bm{T}$ is independent of the perturbation $\bm{\Delta}$.
            Finally, we have to check condtion \eqref{eq:LemLLLOrderdCheckCond}.
            Because of \eqref{eq:LemmaLLLConstAsymR}, it holds that
        \begin{equation}
            \underset{\norm{\bm{\Delta}}_2 \rightarrow 0}{\lim} \left(\tilde{r}^2_{k-1,k-1}(\bm{\Delta}) -  \tilde{r}^2_{kk}(\bm{\Delta}) -  \tilde{r}^2_{k-1,k}(\bm{\Delta})\right) \le -  \tilde{r}^2_{k-1,k}(\bm{0}) 
        \end{equation}
        and we can always find a small enough $\norm{\bm{\Delta}}_2$ such that \eqref{eq:LemLLLOrderdCheckCond} is fulfilled and no column permutations occur.
        We have now completed a whole iteration of the LLL algorithm.
        Therefore, throughout the algorithm, the upper triangular structure of  $\bm{\tilde{R}}(\bm{\Delta})$ is preserved with $\tilde{r}_{ii} = r_{ii}(d_i + \Delta_i)$.
        Additionally, the property \eqref{eq:LemmaLLLConstAsymR} is fulfilled.
        Most importantly, the matrix $\bm{T}$ stays independent of the perturbation $\bm{\Delta}$ throughout the whole algorithm.
        This directly implies the initial claim.
        \end{proof}
   \end{lemma} 
}
\subsection{LLL-aided Nearest Plane}
We focus on the \ac{NP} algorithm where, intuitively, \ac{LR} should give an improvement to the conventional algorithm.
However, when using the \ac{NP} method with the optimized scaling \eqref{eq:NearestPlaneOptScaling}, we can show with Lemma \ref{lem:LLL_Ordered} that combining it with LLL reduction does not lead to a higher rate.
This is summarized in the following Theorem. 
\begin{theorem}\label{theo:LLLNPSameTHP}
    LLL-aided \ac{NP} leads to exactly the same solution as conventional \ac{NP} when considering the optimized scaling in \eqref{eq:NearestPlaneOptScaling}, i.e., 
    \begin{equation}
        \bm{a}^{\textnormal{LLL-NP}}=\bm{a}^{\textnormal{NP}}, \quad \text{and}  \quad R_{\textnormal{NP-LLL}}^{\textnormal{HSNR}} =    R_{\textnormal{NP}}^{\textnormal{HSNR}}.
    \end{equation}
  \end{theorem}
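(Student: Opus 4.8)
The plan is to move the \ac{NP} subproblem into the \ac{LLL}-reduced coordinate system and show that it rounds into \emph{exactly the same box} as conventional \ac{NP}, so that both algorithms must return the same integer vector and hence the same rate. First I would pin down the structure of the reduced basis. Put $\bm{B} := \bm{H}^+\bm{D}$ with $\bm{D} = \diag^{\inv}(\bm{r})$ and write its QR decomposition as $\bm{B} = \bm{Q}\bm{\tilde R}$, so that $\bm{\tilde R} = \bm{R}\diag^{\inv}(\bm{r})$ has unit diagonal (the Remark after Lemma~\ref{lem:LLL_Ordered}). In particular $\abs{\tilde r_{ii}} \le \abs{\tilde r_{jj}}$ for $i<j$, so Lemma~\ref{lem:LLL_Ordered} applies: the \ac{LLL} transformation $\bm{T}$ is a unimodular \emph{upper-triangular} matrix with $t_{ii}=1$, and the reduced basis $\bm{B}\bm{T}$ has triangular factor with the same unit diagonal. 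Since $\bm{B}\bm{T} = \bm{Q}(\bm{\tilde R}\bm{T})$ and $\bm{\tilde R}\bm{T}$ is upper triangular with positive diagonal, uniqueness of the QR decomposition identifies $\bm{\hat R} := \bm{\tilde R}\bm{T}$ as precisely the triangular factor on which the \ac{NP} algorithm operates after reduction.

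Next I would transform the objective. Writing $\bm{a} = \bm{T}\bm{\hat a}$ with $\bm{\hat a}\in\mathbb{Z}^{2K}$ (a bijection on $\mathbb{Z}^{2K}$ since $\bm{T}$ is unimodular), we have $\bm{B}(\bm{s}+\bm{a}) = \bm{B}\bm{T}(\bm{T}^{\inv}\bm{s}+\bm{\hat a})$, so \ac{LLL}-aided \ac{NP} is conventional \ac{NP} on the basis $\bm{B}\bm{T}$ with shifted symbols $\bm{\hat s} := \bm{T}^{\inv}\bm{s}$, followed by $\bm{a}^{\textnormal{LLL-NP}} = \bm{T}\bm{\hat a}$. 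By \eqref{eq:SolutionNP} and $\hat r_{ii}=1$, the output $\bm{\hat a}$ is exactly the integer vector making every entry of $\bm{\hat R}(\bm{\hat s}+\bm{\hat a})$ lie in the half-open, length-one interval around $0$ (coordinates processed from $i=2K$ down to $i=1$). The key identity is then $\bm{\hat R}(\bm{\hat s}+\bm{\hat a}) = \bm{\tilde R}\bm{T}(\bm{T}^{\inv}\bm{s}+\bm{\hat a}) = \bm{\tilde R}(\bm{s}+\bm{T}\bm{\hat a}) = \bm{\tilde R}(\bm{s}+\bm{a}^{\textnormal{LLL-NP}})$, i.e. the reduced-basis run forces $\bm{\tilde R}(\bm{s}+\bm{a}^{\textnormal{LLL-NP}})$ into the very same box that conventional \ac{NP} on $\bm{B}$ targets [cf. \eqref{eq:symbolenergyNP} with unit diagonal].

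Then it remains to invoke a uniqueness argument. Since $\bm{\tilde R}$ is unitriangular, hence invertible, and the box is a Cartesian product of half-open unit intervals, back-substitution starting from coordinate $2K$ shows there is a \emph{unique} $\bm{a}'\in\mathbb{Z}^{2K}$ with $\bm{\tilde R}(\bm{s}+\bm{a}')$ in that box; moreover this back-substitution is literally the recursion \eqref{eq:SolutionNP} defining $\bm{a}^{\textnormal{NP}}$. Therefore $\bm{a}^{\textnormal{LLL-NP}} = \bm{a}^{\textnormal{NP}}$. As the scaling $\bm{D}$ is the same in both cases and the high-\ac{SNR} rate \eqref{eq:RateHSNR} enters only through $\bm{D}$ and through $\expct{\norm{\bm{\bar x}}^2} = \expct{\norm{\bm{H}^+\bm{D}(\bm{s}+\bm{a})}^2}$ [see \eqref{eq:symbol_energy}], it follows at once that $R_{\textnormal{NP-LLL}}^{\textnormal{HSNR}} = R_{\textnormal{NP}}^{\textnormal{HSNR}}$, and the corollary for \ac{THP} follows by the lower-triangular analogue.

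I expect the main obstacle to be bookkeeping rather than estimation: precisely defining what ``\ac{LLL}-aided \ac{NP}'' does (the change of variables $\bm{a}=\bm{T}\bm{\hat a}$, the transformed offset $\bm{T}^{\inv}\bm{s}$) and, crucially, the identification $\bm{\hat R}=\bm{\tilde R}\bm{T}$, which hinges on Lemma~\ref{lem:LLL_Ordered} keeping $\bm{T}$ upper triangular --- without that, $\bm{\tilde R}\bm{T}$ is not even triangular and the clean box identity of the second step breaks. One should also fix the tie-breaking convention of $\intround{\cdot}$ consistently across both runs so that ``unique integer in the half-open box'' is unambiguous; this is automatic, since the same rounding operator appears in both.
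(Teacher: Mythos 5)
Your proof is correct, but it takes a genuinely different route from the paper's. The paper argues by explicitly matching the two recursions: it writes out the block-triangular structure of $\bm{T}$, $\bm{T}^{\inv}$, and $\bm{R}$, computes the quantity rounded at step $i$ of the LLL-aided run, and uses that $\bm{t}_{i}^{\T}\bm{T}_{i+1}^{\inv}\bm{a}_{>i}$ is an integer (so it passes through $\intround{\cdot}$) to show that, after mapping back via $\bm{a}=\bm{T}\bm{\tilde a}$, the update for $a_i$ coincides literally with \eqref{eq:SolutionNP}; backward induction over $i$ then gives $\bm{a}^{\textnormal{LLL-NP}}=\bm{a}^{\textnormal{NP}}$. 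You instead characterize the NP output on a unitriangular factor as the unique integer vector placing the residual $\bm{\tilde R}(\bm{s}+\bm{a})$ in the half-open unit box, establish via uniqueness of the QR decomposition that the reduced run operates on $\bm{\hat R}=\bm{\tilde R}\bm{T}$ (a point the paper leaves implicit), and conclude from the one-line identity $\bm{\hat R}(\bm{\hat s}+\bm{\hat a})=\bm{\tilde R}(\bm{s}+\bm{T}\bm{\hat a})$ together with uniqueness of the box representative. Both arguments hinge on Lemma \ref{lem:LLL_Ordered} (unitriangularity of $\bm{T}$) and on the unit diagonal produced by the scaling \eqref{eq:NearestPlaneOptScaling}, and both share the same measure-zero caveat of a shift-equivariant tie-breaking for $\intround{\cdot}$, which you correctly flag. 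What each buys: your fundamental-box argument is shorter and more conceptual for the unconstrained NP case, and it makes the QR bookkeeping explicit; the paper's per-step formula [cf. \eqref{eq:NPScaledaitildereformulated}] is what gets reused in Theorem \ref{theo:GeneralGreedy} for constrained candidate sets (K-Best, sphere decoding), where the ``unique point in a box'' characterization no longer applies directly. The rate conclusion $R_{\textnormal{NP-LLL}}^{\textnormal{HSNR}}=R_{\textnormal{NP}}^{\textnormal{HSNR}}$ follows identically in both.
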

\begin{proof}
    We apply the LLL reduction with the unimodular integer matrix $\bm{T}$ (which in this case is upper-unitriangular, see Lemma \ref{lem:LLL_Ordered})
    \begin{equation}\label{eq:transmitwithpermutandLLL}
        \begin{aligned}
            \bm{x} &= \bm{H}^+\bm{D}^{{\star}}  \bm{T} (\bm{T}^{\inv}\bm{s} + \bm{T}^{\inv} \bm{a})
             =\bm{\tilde{Q}} \bm{\tilde{R}}(\bm{\tilde{s}} + \bm{\tilde{a}})
        \end{aligned}
    \end{equation}
    where $\bm{\tilde{s}} = \bm{T}^{\inv}\bm{s}$, $\bm{\tilde{a}} = \bm{T}^{\inv}\bm{a}$, and $\bm{\tilde{Q}} \bm{\tilde{R}}$ being the QR decomposition of $\bm{H}^+ \bm{D}^{{\star}}  \bm{T}$.
    According to \eqref{eq:SolutionNP} of the \ac{NP} algorithm, the integers $\tilde{a}_i$ are successively chosen as 
        $\tilde{a}_i = -\intround{\tilde{s}_i +\summe{j>i}{}{\tilde{r}_{ij}} (\tilde{s}_j + \tilde{a}_j)}$
    We assume now that all $\tilde{a}_j$ with $j>i$ have been obtained, and now $\tilde{a}_i$ is optimized.
    To this end, the matrices $\bm{T}$ and $\bm{R}$ are written as
    $\bm{T} = \left[\begin{smallmatrix}
        \bm{T}_{<i,\textnormal{a}} & \bm{T}_{<i,\textnormal{b}}\\
        \bm{0}&\bm{T}_i
    \end{smallmatrix}\right]$ and $\bm{R} = \left[\begin{smallmatrix}
        \bm{R}_{<i,\textnormal{a}} & \bm{R}_{<i,\textnormal{b}}\\
        \bm{0}&\bm{R}_i
    \end{smallmatrix}\right]$ where $\bm{T}_i$ and $\bm{R}_i$ are the upper-triangular blocks starting from the $i$-th row.
    Due to the triangular structure, we have $\bm{T}^{\inv} = \left[\begin{smallmatrix}
        \bm{T}_{<i,\textnormal{a}}^{\inv} & -\bm{T}_{<i,\textnormal{a}}^{\inv} \bm{T}_{<i,\textnormal{b}}\bm{T}_i^{\inv}\\
        \bm{0}&\bm{T}_i^{\inv}
    \end{smallmatrix}\right]$ and $\bm{R}\bm{T} = \left[\begin{smallmatrix}
        \bm{R}_{<i,\textnormal{a}}\bm{T}_{<i,\textnormal{a}} & \bm{R}_{<i,\textnormal{a}} \bm{T}_{<i,\textnormal{b}} +  \bm{R}_{<i,\textnormal{b}} \bm{T}_{i}\\
        \bm{0}&\bm{R}_i \bm{T}_i
    \end{smallmatrix}\right]$
    with the submatrices
    \begin{equation}\label{eq:TriangularTStructure}
        \bm{T}_i = \begin{bmatrix}
            1&\bm{t}_{i}^{\T}\\
            \bm{0}& \bm{T}_{i+1}
        \end{bmatrix},\quad
        \bm{T}_i^{\inv} = \begin{bmatrix}
            1&- \bm{t}_{i}^{\T}\bm{T}_{i+1}^{\inv}\\
            \bm{0}& \bm{T}^{\inv}_{i+1}
        \end{bmatrix},
    \end{equation}
    \begin{equation}\label{eq:TriangularRTStructure}
        \bm{R}_i = \begin{bmatrix}
            1&\bm{r}_{i}^{\T}\\
            \bm{0}& \bm{R}_{i+1}
        \end{bmatrix},\quad\text{and} \quad
       \bm{R}_i \bm{T}_i = \begin{bmatrix}
            1&\bm{t}_{i}^{\T}+\bm{r}_{i}^{\T} \bm{T}_{i+1}\\
            \bm{0}&\bm{R}_{i+1} \bm{T}_{i+1}
        \end{bmatrix}.
    \end{equation}
    As $\bm{T},\bm{T}^{\inv}$, and $\bm{R}$ are upper triangular matrices, only the blocks $\bm{T}_{i}$ and $\bm{R}_{i}$ have to be considered for optimzing $\tilde{a}_i$ and we arrive at
\begin{equation}\label{eq:NPScaledaitildereformulated}
    \abs{\bm{e}_i^{\T}\bm{RT}(\bm{\tilde{s}} + \bm{\tilde{a}}) }^2 =  \abs{\tilde{a}_i +\tilde{s}_i +(\bm{t}_{i}^{\T}+\bm{r}_{i}^{\T} \bm{T}_{i+1})(\bm{\tilde{s}}_{>i} + \bm{\tilde{a}}_{>i}) }^2.
\end{equation}
Minimizing this expression w.r.t. $\tilde{a}_i$ results in
\begin{equation}
    \tilde{a}_i = -\intround{ \tilde{s}_i +(\bm{t}_{i}^{\T}+\bm{r}_{i}^{\T} \bm{T}_{i+1})(\bm{\tilde{s}}_{>i} + \bm{\tilde{a}}_{>i})}.
\end{equation}
With $\bm{a} = \bm{T} \bm{\tilde{a}}$ and exploiting the matrix structures in \eqref{eq:TriangularTStructure} and \eqref{eq:TriangularRTStructure}, we have
\begin{equation}
    \begin{aligned}
    a_i &= \bm{t}_{i}^{\T}\bm{T}_{i+1}^{\inv}\bm{a}_{>i} -\intround{ \tilde{s}_i +(\bm{t}_{i}^{\T}+\bm{r}_{i}^{\T} \bm{T}_{i+1})(\bm{\tilde{s}}_{>i} + \bm{\tilde{a}}_{>i})}\\
    &= \bm{t}_{i}^{\T}\bm{T}_{i+1}^{\inv}\bm{a}_{>i} -\intround{ \tilde{s}_i +(\bm{t}_{i}^{\T}\bm{T}_{i+1}^{\inv}+\bm{r}_{i}^{\T} )(\bm{{s}}_{>i} + \bm{{a}}_{>i})}.
    \end{aligned}
\end{equation}
Because $\bm{t}_{i}^{\T}\bm{T}_{i+1}^{\inv}\bm{a}_{>i}$ is also an integer, we arrive at
\begin{equation}
\begin{aligned}   
    a_i &= -\intround{ \tilde{s}_i +\bm{t}_{i}^{\T}\bm{T}_{i+1}^{\inv}\bm{{s}}_{>i}+\bm{r}_{i}^{\T} (\bm{{s}}_{>i} + \bm{{a}}_{>i})}\\ 
    &= -\intround{ {s}_i +\bm{r}_{i}^{\T} (\bm{{s}}_{>i} + \bm{{a}}_{>i})}\\     
\end{aligned}
\end{equation}
and, hence, the same equation as in the normal \ac{NP} algorithm [cf. \eqref{eq:SolutionNP}] arises.
It follows that the LLL-aided version of the \ac{NP} algorithm results in exactly the same solution as the conventional \ac{NP} algorithm and, therefore, also results in the same performance. 
\end{proof}

\begin{corollary}\label{col:LLLTHPSameTHP}
    Theorem \ref{theo:LLLNPSameTHP} can also be derived for \ac{THP} by combining the \ac{THP} method according to \eqref{eq:THPSolution} with \ac{LR}.
    As the only difference between THP and the NP algorithm is a lower instead of an upper triangular matrix, the same concept of proof can be applied.
    Hence, also LLL-aided \ac{THP} with the scaling in \eqref{eq:THPOptScaling} cannot outperform \ac{THP} or NP.
\end{corollary}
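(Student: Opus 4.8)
The plan is to transcribe the proof of Theorem~\ref{theo:LLLNPSameTHP}, swapping the roles of upper- and lower-triangular matrices, since the sole structural difference between \ac{NP} and \ac{THP} is exactly this swap. Recall that \ac{THP} minimizes $\expct{\norm{\bm{\tilde{L}}^{\inv}(\bm{s}+\bm{a})}^2}$ where $\bm{\tilde{L}}^{\inv} = \bm{L}^{\inv}\bm{D}$ with $\bm{H} = \bm{LQ}$ and $\bm{D} = \diag(\bm{l})$ from \eqref{eq:THPOptScaling}. Because the diagonal of $\bm{L}^{\inv}$ is $(1/l_{11},\dots,1/l_{2K,2K})$, the basis $\bm{\tilde{L}}^{\inv}$ is lower \emph{unitriangular}; this is, for the \ac{THP} ordering, the exact analogue of the Remark following Lemma~\ref{lem:LLL_Ordered}.

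First I would establish the lower-triangular counterpart of Lemma~\ref{lem:LLL_Ordered}. After conjugating by the reversal permutation $\bm{J}$ (the anti-identity, which is unimodular) — which maps the lower-unitriangular $\bm{\tilde{L}}^{\inv}$ to an upper-unitriangular basis of an equivalent lattice — Lemma~\ref{lem:LLL_Ordered} applies verbatim: with every diagonal entry equal to $1$, the ordering hypothesis $\abs{r_{ii}}\le\abs{r_{jj}}$ for $i<j$ holds trivially, the Lov\'asz condition is never violated, no column swaps occur, and the unimodular matrix $\bm{T}$ returned by \ac{LLL} is unitriangular with unit diagonal and leaves the diagonal of the reduced basis unchanged. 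Undoing the conjugation, the \ac{LLL} transform of $\bm{\tilde{L}}^{\inv}$ is a \emph{lower}-unitriangular unimodular $\bm{T}$.

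Next I would replay the block manipulation from the proof of Theorem~\ref{theo:LLLNPSameTHP}, with ``$>i$'' replaced by ``$<i$'' throughout, since the \ac{THP} recursion \eqref{eq:THPSolution} runs $i=1,\dots,2K$ from the top. Partitioning $\bm{T}$, $\bm{T}^{\inv}$, and $\bm{L}^{\inv}\bm{D}\bm{T}$ into leading principal blocks, the three ingredients of the original argument carry over unchanged: (i) $\bm{T}$ lower-unitriangular implies $\bm{T}^{\inv}$ lower-unitriangular; (ii) any quantity built from $\bm{t}_i$, a leading principal block of $\bm{T}^{\inv}$, and $\bm{a}_{<i}$ is an integer and can be pulled out of the rounding $\intround{\cdot}$; and (iii) the leftover term of the same shape but involving $\bm{s}_{<i}$ recombines with $\tilde{s}_i = \bm{e}_i^{\T}\bm{T}^{\inv}\bm{s}$ to reconstitute $s_i$. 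Hence the recursion on the \ac{LLL}-reduced basis collapses exactly to \eqref{eq:THPSolution}, which gives $\bm{a}^{\textnormal{LLL-THP}} = \bm{a}^{\textnormal{THP}}$ and therefore $R_{\textnormal{THP-LLL}}^{\textnormal{HSNR}} = R_{\textnormal{THP}}^{\textnormal{HSNR}} = R_{\textnormal{NP}}^{\textnormal{HSNR}}$ by the corollary to Proposition~\ref{prop:ScalingMatrixOptRate}.

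The main obstacle is orientation bookkeeping rather than algebra: the \ac{LLL} procedure of \cite{LLL,LLLPseudo} is written for a QR (upper-triangular) factorization processed column by column from the left, so one must argue carefully that it behaves as claimed on a lower-triangular basis processed in \ac{THP} order. The reversal-permutation trick above is the cleanest way to discharge this; alternatively, one runs the mirror image of the \ac{LLL} recursion and checks, exactly as in the proof of Lemma~\ref{lem:LLL_Ordered}, that size reduction and the Lov\'asz test preserve lower-unitriangularity. Everything beyond that point is a line-by-line copy of the proof of Theorem~\ref{theo:LLLNPSameTHP}.
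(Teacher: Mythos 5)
Your proposal is correct and follows essentially the same route as the paper, which proves the corollary simply by invoking the lower-/upper-triangular symmetry with Theorem \ref{theo:LLLNPSameTHP}: since $\bm{L}^{\inv}\bm{D}$ with $\bm{D}=\diag(\bm{l})$ is lower unitriangular, the mirrored Lemma \ref{lem:LLL_Ordered} and the block argument go through with ``$>i$'' replaced by ``$<i$''. Your reversal-permutation conjugation is a welcome extra precision (the paper leaves the orientation of the \ac{LLL} recursion on a lower-triangular basis implicit), but it does not change the substance of the argument.
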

\begin{remark}\label{rem:OnlyForScaled}
    Note that Theorem \ref{theo:LLLNPSameTHP} only holds for the optimized $\bm{D}^{{\star}}$ in \eqref{eq:NearestPlaneOptScaling}.
    For a different $\bm{D}$, LR can make a difference.
    In Section \ref{sec:Numerical}, we show that for $\bm{D} = \eye$, LR significantly improves upon the conventional \ac{NP} algorithm.
    However, this is no solution as the rate is clearly degraded in comparison to the optimized matrix in \eqref{eq:NearestPlaneOptScaling}.
\end{remark}

{
    This is further supported by the next proposition where we show that the optimized scaling in \eqref{eq:NearestPlaneOptScaling} is not only optimal for conventional \ac{NP} but also locally optimal for \ac{LLL}-aided \ac{NP}.
    \begin{proposition}\label{prop:LocalOptScalingLLLNP}
        The  optimized scaling in \eqref{eq:NearestPlaneOptScaling} is locally optimal for the \ac{LLL}-aided \ac{NP} algorithm.
    \end{proposition}
    \begin{proof}
    We are now proofing the optimality of the scaling \eqref{eq:NearestPlaneOptScaling} 
    \begin{equation}\label{eq:LemmaLocalOptNPScalingOptDefinition}
        \bm{d}^{\star} = [{1}/{r_{11}},{1}/{r_{22}},\dots,{1}/{r_{2K,2K}}]^{\T}.
    \end{equation}
    For the \ac{LLL}-aided \ac{NP} algorithm we have 
    \begin{equation}
        \begin{aligned}
                    \expct{\norm{\bm{\bar{x}}}^2} &= \summe{j=1}{2K} \tilde{r}^2_{jj} \mathbb{E}\bigg[\bigg|\textnormal{Mod}\bigg( \tilde{s}_j + \summe{m>j}{} \frac{\tilde{r}_{jm}}{\tilde{r}_{jj}}(\tilde{s}_m + \tilde{a}_m) \bigg)\bigg|^2\bigg].\\
        \end{aligned}
    \end{equation}
       Due to Lemma \ref{lem:ConstantLRMatrix} we know that $\bm{T}$ stays constant within a ball $\bm{d} \in B_{\delta}(\bm{d}^{\star})$ and, therefore, 
    \begin{equation}
        \begin{aligned}
                  & {\textnormal{Mod}\bigg( \tilde{s}_j + \summe{m>j}{} \frac{\tilde{r}_{jm}}{\tilde{r}_{jj}}(\tilde{s}_m + \tilde{a}_m) \bigg)} \\
                   &=  {\textnormal{Mod}\bigg( s_j +  \summe{k>j}{}t_{jk}^{\inv} + \summe{m>j}{} \frac{\tilde{r}_{jm}}{\tilde{r}_{jj}} \summe{n \ge m}{} (t_{mn}^{\inv}({s}_m + {a}_m) \bigg)}\\
                    &=  {\textnormal{Mod}( s_j +  w_j )}
        \end{aligned}
    \end{equation}
     where $t_{op}^{\inv}$ are the entries of the unitriangular $\bm{T}^{\inv}$ which are constant within $B_{\delta}(\bm{d}^{\star})$.
     As $w_j$ is independent of $s_j$, we have $\textnormal{Mod}( s_j +  w_j ) \sim \mathcal{U}(-0.5,0.5)$ due to Lemma \ref{lemma:ModuloStayUniform} and arrive at 
     \begin{equation}
            \expct{\norm{\bm{\bar{x}}}^2} = \frac{1}{12}\summe{j=1}{2K} \tilde{r}^2_{jj} = \frac{1}{12}\summe{j=1}{2K} {r}^2_{jj} d_j^2 \quad \forall \bm{d} \in B_{\delta}(\bm{d}^{\star})
     \end{equation}
     where we exploited that $\bm{\tilde{R}} = \bm{R}\bm{D}\bm{T}$ for an upper triangular $\bm{T}$. This is the exact same power as in case of the conventional \ac{NP} algorithm.
     Following Proposition \ref{prop:ScalingMatrixOptRate}, the optimal scaling within $B_{\delta}(\bm{d}^{\star})$ is given by \eqref{eq:LemmaLocalOptNPScalingOptDefinition} and, hence, the scaling \eqref{eq:NearestPlaneOptScaling} is locally optimal.
    \end{proof}
    \begin{remark}\label{rem:WhiteQuantNoise}
    It has to be noted that the optimal scaling \eqref{eq:NearestPlaneOptScaling} results in the noise $\expct{\bm{\tilde{R}}(\bm{\tilde{s}}+ \bm{\tilde{a}})(\bm{\tilde{s}}+ \bm{\tilde{a}})^{\T} \bm{\tilde{R}}^{\T}} = \frac{1}{12}\eye$  to be white for the (\ac{LLL})-aided \ac{NP} algorithm.
    In \cite{LatticeQuantizationNoiseWhite}, it has been shown that this is a necessary condition for the optimal quantizer.
\end{remark}
}

\subsection{LLL-aided Rounding Off}
Combining the \ac{RO} algorithm with \ac{LR} does not lead to the zero vector anymore.
While it improves on the conventional algorithm, also the LLL-aided \ac{RO} algorithm cannot exceed the rate of the conventional \ac{NP} (or THP) when considering the scaling in \eqref{eq:NearestPlaneOptScaling} [or \eqref{eq:THPOptScaling}] which is in accordance with the bounds in \cite{Babai}.
\begin{proposition}\label{prop:ROWorseNP}
    The LLL-aided \ac{RO} algorithm cannot exceed the performance of the conventional NP (or THP) when considering the optimized scaling in \eqref{eq:NearestPlaneOptScaling} [or \eqref{eq:THPOptScaling}].
\end{proposition}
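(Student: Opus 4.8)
The plan is to compute, in closed form, the expected symbol energy $\expct{\norm{\bm{\bar{x}}}^2}$ attained by the LLL-aided RO algorithm with the scaling $\bm{D}=\diag^{\inv}(\bm{r})$, and compare it with that of NP (or THP). Since the proposition fixes the \emph{same} $\bm{D}$ for both schemes, the factor $\sqrt[2K]{\prod_{k}d_k^2}$ in the rate \eqref{eq:RateHSNR} is identical, so $R^{\textnormal{HSNR}}$ is a strictly decreasing function of $\expct{\norm{\bm{\bar{x}}}^2}$; it therefore suffices to show that the symbol energy of LLL-aided RO is at least that of NP, which by the proof of Proposition~\ref{prop:ScalingMatrixOptRate} equals $\tfrac{1}{12}\sum_{i=1}^{2K}r_{ii}^2 d_i^2=\tfrac{K}{6}$ for this $\bm{D}$.

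First I would make the LLL-reduced basis explicit. With $\bm{D}=\diag^{\inv}(\bm{r})$ the QR factor of $\bm{H}^+\bm{D}=\bm{Q}\bm{R}\bm{D}$ is $\bm{R}\bm{D}$, which is upper-unitriangular (cf.\ the Remark after Lemma~\ref{lem:LLL_Ordered}); hence Lemma~\ref{lem:LLL_Ordered} applies, the unimodular matrix $\bm{T}$ returned by LLL is upper-unitriangular (and so is $\bm{T}^{\inv}$), and the QR factor $\bm{\tilde{R}}$ of the reduced basis $\bm{\tilde{B}}=\bm{H}^+\bm{D}\bm{T}$ is again upper-unitriangular. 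Writing $\bm{\tilde{s}}=\bm{T}^{\inv}\bm{s}$, the RO step on the reduced basis returns the integer vector $\bm{\tilde{a}}$ for which each residual $u_i:=\tilde{s}_i+\tilde{a}_i=\Mod{\tilde{s}_i}$ lies in $[-0.5,0.5]$; with $\bm{a}=\bm{T}\bm{\tilde{a}}$ one gets $\bm{H}^+\bm{D}(\bm{s}+\bm{a})=\bm{\tilde{B}}(\bm{\tilde{s}}+\bm{\tilde{a}})=\bm{\tilde{Q}}\bm{\tilde{R}}\bm{u}$, and since $\bm{\tilde{Q}}$ has orthonormal columns the LLL-aided RO symbol energy equals $\expct{\norm{\bm{\tilde{R}}\bm{u}}^2}=\tr\!\big(\bm{\tilde{R}}^{\T}\bm{\tilde{R}}\,\expct{\bm{u}\bm{u}^{\T}}\big)$.

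The crucial step is then to prove $\expct{\bm{u}\bm{u}^{\T}}=\tfrac{1}{12}\eye$. Because $\bm{T}^{\inv}$ is integer and upper-unitriangular, $\tilde{s}_i=s_i+c_i$ with $c_i$ a function of $s_{i+1},\dots,s_{2K}$ only and hence independent of $s_i\sim\mathcal{U}(-0.5,0.5)$; by Lemma~\ref{lemma:ModuloStayUniform}, $u_i=\Mod{s_i+c_i}\sim\mathcal{U}(-0.5,0.5)$, so $\expct{u_i^2}=\tfrac{1}{12}$. For $i<j$, conditioning on $(s_{i+1},\dots,s_{2K})$ fixes both $u_j$ and $c_i$ while leaving $u_i$ conditionally $\mathcal{U}(-0.5,0.5)$, so $\expct{u_i\mid s_{i+1},\dots,s_{2K}}=0$ and, by the tower property, $\expct{u_i u_j}=\expct{u_j\,\expct{u_i\mid s_{i+1},\dots,s_{2K}}}=0$. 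Hence the LLL-aided RO symbol energy is $\tfrac{1}{12}\tr(\bm{\tilde{R}}^{\T}\bm{\tilde{R}})=\tfrac{1}{12}\big(2K+\sum_{i<j}\tilde{r}_{ij}^2\big)\ge\tfrac{2K}{12}=\tfrac{K}{6}$, where I used that $\bm{\tilde{R}}$ has unit diagonal. Combining this with the NP symbol energy $\tfrac{K}{6}$ and the monotonicity above yields $R_{\textnormal{LLL-RO}}^{\textnormal{HSNR}}\le R_{\textnormal{NP}}^{\textnormal{HSNR}}$, with equality iff $\bm{\tilde{R}}$ is diagonal; the THP case follows identically after replacing \eqref{eq:NearestPlaneOptScaling} by \eqref{eq:THPOptScaling} and the upper-triangular structure by a lower-triangular one.

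I expect the correlation identity $\expct{u_i u_j}=0$ to be the main obstacle: each $u_i$ is individually uniform by Lemma~\ref{lemma:ModuloStayUniform}, but the $u_i$ are in general dependent, and the cancellation of their pairwise correlations rests essentially on the triangular structure of $\bm{T}^{\inv}$ together with the conditioning argument. The remaining pieces — the unitriangular form of $\bm{\tilde{R}}$ via Lemma~\ref{lem:LLL_Ordered}, and the monotonicity of \eqref{eq:RateHSNR} in $\expct{\norm{\bm{\bar{x}}}^2}$ for fixed $\bm{D}$ — are routine, and the resulting bound is consistent with the approximation-factor bounds relating RO and NP in \cite{Babai}. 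One should only take mild care with the tie-breaking convention for $\intround{\cdot}$ and with the convention that $\Mod{\cdot}$ reduces its argument into $[-0.5,0.5]$.
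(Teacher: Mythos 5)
Your proof is correct and takes essentially the same route as the paper's: Lemma~\ref{lem:LLL_Ordered} gives the unitriangular $\bm{T}$, Lemma~\ref{lemma:ModuloStayUniform} plus the conditioning argument gives $\expct{\bm{u}\bm{u}^{\T}}=\tfrac{1}{12}\eye$ (the paper asserts pairwise independence of the modulo residuals, while you prove only the uncorrelatedness actually needed), and the symbol energy is the same trace quantity. The only cosmetic difference lies in the final step: the paper bounds the trace via the arithmetic--geometric mean inequality and $\det\bm{T}=\pm1$, recovering the NP rate through \eqref{eq:RateHSNR} with the $d_k$-factors cancelling, whereas you exploit the unit diagonal of $\bm{\tilde{R}}$ to bound the trace by $2K$ and compare symbol energies directly at the fixed optimized $\bm{D}$ --- both give the same bound.
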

\begin{proof}
    We start by using the transmit symbol as in \eqref{eq:transmitwithpermutandLLL} 
        \begin{equation}
        \begin{aligned}
            \bm{\bar{x}} &= \bm{H}^+ \bm{D}^{{\star}}   \bm{T}(\bm{\tilde{s}} + \bm{\tilde{a}}).
        \end{aligned}
    \end{equation}
    The RO procedure gives now the simple integer solution $\bm{\tilde{a}} = - \intround{\bm{\tilde{s}}}$,
    resulting in the symbol energy
    \begin{equation}
        \begin{aligned}
            \expct{\norm{\bm{\bar{x}}}^2} = \expct{\norm{\bm{H}^+\bm{D}^{{\star}}  \bm{T} \,\Mod{\tilde{\bm{s}}} }^2}. 
        \end{aligned}
    \end{equation}
    With $t_{ij}^{\inv}$ being the entries of $\bm{T}^{\inv}$, we can express $\Mod{\tilde{s}_i}$ as
    \begin{equation}
        \Mod{\tilde{s}_i } = \Mod{\bm{e}_i^{\T}\bm{T}^{\inv} \bm{s}} = \mathrm{Mod}\bigg(s_{i} + \summe{j> i}{} {t}^{\inv}_{ij} s_{j} \bigg)
    \end{equation}
    since $\bm{T}$ is unitriangular. Because $s_{i}$ is independent of $s_{j}$, we can conclude from Lemma \ref{lemma:ModuloStayUniform} that $\Mod{\tilde{s}_i}\sim\mathcal{U}(-0.5,0.5)$.
    Additionally, as $\tilde{s}_j$ only depends on $s_{m}, \,m \ge j$, and $s_{i}$ within $\tilde{s}_i$ is independent of ${s}_{j}$, $j>i$, it also follows that 
    \begin{equation}\label{eq:indepjgreati}
        \Mod{\tilde{s}_i }|\Mod{\tilde{s}_j } \sim \mathcal{U}(-0.5,0.5) \quad \text{for} \quad j >i.
    \end{equation}
    The reason is that $s_{i}$ within $\tilde{s}_i$ is independent of the condition, and with Lemma \ref{lemma:ModuloStayUniform}, we arrive at \eqref{eq:indepjgreati}.
    This means that $ \Mod{\tilde{s}_i }$ is independent of $ \Mod{\tilde{s}_j }$ for $ j> i$.
    As this holds $\forall j\neq i$,
    \begin{equation}
        \Mod{\tilde{s}_i } \, \text{is independent of}\,  \Mod{\tilde{s}_j } \quad \forall\, j\neq i.
    \end{equation}
    Therefore,
        $\expct{\bm{\tilde{s}}\bm{\tilde{s}}^{\T}} = \eye \frac{1}{12}$
    and the symbol energy is given by
    \begin{equation}
        \begin{aligned}
            \expct{\norm{\bm{H}^+\bm{D}^{{\star}}   \bm{T} \,\Mod{\tilde{\bm{s}}} }^2} =\tr\left(\bm{T}^{\T}\bm{D}^{{\star}}(\bm{H}\bm{H}^{\T})^{\inv} \bm{D}^{{\star}} \bm{T}\right)/12.
        \end{aligned}
    \end{equation}
    By using the arithmetic-geometric mean inequality, we obtain 
    \begin{equation}
        \begin{aligned}
            \expct{\norm{\bm{\bar{x}}}^2} &\ge 2K \sqrt[2K]{\det\left(\bm{T}^{\T} \bm{D}^{{\star}}(\bm{H}\bm{H}^{\T})^{\inv} \bm{D}^{{\star}}  \bm{T}\right)} \frac{1}{12}\\
            &= \frac{K}{6} \sqrt[2K]{\det\left((\bm{H}\bm{H}^{\T})^{\inv}   \right)}\sqrt[2K]{ \prod\nolimits_{i=1}^{2K} d_{ii}^2} .
        \end{aligned}
    \end{equation}
    Hence, we can upper bound the LLL-RO rate [see \eqref{eq:RateHSNR}] as
    \begin{equation}
        \begin{aligned}
            R_{\text{LLL-RO}}^{\text{HSNR}} &\le
             \log_2\det\left(\bar{p} \bH_{\mathbb{C}} \bH_{\mathbb{C}}^\Her \right) -   K\log_2\left(\frac{\pi e}{6}\right).\\
        \end{aligned}
    \end{equation}
    Note that the bound yields the same rate as for the conventional NP (or THP) algorithm.
    However, this upper bound is, in general, not achievable by the LLL-aided RO method.
\end{proof}{
\subsection{Dithering}
In \cite{DitheringRO}, a \ac{LR} method was analyzed where the transmit symbol is constructed as 
\begin{equation}
    \bm{\bar{x}} = \bm{H}^+\bm{D}\bm{T}\Mod{\bm{T}^{\inv}\bm{s} + \bm{v}}.
\end{equation}
In comparison to the \ac{LLL}-aided \ac{RO} algorithm, the dithering vector $\bm{v}$ is included which changes the system model as the receivers need to know $\bm{T}\bm{v}$.
The dithering vector is assumed to be independent and has entries distributed as $\mathcal{U}(-0.5,0.5)$
which results in the rows of $\Mod{\bm{T}^{\inv}\bm{s} + \bm{v}}$ being independent and distributed as $\mathcal{U}(-0.5,0.5)$ (see \cite{DitheringRO}).
Using the same $\bm{D}^{{\star}}$ and $\bm{T}$ as for Proposition \ref{prop:ROWorseNP}, this method has the exact same performance as the LLL-aided \ac{RO} algorithm.
In \cite{DitheringRO}, it was shown that this method is worse than \ac{THP}.
}

{
    This model drastically simplifies the analysis which allows us to show that there exists no integer matrix $\bm{T}$ with $\det(\bm{T})=1$ which improves on the simple \ac{NP} algorithm, rendering \ac{LLL} as unnecessary also for this model.
    We show this by using the transmit symbol
    \begin{equation}
        \bm{\bar{x}} = \bm{H}^+\bm{D}\bm{T}\left(\Mod{\bm{T}^{\inv}\bm{s} + \bm{v}} + \bm{\tilde{a}}\right)
    \end{equation}
    where $\bm{\tilde{a}}$ is obtained by the \ac{NP} algorithm. 
    Then,
    \begin{equation}
        \expct{\norm{\bm{\bar{x}}}^2} = \summe{i=1}{2K}\tilde{r}_{ii}^2 \frac{1}{12}
    \end{equation}
    due to the independence introduced by the dithering vector.
    Applying the arithmetic-geometric mean inequality, we have 
    \begin{equation}
        \begin{aligned}
            \expct{\norm{\bm{\bar{x}}}^2}%
            \ge\frac{K}{6} \sqrt[2K]{\det\left((\bm{H}\bm{H}^{\T})^{\inv}   \right)}\sqrt[2K]{ \prod\nolimits_{i=1}^{2K} d_{ii}^2} 
        \end{aligned}
    \end{equation}
    and arrive at the upper bound
    \begin{equation}
        \begin{aligned}
            R_{\text{Dither-LLL-NP}}^{\text{HSNR}} &\le
             \log_2\det\left(\bar{p} \bH_{\mathbb{C}} \bH_{\mathbb{C}}^\Her \right) -   K\log_2\left(\frac{\pi e}{6}\right)
        \end{aligned}
    \end{equation}
    which is the same rate as for the (\ac{LLL}-aided) \ac{NP} algorithm without dithering.\\
    While this approach simplifies the analysis considerably, it requires the assumption of the dithering model and does not allow a generalization.
    Hence, we won't further investigate it in this article.
}

{
\section{Special Lattice Structure}\label{sec:Generalization}
In the previous section, we have shown that the classical \ac{LR} techniques cannot outperform conventional \ac{THP} or the conventional \ac{NP} method.
This suggests that \ac{LR} can be omitted when considering these low-complexity algorithms.
In particular, the \ac{NP} algorithm leads to the exact same integer vector, regardless if LLL reduction is performed or not (Theorem~\ref{theo:LLLNPSameTHP}).
However, in the previous section, we used very specific algorithms like \ac{RO}, \ac{NP} for optimization and LLL for \ac{LR}.
In this section, we show that the results obtained in the last section can be considerably generalized.
This is possible due to a special lattice structure which will be analyzed in the following.
It is important to note that this structure is the only assumption that is made within this section.
This means that the following analysis can be applied to any lattice optimization where this structural property appears.
Specifically, we will show for this structure that \ac{LR} (beyond LLL) has no impact for a wide class of algorithms.
The special lattice structure, we are referring to, is defined as all lattices for which the basis matrix is in the set $\bm{B} \in \mathbb{U}$, i.e.,
\begin{equation}
    \begin{aligned}
        \mathcal{L}_{\mathbb{U}}&=\{\bm{B} \mathbb{Z}^{2K} | \bm{B} \in \mathbb{U}\},\\
        \unitriag &= \{\bm{B}=\bm{QR}: 
                         \abs{r_{ii}} \le \abs{r_{jj}} \,\text{for}\, i<j\}
    \end{aligned}
\end{equation}
where $\bm{B}=\bm{QR}$ is a QR-decomposition.
A unitriangular matrix $\bm{R}$ is the important special case which arises for the scaling in \eqref{eq:NearestPlaneOptScaling}.
\subsection{Generalization of Lattice Reduction}
At first, we focus on an observation within Theorem \ref{theo:LLLNPSameTHP}.
There, it has been observed that if the basis matrix of a lattice $\bm{B}$ lies in $\unitriag$, the integer matrix $\bm{T}$ of the reduced basis
\begin{equation}
    \bm{\tilde{B}} = \bm{BT} =\bm{\tilde{Q}}\bm{\tilde{R}} \in \unitriag
\end{equation}
is unitriangular (see Lemma \ref{lem:LLL_Ordered}) and that $\diag(\bm{\tilde{R}}) = \diag(\bm{R})$ holds.
That means that also $\bm{\tilde{B}}$ belongs to $\unitriag$.
This result has been shown in Lemma \ref{lem:LLL_Ordered} for the LLL reduction.
We will now generalize this result by showing that the LLL reduced matrix $\bm{\tilde{B}}$ is also \ac{HKZ} reduced.
\ac{HKZ} reduction belongs to the strongest lattice reductions and, in general, requires exponential runtime.
\begin{lemma}\label{lem:HKZ}
    If $\bm{B} \in \unitriag$, then the LLL-reduced lattice basis $\bm{\tilde{B}}~=~\bm{BT} \in \unitriag$ is also \ac{HKZ} reduced.
\end{lemma}
\begin{proof}
    A \ac{HKZ} basis is size reduced (see \cite[p.200]{HKZReduction}) which is automatically fulfilled for an LLL-reduced basis.
    In addition to that, for a \ac{HKZ} basis it has to hold that $\bm{\tilde{q}}_i \tilde{r}_{ii} = \bm{\tilde{Q}}\bm{e}_i\tilde{r}_{ii}$ is the shortest non-zero vector in $\pi_i(\mathcal{L}(\bm{\tilde{B}}))$ (see \cite[p.200]{HKZReduction}) where 
        $\pi_i(\bm{x}) = \summe{n=i}{2K} \bm{\tilde{q}}_n\bm{\tilde{q}}_n^{\T}\bm{x}$ 
    The lattice $\mathcal{L}(\bm{\tilde{B}})$ is represented by $\bm{\tilde{B}}\bm{a}$ where $\bm{a} \in \mathbb{Z}^{2K}$.
    Hence, we need to show that
    \begin{equation}
        \neg \exists\, \bm{a}\in \mathbb{Z}^{2K}\setminus \bm{0}:  \norm{\summe{n=i}{2K} \bm{\tilde{q}}_n\bm{\tilde{q}}_n^{\T}\bm{\tilde{B}}\bm{a}}^2 < \norm{\bm{\tilde{q}}_i \tilde{r}_{ii}}^2 =\abs{\tilde{r}_{ii}}^2,
    \end{equation}
    where the first term can be rewritten as 
    \begin{equation}
       \norm{\pi_i(\mathcal{L})}^2 =  \norm{\summe{n=i}{2K} \bm{\tilde{q}}_n\bm{\tilde{q}}_n^{\T}\bm{\tilde{B}}\bm{a}}^2 = \summe{n=i}{2K}\abs{ \bm{e}_n^{\T}\bm{\tilde{R}}\bm{a}}^2 = \summe{n=i}{2K}\abs{ \summe{l\ge n}{}\tilde{r}_{nl}a_l}^2
    \end{equation}
    where we, additionally, incorporated the triangular structure of $\bm{\tilde{R}}$ in the last term.
    The proof is performed by induction, where we assume that the integers $a_{m+1},a_{m+2},\dots,a_{2K}$ are equal to zero.
    With this assumption, only the rows of $i$ to $m$ are relevant due to the triangular structur of $\bm{\tilde{R}}$, i.e.,
    \begin{equation}
        \norm{\pi_i(\mathcal{L})}^2\Big|_{a_{l}=0,\, l>m} = \summe{n=i}{m-1}\abs{ \hspace*{-6pt}\summe{ n \le l\le m}{}\hspace*{-6pt}\tilde{r}_{nl}a_l}^2 + \abs{\tilde{r}_{mm}a_m}^2.
    \end{equation}
    Because $\bm{\tilde{B}} \in \unitriag$, we have 
    \begin{equation}
        \abs{\tilde{r}_{mm}a_m}^2 \ge  \abs{\tilde{r}_{ii}a_m}^2 \Rightarrow a_m = 0.
    \end{equation}
    Therefore, $a_m=0$ is necessary as $\norm{\pi_i(\mathcal{L})}^2|_{a_{l}=0,\, l>m} \ge \abs{\tilde{r}_{ii}}^2$ holds otherwise.
    To finish the proof by induction, we make no assumptions on $\bm{a}$.
Rewriting
    \begin{equation}
        \norm{\pi_i(\mathcal{L})}^2 = \summe{n=i}{2K-1}\abs{ \hspace*{-2pt}\summe{ l \ge n}{}\hspace*{-2pt}\tilde{r}_{nl}a_l}^2 + \abs{\tilde{r}_{2K,2K}a_{2K}}^2 
    \end{equation}
    it holds again that $ \abs{\tilde{r}_{2K,2K}a_{2K}}^2 \ge  \abs{\tilde{r}_{ii}a_{2K}}^2$ because $\bm{\tilde{B}} \in \unitriag$.
    With the same argumentation as above, $a_{2K} =0$ is necessary.
    Thus, it has been shown that only the zero vector leads to $\norm{\pi_i(\mathcal{L})}^2< \abs{\tilde{r}_{ii}}^2$ which is excluded by definition.
 \end{proof}
 Hence, also \ac{HKZ} reduction, which is a considerable more complex reduction, does not have any influence.
 \subsection{Generalization to{UTLR-invariant} Schemes}
Theorem \ref{theo:LLLNPSameTHP} is now generalized to a whole class of algorithms.
For this, we use the notation of Theorem  \ref{theo:LLLNPSameTHP}, with 
\begin{equation}
    \bm{T} = \begin{bmatrix}
        \bm{T}_{<i,\textnormal{a}} & \bm{T}_{<i,\textnormal{b}}\\
        \bm{0}&\bm{T}_i
    \end{bmatrix},\quad 
    \bm{R} = \begin{bmatrix}
        \bm{R}_{<i,\textnormal{a}} & \bm{R}_{<i,\textnormal{b}}\\
        \bm{0}&\bm{R}_i
    \end{bmatrix}
\end{equation}
 where $\bm{T}_i$ and $\bm{R}_i$ are the upper-triangular blocks starting from the $i$-th row.
 These are defined as
\begin{equation}
    \bm{T}_i = \begin{bmatrix}
        1&\bm{t}_{i}^{\T}\\
        \bm{0}& \bm{T}_{i+1}
    \end{bmatrix},\quad
    \bm{R}_i = \begin{bmatrix}
        r_{ii}&\bm{r}_{i}^{\T}\\
        \bm{0}& \bm{R}_{i+1}
    \end{bmatrix}.
\end{equation}
Please note that in comparison to  Theorem \ref{theo:LLLNPSameTHP}, we do not necessarily assume a unitriangular matrix $\bm{R}$ but $\bm{B} \in \unitriag$ and, therefore, $r_{ii}$ is in general not equal to one.
\begin{lemma}\label{lem:UTLRScheme}
    Let $\bm{B}$ be the basis matrix of a lattice $\mathcal{L}$ with QR-decomposition $\bm{B} = \bm{QR}$.
    Additionally, let $\bm{T}$ be a triangular, unimodular integer matrix (possibly from \ac{LR}).
    Furthermore, the shifted lattice vector is given by
    \begin{equation}\label{eq:SuccessiveSchemeAllDimensions}
        \begin{aligned}
           & \bm{Q}\bm{R}\bm{T}(\bm{\tilde{s}}+ \bm{\tilde{a}}) =  \bm{Q} \bm{R}(\bm{{s}}+ \bm{{a}}), \quad \text{with}\\
            &\bm{\tilde{s}}=\bm{T}^{\inv}\bm{s}\text{ and } \bm{\tilde{a}}=\bm{T}^{\inv}\bm{a}
        \end{aligned}
    \end{equation}
    where  $\bm{Q} \bm{R} \bm{{a}} \in \mathcal{L}$ is the variable of some lattice problem.
    Moreover, an optimization scheme is assumed which, in the $i$-th step, only considers the dimensions $\bm{Q}_i = [\bm{q}_i,\dots,\bm{q}_{2K}]$.\\
    Then, due to the triangular structure of $\bm{T}$, it is possible to form an equivalent formulation to \eqref{eq:SuccessiveSchemeAllDimensions}  within the sublattice $\mathcal{L}_i$ of $\mathcal{L}$ as
    \begin{equation}\label{eq:SuccessiveSchemeSublattice}
        \begin{aligned}
           & \bm{Q}_i \bm{R}_i\bm{T}_i(\bm{\tilde{s}}_{\ge i}+ \bm{\tilde{a}}_{\ge i}) =  \bm{Q}_i \bm{R}_i(\bm{{s}}_{\ge i}+ \bm{{a}}_{\ge i}) \quad \text{with} \\
            &\bm{\tilde{s}}_{\ge i}=\bm{T}_i^{\inv}\bm{s}_{\ge i}\text{ and } \bm{\tilde{a}}_{\ge i}=\bm{T}_i^{\inv}\bm{a}_{\ge i}.
        \end{aligned}
    \end{equation}
    where $\bm{Q}_i \bm{R}_i\bm{{a}}_{\ge i} \in \mathcal{L}_i$. If the optimization scheme now gives the same set of potential solution vectors
    \begin{equation}\label{eq:SublatticeSameIntegerSolution}
       \left\{\bm{a}_{\ge i}^{\star,(m)}\right\}_{m=1}^{M_i} =  \left\{\bm{T}_i \bm{\tilde{a}}_{\ge i}^{\star,(m)}\right\}_{m=1}^{M_i}
    \end{equation}
    in the $i$-th step after optimizing w.r.t. $\bm{\tilde{a}}_{\ge i}$ as when optimizing w.r.t.  $\bm{a}_{\ge i}$ directly, the matrix $\bm{T}$ has no impact on the problem.
\end{lemma}
\begin{proof}
    This follows directly from the triangular structure of the matrix $\bm{T}$.
\end{proof}
\begin{remark}
    In general, \ac{LR} is only unnecessary if the optimal solution of the lattice problem is found.
    However, if $\bm{B} \in \mathbb{U}$, it is enough if the optimal solution is found in each subproblem to render \ac{LR} as unnecessary according to Lemma \ref{lem:UTLRScheme}.
\end{remark}
{
\begin{figure}[t!]
    \hspace*{0.2cm}
    \subfigure[$\bm{B} \notin \mathbb{U}$]{
                \hspace{0.5cm}
    \includegraphics*{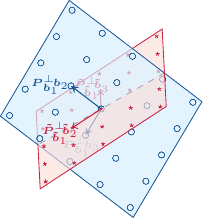}\\
    \label{fig:LatticeIllustrationNotU}
    }\hspace*{0.7cm}\subfigure[$\bm{B} \in \mathbb{U}$]{
        \hspace{-1.8cm}
        \includegraphics*{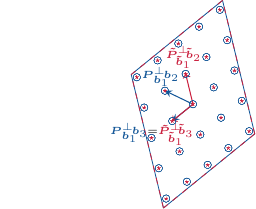}
            \label{fig:LatticeIllustrationU}
    }
    \caption{Illustration of the 2D sublattices with and without \ac{LR} depending if the lattice basis matrix of the 3D lattice is within the set $\mathbb{U}$ or not.}
    \label{fig:LatticeIllustration}
\end{figure}
    The invariance w.r.t. \ac{LR} is illustrated in Fig. \ref{fig:LatticeIllustration}. 
    In Fig.~\ref{fig:LatticeIllustrationNotU} the lattice basis matrix lies not in the set $\mathbb{U}$. 
    Then, the sublattices with \ac{LR} and without \ac{LR} within a subproblem are different and even if the optimal solution is found in each of them, these solutions are different.
    On the other hand, if the lattice basis matrix lies within the set $\mathbb{U}$ we arrive at Fig. \ref{fig:LatticeIllustrationU}.
    While the basis is different, the sublattice is the same for $\bm{B} \in \mathbb{U}$ in each subproblem regardless if \ac{LR} is applied or not.
    Hence, if the optimal solution is found within the subproblem, \ac{LR} makes no difference.
}
\begin{definition}
    We call an optimization scheme which complies with the assumptions of Lemma \ref{lem:UTLRScheme} and which fulfills \eqref{eq:SublatticeSameIntegerSolution} "\ac{UTLR}-invariant".
\end{definition}
\begin{corollary}\label{col:ScalarUTLRScheme}
    In case of a successive scalar optimization, the variables $\bm{\tilde{a}}_{> i}= \bm{T}_{i+1}^{\inv} \bm{{a}}_{> i}$ are fixed in the $i$-th step and only the variable ${\tilde{a}}_{i}$ is optimized.
    Hence, \eqref{eq:SuccessiveSchemeSublattice} reduces to 
    \begin{equation}
    \begin{aligned}
        & \bm{Q}_i \bm{R}_i\begin{bmatrix}
            s_i + \tilde{a}_i +  \bm{t}_{i}^{\T}\bm{T}_{i+1}^{\inv} \bm{a}_{>i} \\
            \bm{s}_{>i} + \bm{a}_{>i}
        \end{bmatrix} = \bm{Q}_i \bm{R}_i\begin{bmatrix}
            s_i + a_i \\
            \bm{s}_{>i} + \bm{a}_{>i}
        \end{bmatrix},\\
         &  \tilde{a}_i = {a}_i  - \bm{t}_{i}^{\T}\bm{T}_{i+1}^{\inv} \bm{a}_{>i} .
     \end{aligned}
 \end{equation}
    If the successive scalar optimization scheme gives the same set of potential solutions
    \begin{equation}
        \begin{aligned}
            \left\{{a}_i^{\star,(l)}\right\}_{l=1}^{L_i}= \left\{\tilde{a}_i^{\star,(l)}  + \bm{t}_{i}^{\T}\bm{T}_{i+1}^{\inv} \bm{a}_{>i} \right\}_{l=1}^{L_i}
        \end{aligned}
    \end{equation}
    in the $i$-th step for any fixed $\bm{a}_{>i} $ after optimizing w.r.t. ${\tilde{a}}_{ i}$ as when optimizing w.r.t. ${a}_{ i}$ directly, the scheme is{UTLR-invariant}.
    Please note that $\bm{a}_{>i}$ and, therefore, also  $\bm{\tilde{a}}_{> i}= \bm{T}_{i+1}^{\inv} \bm{{a}}_{> i}$ are fixed (to a candidate vector of a previous step). 
\end{corollary}
In this scalar case, the requirement is only w.r.t. a simple integer offset and almost all successive scalar optimization schemes are{UTLR-invariant}.
With the definition of the successive schemes, we can provide the following Theorem.
\begin{theorem}\label{theo:GeneralGreedy}
    Let $\bm{B}  \in \unitriag$ be the basis matrix of a lattice $\mathcal{L}$ with QR-decomposition $\bm{B} = \bm{QR}$.
    Additionally, let $\smash{\scalebox{0.8}{$\underset{\bm{\tilde{a}}\in \mathbb{Z}^{2K}}{\min}$}}\,\norm{\bm{QR}\bm{T}(\bm{\tilde{s}} + \bm{\tilde{a}})}_2$ with $\bm{\tilde{s}}=\bm{T}^{\inv}\bm{s}$ and $\bm{\tilde{a}}=\bm{T}^{\inv}\bm{a}$ be the transformed lattice problem after applying \ac{LR} with the unimodular integer matrix $\bm{T}$.
     Then, \ac{LR} has no impact if an{UTLR-invariant} scheme according to Lemma \ref{lem:UTLRScheme} and Corollary~\ref{col:ScalarUTLRScheme} is used.
\end{theorem}
\begin{remark}
    Please note that Theorem \ref{theo:GeneralGreedy} also holds for different optimization problems.
\end{remark}
An important special case is the scalar successive scheme.
Minimizing the two-norm in Theorem  \ref{theo:GeneralGreedy} in the $i$-th step of a successive scalar scheme w.r.t. $\tilde{a}_i \in \mathbb{\tilde{A}}_i= \mathbb{A}_i -\bm{t}_{i}^{\T}\bm{T}_{i+1}^{\inv} \bm{a}_{>i} $  
\begin{equation}\label{eq:Theorem2ScalarOpt}
    \begin{aligned}
        &\tilde{a}_i^{\star} = \underset{\tilde{a}_i \in \mathbb{\tilde{A}}_i}{\argmin} \norm{\bm{Q}_i\bm{R}_i\bm{T}_i(\bm{\tilde{s}} + \bm{\tilde{a}})}_2\\
        &= \underset{\tilde{a}_i \in \mathbb{\tilde{A}}_i}{\argmin} \abs{r_{ii}(s_i+\tilde{a}_i + \bm{t}_{i}^{\T}\bm{T}_{i+1}^{\inv}\bm{a}_{>i}) +\bm{r}_{i}^{\T} (\bm{{s}}_{>i} + \bm{{a}}_{>i})}\\
        &= -\bm{t}_{i}^{\T}\bm{T}_{i+1}^{\inv} \bm{a}_{>i} + \underset{\Delta\tilde{a}_i \in \mathbb{{A}}_i}{\argmin} \abs{r_{ii}(s_i +\Delta\tilde{a}_i)+\bm{r}_{i}^{\T} (\bm{{s}}_{>i} + \bm{{a}}_{>i})}\\
        &= -\bm{t}_{i}^{\T}\bm{T}_{i+1}^{\inv} \bm{a}_{>i} + a_i^{\star}
    \end{aligned}
\end{equation}
 gives the same solution as when minimizing via the original variable $a_i \in \mathbb{A}_i$.
Additionally, if more than one candidate is considered in each step of the scheme then also any new integer $\tilde{a}_i^{\prime} = \tilde{a}_i + \Delta$ due to an offset $\Delta \in \mathbb{Z}$ directly corresponds to the neighbor ${a}_i^{\prime} = {a}_i + \Delta $ with  $\bm{a}_{>i}$ unchanged.
}

{
Hence, most methods which work in a successive manner are{UTLR-invariant} and are not affected by \ac{LR} according to Theorem \ref{theo:GeneralGreedy}.
Popular examples are now categorized in the following.}
\subsubsection{Nearest Plane/THP ({UTLR-invariant})}
The \ac{NP} algorithm, or the \ac{THP} algorithm, are probably the most popular{UTLR-invariant} schemes.
The \ac{NP} algorithm starts with the integer $\tilde{a}_{2K}$ and then successively optimzies each scalar integer $\tilde{a}_i$ until $\tilde{a}_{1}$ is reached {while fixing the previously obtained integers (see Theorem \ref{theo:LLLNPSameTHP} for more details)}.
{This complies with an UTLR-invariant scheme and, according to Theorem \ref{theo:GeneralGreedy}, if $\bm{B} \in \mathbb{U}$, this approach results in the same solution regardless of whether \ac{LR} is applied or not.}
\subsubsection{K-Best Algorithm ({UTLR-invariant})}
The K-Best algorithm (see, e.g., \cite{KBestVLSI,KBestImpl,KBestImp,FPGAQRDMFSD,KBestLLLOne,KBestLLLTwo,KBestLLLThree}) is also an{UTLR-invariant} scheme and generalizes the concept of the simple \ac{NP} algorithm.
Instead of just successively taking the best integer in each step (as the \ac{NP} or \ac{THP} algorithm does), the $B$ best points are considered.
These K-Best type of algorithms have already been shown to achieve very good performance close to optimality.
However, for optimized performance, it is important that \ac{LR} is performed (see \cite{KBestLLLOne,KBestLLLTwo,KBestLLLThree}).
The \ac{LR}-aided K-Best algorithm starts with the integer $\tilde{a}_{2K}$ and determines the $B$ best candidates.
Afterward, for each candidate ${\tilde{a}}_{2K}^{(b)}$, $b=1,\dots,B$, the $C$ best children $\tilde{a}_{2K}^{(b),1}, \dots, \tilde{a}_{2K}^{(b),C}$ are derived in a greedy manner.
Then again, the best $B$ candidates are taken of all $BC$ children and the next layer ($2K-1$) is considered with the candidates $\bm{\tilde{a}}_{2K-1}^{(b)}$, $b=1,\dots,B$.
Similarly, the $C$ best children of the $B$ candidates are calculated, and the procedure is continued until the final integer $\tilde{a}_{1}$ is considered.
Lastly, the best candidate is taken and the solution vector  $\bm{\tilde{a}}^{\star}$ is obtained.
This K-Best procedure is an{UTLR-invariant} scheme, and we can provide the following corollary.
\begin{corollary}
      {If $\bm{B} \in \unitriag$, the \ac{LR}-aided K-Best algorithm leads to exactly the same solution as the conventional K-Best algorithm.
    $\bm{B} \in \unitriag$ appears for the optimized scaling in \eqref{eq:NearestPlaneOptScaling}.}
\end{corollary}
\begin{proof}
    {We assume that we have given the candidates $\bm{\tilde{a}}_{i+1}^{(b)}, b=1,\dots,B$ in the $i+1$-th layer and we are searching for the $B$ best candidates for in the $i$-th layer of a \ac{LR}-aided K-Best algorithm.}
    The candidates of the previous layer $\bm{\tilde{a}}_{i+1}^{(b)}$, {$b=1,\dots,B$} directly correspond to $\bm{{a}}_{i+1}^{(b)}, b=1,\dots,B$, i.e., the $B$ candidates of a conventional K-Best algorithm without \ac{LR}.
    Now, the best $C$ children of each candidate $\bm{\tilde{a}}_{i+1}^{(b)}$ have to be derived. 
    According to \eqref{eq:Theorem2ScalarOpt}, optimizing $\norm{\bm{e}_i^{\T}\bm{R}\bm{T}(\bm{\tilde{s}} + \bm{\tilde{a}})}$ for a given $\bm{\tilde{a}}_{i+1}^{(b)}$ w.r.t. $\tilde{a}_i$ results in the same integer ${a}_i$ when maximizing $\norm{\bm{e}_i^{\T}\bm{R}(\bm{{s}} + \bm{{a}})}$ for the given $\bm{{a}}_{i+1}^{(b)}$ without \ac{LR}.
    Hence, the optimized integer value $\tilde{a}_i^{(b),1}$, directly corresponds to ${a}_i^{(b),1}$ which is the optimized integer value without \ac{LR}.
    {Similarly, the $c$-th best child is determined by}
    \begin{equation}
        \tilde{a}_i^{(b),c} = \underset{\tilde{a}_i \in \mathbb{Z} \setminus \left\{\tilde{a}_i^{(n),1},\dots,\tilde{a}_i^{(b),c-1}\right\} }{{\argmin}} \abs{ \bm{e}_i^{\T}\bm{R}\bm{T}(\bm{\tilde{s}} + \bm{\tilde{a}})}^2.
    \end{equation}
    {and} the set of the children $\{\tilde{a}_i^{(b),1}, \dots, \tilde{a}_i^{(b),C}\}$ directly corresponds to the set $\{{a}_i^{(b),1}, \dots, {a}_i^{(b),C}\}$ which is obtained by the same procedure without \ac{LR}.
    Accordingly, also determining the $B$ best of all $BC$ children results in the same {set of candidates $\bm{\tilde{a}}_{i}^{(b)}, b=1,\dots,N$ regardless if \ac{LR} is applied or not.}
    {After} successively proceeding with the algorithm, the same solution vector is obtained for the \ac{LR}-aided K-Best and conventional K-Best algorithm.
\end{proof}

\subsubsection{Sphere Decoder (UTLR-invariant)}
The \ac{SD} is a very popular algorithm, see e.g. \cite{SphereDecoder,SphereDecoderFinckePohst,SphereDecoderDetection,SphereDecoderVLSI,SphereDecoderExpCompl}.
As we are dealing with precoding in the \ac{DL}, it is also called the sphere encoder (see \cite{VP}). 
While the \ac{SD} gives the optimal solution of the lattice problem, its exponential complexity renders it impractical.

{ The \ac{SD} algorithm according to \cite{SphereDecoder,SphereDecoderFinckePohst} is considered with the improved candidate selection according to \cite{SchnorrEuchner,ClosestLatticeSearch}.}
While not obvious at first, the \ac{SD} also belongs to the{UTLR-invariant} schemes and we will analyze this in the following.
As the \ac{SD} is optimal, it is apparent that with or without \ac{LR}, the same optimal point is reached.
The complexity, on the other hand, is in general reduced when \ac{LR} is applied (see \cite{LLLAnalysis}).
However, according to \cite{LLLAnalysis}, when no column permutations occur during \ac{LR}, the complexity is, in general, not reduced.
The complexity analysis in \cite{LLLAnalysis} is based on an approximation.
However, for our special case {of $\bm{B} \in \mathbb{U}$}, we can extend the analysis in \cite{LLLAnalysis} and show that the same sequence of lattice points is visited regardless of whether \ac{LR} is applied or not.
Hence, for the problem we consider, the complexity is actually exactly the same.
\begin{algorithm}
    \caption{Sphere Decoder \cite{SphereDecoder,SphereDecoderFinckePohst}}\label{alg:Spheredecoder}
    \begin{algorithmic}
    \Require $r > 0$
    \State Initialize: $i=2K$
    \begin{enumerate}
        \item For fixed $\bm{\tilde{a}}_{>i}=[\tilde{a}_{i+1},\dots,\tilde{a}_{2K}]^{\T}$, find all $\tilde{a}_i \in \mathbb{A}_i $ for which $\norm{\bm{R}_i\bm{T}_i(\bm{\tilde{s}}_i + \bm{\tilde{a}}_i)}^2 \le r^2$
        \If{$\mathbb{\tilde{A}}_i = \emptyset$}
        \State $i \gets i+1$
        \If{$i=2K+1$}
        \State Return($\bm{\tilde{a}}^{\star}$)
        \Else 
        \State GoTo 2)
        \EndIf
        \EndIf
        \State Pick $\tilde{a}_i \in \mathbb{\tilde{A}}_i $ with smallest $\norm{\bm{R}_i\bm{T}_i(\bm{\tilde{s}}_i + \bm{\tilde{a}}_i)}^2$
        \State Remove $\tilde{a}_i$ from $\mathbb{\tilde{A}}_i $
        \State $i \gets i-1$
        \If{$i\ge 1$}
            \State GoTo 1)
        \Else 
            \State Update $\bm{\tilde{a}}^{\star}$
            \State Update $r$
            \State $i=2$
            \State GoTo 2)
        \EndIf

    \end{enumerate}
    \end{algorithmic}
    \end{algorithm}

\begin{corollary}
    {If $\bm{B} \in \unitriag$, the \ac{LR}-aided \ac{SD} algorithm visits exactly the same lattice points in the same order as the conventional \ac{SD} algorithm.
    $\bm{B} \in \unitriag$ appears for the optimized scaling in \eqref{eq:NearestPlaneOptScaling}.}
\end{corollary}
\begin{proof}
    While the \ac{SD} achieves the optimum performance, it is still an{UTLR-invariant} procedure in the sense that the search tree is always successively constructed.
    {This can be shown similarly to the algorithms above and we give a short proof sketch in the following. 
    To see this, we show that the same candidates are generated in line~1) of Algorithm \ref{alg:Spheredecoder} and also selected in line~10).
    In line 1), the set $\mathbb{\tilde{A}}_i$ is constructed by determining all $\tilde{a}_i$ for which $\abs{\bm{e}_i^{\T}\bm{R}_i\bm{T}_i(\bm{\tilde{s}}_i + \bm{\tilde{a}}_i)}^2 \le r^2$ holds (for fixed $\bm{\tilde{a}}_{>i}=[\tilde{a}_{i+1},\dots,\tilde{a}_{2K}]^{\T}$).
    As $\tilde{a}_i$ corresponds to $ a_i= \tilde{a}_i + \bm{t}_{i}^{\T}\bm{T}_{i+1}^{\inv}\bm{a}_{>i} $ it directly follows that
    { \begin{equation}
        \begin{aligned}
            &\abs{\bm{e}_i^{\T}\bm{R}\bm{T}(\bm{\tilde{s}} + \bm{\tilde{a}})}^2\hspace*{-2pt}\\
            &  = \hspace*{-3pt} \abs{r_{ii}(\tilde{a}_i + \bm{t}_{i}^{\T}\bm{T}_{i+1}^{\inv}\bm{a}_{>i} + {s}_i) +\bm{r}_{i}^{\T} (\bm{{s}}_{>i} + \bm{{a}}_{>i})}^2 \le r^2 \\
            &\iff \abs{ r_{ii}({a}_i +  {s}_i) +\bm{r}_{i}^{\T} (\bm{{s}}_{>i} + \bm{{a}}_{>i})}^2 \le r^2
        \end{aligned}
    \end{equation}}
    and, hence, the set $\mathbb{\tilde{A}}_i=\{{a}_i^{(1)}+\bm{t}_{i}^{\T}\bm{T}_{i+1}^{\inv} \bm{a}_{>i},\dots,{a}_i^{(N)}+\bm{t}_{i}^{\T}\bm{T}_{i+1}^{\inv} \bm{a}_{>i} \}$ for which $\abs{\bm{e}_i^{\T}\bm{R}\bm{T}(\bm{\tilde{s}} + \bm{\tilde{a}})}^2\le r^2$ directly corresponds to the set $\mathbb{A}_i=\{{a}_i^{(1)},\dots,{a}_i^{(N)}\}$ of the \ac{SD} without \ac{LR} for which $\abs{\bm{e}_i^{\T}\bm{R}(\bm{{s}} + \bm{{a}})}^2 \le r^2$.
    Besides the generation also the same candidates are selected.
    In line 10), $\tilde{a}_i \in \mathbb{\tilde{A}}_i$ is chosen which minimizes the objective function based on a greedy step.
    This corresponds to $\tilde{a}_i^{\star} =   \argmin{\tilde{a}_i \in \mathbb{\tilde{A}}_i} \abs{ \bm{e}_i^{\T}\bm{R}\bm{T}(\bm{\tilde{s}} + \bm{\tilde{a}})}^2$.
    From \eqref{eq:Theorem2ScalarOpt}, we know that this is equivalent to ${a}_i^{\star} = {\argmin}_{{a}_i \in \mathbb{A}_i} \abs{ \bm{e}_i^{\T}\bm{R}(\bm{{s}} + \bm{{a}})}^2$.
    In summary, the same candidates are determined, and the same lattice points are selected.
    Hence, the \ac{LR}-aided \ac{SD} and the conventional \ac{SD} without \ac{LR} visit the same lattice points.}
\end{proof}

{
Many popular algorithms are{UTLR-invariant} and comply with Theorem \ref{theo:GeneralGreedy}.
However, there exist algorithms for which \ac{LR} does make a difference even if $\bm{B} \in \unitriag$. 
These will be examined in the following.}
\subsubsection{Rounding Off (Not{UTLR-invariant})}
We have already analyzed the Rounding Off procedure.
The \ac{RO} algorithm does not belong to the{UTLR-invariant} methods. 
Without \ac{LR}, the solution is the zero-vector according to \eqref{eq:ROZeros}, whereas when considering \ac{LR}, the optimal solution is, in general, an integer vector different from zero.
This also holds for {$\bm{B} \in \mathbb{U}$}, which arises for lattice precoding when considering the optimized scaling in \eqref{eq:NearestPlaneOptScaling}.
However, we also know from Proposition \ref{prop:ROWorseNP} that the performance cannot exceed the \ac{NP} or \ac{THP} algorithm even when \ac{LR} is applied.
\subsubsection{ {Constrained} Exhaustive Search (Not{UTLR-invariant})}
{Constrained} exhaustive search is another example of an algorithm not following a{UTLR-invariant} procedure.
For {constrained} exhaustive search, we restrict each integer value $\tilde{a}_i$ to lie within an interval around some intial point $\tilde{a}_i^{(\textnormal{init})}$ which results in the constraint set $-l_i \le \tilde{a}_i -\tilde{a}_i^{(\textnormal{init})}\le u_i$.
Afterward, all possible combinations are considered, and the best one is taken.
For this method, \ac{LR} does make a difference in general.
\begin{corollary}
    {A \ac{LR}-aided constrained exhaustive search algorithm generally leads to a different solution than a conventional constrained exhaustive search, even if $\bm{B} \in \unitriag$. (Excluding the case where the search space is unconstrained and the optimal solution is obtained.)}
\end{corollary}
\begin{proof}
    Firstly, we consider {another lattice} point which lies within the constraint set by changing the $i$-th integer as
    \begin{equation}
       \tilde{a}_i = \tilde{a}_i^{(\textnormal{init})} + \Delta
    \end{equation}
    where $\Delta \in \{\pm1, \pm2,\dots\}$ such that $-l_i \le \Delta \le u_i$.
    {This corresponds directly to the integer}
    \begin{equation}
        \begin{aligned}
                {a}_i &= {a}_i^{(\textnormal{init})} + \Delta\\
        \end{aligned}
    \end{equation}
    of ${a}_i^{(\textnormal{init})}$ without \ac{LR} with $\bm{a}_{>i}$ unchanged.
    However, the integer values $\bm{a}_{<i}$ are implicitly also changed by adding $\Delta$ to ${a}_i^{(\textnormal{init})}$, which results from the exhaustive search not belonging to the{UTLR-invariant} procedures.
    Actually, the integers $a_j$ with $j<i$ are given as
    \begin{equation}
        \begin{aligned}
            a_j &= \tilde{a}_j^{\text{(init)}} + \bm{t}_{j}^{\T} \bm{\tilde{a}}_{>j}\\
            &=\tilde{a}_j^{\text{(init)}} + \bm{t}_{j}^{\T}( \bm{\tilde{a}}_{>j}^{(\textnormal{init})} + \bm{e}_{i-j} \Delta)\\
            &={a}_j^{\text{(init)}} + t_{ji} \Delta.
        \end{aligned}
    \end{equation}
    As $t_{ji}$ is an arbitrary integer and can have any value, {${a}_j^{\text{(init)}} + t_{ji} \Delta$} might lie outside the constraint set $-l_i \le {a}_i -{a}_i^{(\textnormal{init})}\le u_i$ of the conventional exhaustive search without \ac{LR}.
   \end{proof}

\subsubsection{Fixed-Complexity Sphere Decoder (Not{UTLR-invariant})}
Due to the exponential complexity, exhaustive search is typically not employed for lattice problems.
However, there are also more practical methods where \ac{LR} still has an effect.
One of them is the \ac{FSD} \cite{FSD,FSDTwo,FSDThree,FSDFour}, which showed very good performance for the detection problem.
This approach has also been applied for precoding (see \cite{FSE,FSETwo,FSEThree}), where it is also called the Fixed-Complexity Sphere Encoder.
For the \ac{FSD}, the number of nodes in each layer is fixed, where the specific number is a design choice.
Typically, the integers $N+1,\dots,2K$ are determined by exhaustive search and then, for all these candidates, the \ac{NP} algorithm is applied for the remaining layers (see \cite{FSD,FSE}).
We observed that the optimal lattice point is typically very close to the \ac{LLL}-\ac{RO} or the (\ac{LLL})-\ac{NP} solution vector.
Hence, we restrict the search space of each element close to an initial solution (like the \ac{NP} solution), i.e., $ -\Delta \le a_i-a_i^{(\text{init})} \le \Delta$.
By choosing now a smaller number for $\Delta$, the number of integers for exhaustive search that can be computationally afforded increases, and we have observed a clearly improved performance in our scenarios.
As the layers $N+1,\dots,2K$ are determined by exhaustive search, the \ac{FSD} does not belong to the{UTLR-invariant} methods, which we summarize in the following corollary.
\begin{corollary}
    {A \ac{LR}-aided \ac{FSD} generally leads to a different solution than a conventional \ac{FSD} search, even if $\bm{B} \in \unitriag$.
    (Excluding the case where the search space of the exhaustive search is unconstrained and the optimal solution is obtained within these layers.)}
\end{corollary}
\begin{proof}
As the first layers $N+1,\dots,2K$ are determined by exhaustive search, the \ac{FSD} generally leads to a different solution when \ac{LR} is applied.
\end{proof}

{
\subsection{Generalization to Rate Region}
Until now, we have considered the sum rate where all user rates are incorporated equally.
However, changing to a rate region perspective, the weighted sum rate
\begin{equation}
    R^{\text{HSNR}}_{\text{WSR}} 
    = \summe{k=1}{K}\mu_k^{\prime}  R^{\text{HSNR}}_{k} =  \frac{1}{2}\summe{k=1}{2K}\mu_k\log_2 \left( \frac{P_{\text{Tx}}}{\pi e \expct{\norm{\bm{\bar{x}}}^2}} d_k^2\right)
\end{equation}
with $\summe{j=1}{K}\mu_j^{\prime}=K$ has to be evaluated.
We will show now that the results of this article can be generalized to the complete rate region by considering the weighted sum rate.
\begin{lemma}\label{lem:WSROptimalScaling}
    Given a fixed encoding order, the optimal diagonal scaling matrix $\bm{D}^{{\star}}$ for the \ac{NP} algorithm, considering the weighted sum rate, is 
    \begin{equation}
        \bm{D}^{{\star}} = \diag^{\inv}(\bm{r})\diag(\bm{\sqrt{\mu}})
    \end{equation}
    where $\bm{r}$ is the diagonal of $\bm{R}$ from the QR decomposition $\bm{H}^+ = \bm{QR}$ and $\bm{\sqrt{\mu}} = [\sqrt{\mu_1},\dots,\sqrt{\mu_{2K}}]^{\T}$.
\end{lemma}
\begin{proof}
    Similar to Proposition \ref{prop:ScalingMatrixOptRate}, we  have $\expct{\norm{\bm{\bar{x}}}^2}=\frac{1}{12}\summe{m=1}{2K} {r}_{mm}^2 d_m^2$ resulting in the weighted sum rate
    \begin{equation}
        R^{\text{HSNR}}_{\text{WSR}} =  \frac{1}{2}\summe{k=1}{2K}\mu_k\log_2 \left( \frac{12 P_{\text{Tx}}}{\pi e \summe{m=1}{2K} {r}_{mm}^2 d_m^2} d_k^2\right).
    \end{equation}
    To obtain the optimal scaling, we take the first-order derivative
    \begin{equation}
       \frac{\partial}{\partial d_j^2} R^{\text{HSNR}}_{\text{WSR}} = \frac{1}{2 \ln(2)}\mu_j \frac{1}{d_j^2} - \frac{2K}{2\ln(2)} \frac{r_{jj}^2}{\summe{m=1}{2K} {r}_{mm}^2 d_m^2}. 
    \end{equation}
    Setting the derivative to zero results in the optimal scaling
    \begin{equation}\label{eq:OptimalScalingWSR}
        d_j^{2,\star} = \frac{\mu_j}{r_{jj}^2} \left(\frac{1}{2K}\summe{j=1}{2K} {r}_{mm}^2 d_m^2\right) \propto \frac{\mu_j}{r_{jj}^2}
    \end{equation}
    where we can choose any proportionally factor as the objective is scaling invariant. 
\end{proof}
Using the optimal scaling of \eqref{eq:OptimalScalingWSR}, the objective can now be written as 
\begin{equation}\label{eq:WSRRateForOptimalScaling}
    R^{\text{HSNR}}_{\text{WSR}} =  \frac{1}{2}\summe{k=1}{2K}\mu_k\log_2 \left( \frac{6}{\pi e}\frac{ \mu_k}{r_{kk}^2}\bar{p}\right).
\end{equation}   
\begin{lemma}\label{lem:OptimizedEncodingOrder}
    The encoding order is optimal if the streams are encoded in ascending order w.r.t. to their weights, i.e., the streams are encoded with the order $\pi^{{\star}}$ such that
    \begin{equation}\label{eq:WSROptimalEncodingOrder}
        \bm{H} = \begin{bmatrix}
            \bm{h}^{\T}_{\pi^{{\star}}(1)}\\
            \vdots\\
            \bm{h}^{\T}_{\pi^{{\star}}(2K)}
        \end{bmatrix} \text{with } \; \mu_{\pi^{{\star}}(1)} \le  \mu_{\pi^{{\star}}(2)} \le \dots \le \mu_{\pi^{{\star}}(2K)}. 
    \end{equation}
\end{lemma}
\begin{proof}
    To show this, we consider two encoding orders $\pi$ and $\sigma$ which only differ in the $i$-th and $i+1$-th stream being swapped and otherwise being identical, i.e.,
    \begin{equation}\label{eq:DecodingOrdersDefinition}
        \begin{aligned}
            \pi(m) &= \sigma(m)\quad\text{for}\quad {m \neq i, m\neq i+1}\\
            \pi(i) &= \sigma(i+1), \;\sigma(i)= \pi(i+1).
        \end{aligned}
    \end{equation}
    Maximizing \eqref{eq:WSRRateForOptimalScaling} is equal to minimizing
    \begin{equation}\label{eq:NegativeEquivalentWSRRateForOptimalScaling}
       \summe{k=1}{2K}\mu_{\pi(k)/\sigma(k)}\ln \left( s_{\pi/\sigma,kk}^2 \right)
    \end{equation}
    where $s_{\pi/\sigma,kk}$ are the diagonal entries of the QR decomposition $\bm{H}_{\pi/\sigma}^+= \bm{Q}_{\pi/\sigma} \bm{S}_{\pi/\sigma}$.
     To evaluate which encoding order results in a smaller value of \eqref{eq:NegativeEquivalentWSRRateForOptimalScaling}, we use the definition of the squared diagonal entries
     \begin{equation}\label{eq:DefinitionSquaredDiagonals}
        s^2_{\pi/\sigma,kk} = \bm{h}_{\pi/\sigma,k}^{\T}\left( \eye - \summe{n=1}{k-1}\bm{q}_{\pi/\sigma,n}\bm{q}_{\pi/\sigma,n}^{\T}\right) \bm{h}_{\pi/\sigma,k}^{\T}.
     \end{equation}
     Because of \eqref{eq:DefinitionSquaredDiagonals}, we directly see that
     \begin{equation}\label{eq:SquaredDiagonalEquivalence}
        s^2_{\pi,kk} = s^2_{\sigma,kk}\quad {m \neq i, m\neq i+1}.
     \end{equation}
    By using \eqref{eq:SquaredDiagonalEquivalence}, the difference in objective values of the two encoding orders is given by
     \begin{equation}\label{eq:WSRDifferenceOfEncodingOrders}
         \begin{aligned}
            \Delta_{\pi-\sigma}&=\summe{k=1}{2K}\mu_{\pi(k)}\ln \left( s_{\pi,kk}^2 \right) -  \summe{k=1}{2K}\mu_{\sigma(k)}\ln \left( s_{\sigma,kk}^2 \right)\\
            &= \summe{k=i}{i+1}\mu_{\pi(k)}\ln \left( s_{\pi,kk}^2 \right) -  \summe{k=i}{i+1}\mu_{\sigma(k)}\ln \left( s_{\sigma,kk}^2 \right)\\
            &= \mu_{\pi(i)}\ln \left( \frac{s_{\pi,ii}^2}{s_{\sigma,i+1,i+1}^2} \right) -  \mu_{\pi(i+1)}\ln \left( \frac{s_{\sigma,ii}^2}{s_{\pi,i+1,i+i}^2} \right)\\
         \end{aligned}
     \end{equation}
     where we used $\mu_{\pi(i)} = \mu_{\sigma(i+1)} $ and $\mu_{\pi(i+1)} = \mu_{\sigma(i)} $ due to \eqref{eq:DecodingOrdersDefinition} in the last line.
     Because $\det(\bm{H}_{\pi}^+\bm{H}_{\pi}^{+\T})=\det(\bm{H}_{\sigma}^+\bm{H}_{\sigma}^{+\T})$ and  because of  \eqref{eq:SquaredDiagonalEquivalence}, we have 
     \begin{equation}
        s_{\sigma,ii}^2 s_{\sigma,i+1,i+1}^2=s_{\pi,ii}^2s_{\pi,i+1,i+1}^2=c_i.
     \end{equation}
     Therefore, the difference in \eqref{eq:WSRDifferenceOfEncodingOrders} can be rewritten as
     \begin{equation}
        \Delta_{\pi-\sigma} = ( \mu_{\pi(i)} -  \mu_{\pi(i+1)})\ln \left( \frac{s_{\sigma,ii}^2 s_{\pi,ii}^2}{c_i} \right)\\
     \end{equation}
     According to \eqref{eq:DefinitionSquaredDiagonals}, we have $ {s_{\sigma,ii}^2 s_{\pi,ii}^2} \ge c_i$ and, therefore, $\ln \left( {s_{\sigma,ii}^2 s_{\pi,ii}^2}/{c_i} \right)\ge 0$.
  Hence, $\pi$ is preferred over $\sigma$ if
     \begin{equation}
        \Delta_{\pi-\sigma}  \le 0
     \end{equation}
     which means that 
     \begin{equation}
        \mu_{\pi(i)} \le  \mu_{\pi(i+1)}.
     \end{equation}
     This results directly in the optimal encoding order 
     \begin{equation}
         \pi^{\star}: \mu_{\pi^{\star}(1)} \le  \mu_{\pi^{\star}(2)} \le \dots, \le \mu_{\pi^{\star}(2K)}.
     \end{equation} 
\end{proof}
With the optimized scaling matrix and the optimized encoding order, we can generalize our results of the sum rate to the whole rate region according to the following proposition.
\begin{proposition}\label{prop:WSRLRNoImpact}
    For the whole rate region, \ac{LR}-aided \ac{NP} as well as all other{UTLR-invariant} methods lead to exactly the same solution as the conventional algorithm without \ac{LR} when considering the optimized scaling in Lemma \ref{lem:WSROptimalScaling} and the optimized encoding order in Lemma \ref{lem:OptimizedEncodingOrder}.
\end{proposition}
\begin{proof}
    The precoding matrix with the optimized scaling in Lemma \ref{lem:WSROptimalScaling} and the optimized encoding order in Lemma \ref{lem:OptimizedEncodingOrder} is given by 
    \begin{equation}
       \bm{P} =  \bm{Q}\bm{R} \bm{D}^{{\star}}\quad
\text{with} \quad
        d_{ii}^{{\star}} = \pm \sqrt{\frac{\mu_{\pi^{\star}(i)}}{r^2_{\pi^{\star},ii}} }.
    \end{equation}
    Because  $\mu_{{\pi^{\star}(i)} }\le \mu_{\pi^{\star}(i+1)}$, we have 
    \begin{equation}
       \bm{P} \in \mathbb{U}
    \end{equation}
    and, therefore, the implications of Theorem \ref{theo:GeneralGreedy} hold.
    Please note that all points on the rate boundary which cannot be achieved via the weighted sum rate are achieved by time sharing. 
\end{proof}
}

{
    Similar to Proposition \ref{prop:LocalOptScalingLLLNP}, we can give the following Proposition regarding local optimality.
    \begin{proposition}
        Given the optimized encoding order from Lemma \ref{lem:OptimizedEncodingOrder}, the optimized scaling in Lemma \ref{lem:WSROptimalScaling} 
        \begin{equation}
            \bm{D}^{\star} = \diag^{\inv}(\bm{r})\diag(\bm{\sqrt{\mu^{\star}}})
        \end{equation}
       is locally optimal for the \ac{LLL}-aided \ac{NP} algorithm.
    \end{proposition}
    \begin{proof}
    The proof is similar to Proposition \ref{prop:LocalOptScalingLLLNP} with the scaling given as
    \begin{equation}\label{eq:LemmaLocalOptNPScalingOptDefinitionWSR}
        \bm{d}^{\star} = [{\sqrt{\mu_1^{\star}}}/{r_{11}},{\sqrt{\mu_2^{\star}}}/{r_{22}},\dots,{\sqrt{\mu_{2K}^{\star}}}/{r_{2K,2K}}]^{\T}.
    \end{equation}
    With the same argumentation as in Proposition \ref{prop:LocalOptScalingLLLNP}, the integer matrix $\bm{T}$ is constant within $B_{\delta}(\bm{d}^{\star})$.
    Therefore, we arrive at
        \begin{equation}
            \expct{\norm{\bm{\bar{x}}}^2} = \frac{1}{12}\summe{j=1}{2K} \tilde{r}^2_{jj} = \frac{1}{12}\summe{j=1}{2K} {r}^2_{jj} d_j^2 \quad \forall \bm{d} \in B_{\delta}(\bm{d}^{\star})
     \end{equation}
     where we exploited that $\bm{\tilde{R}} = \bm{R}\bm{D}\bm{T}$ for an upper triangular $\bm{T}$. 
     This is the exact same power as in case of the conventional \ac{NP} algorithm.
     Following Lemma \ref{lem:WSROptimalScaling}, the optimal scaling within $B_{\delta}(\bm{d}^{\star})$ is given by \eqref{eq:LemmaLocalOptNPScalingOptDefinitionWSR} and, hence, the scaling of Lemma~\ref{lem:WSROptimalScaling} is locally optimal.
    \end{proof}
}
{
    \subsection{Generalization to Practical Considerations}\label{subsec:PracCons}
    The analysis above has been performed based on the mutual information in the high-\ac{SNR} regime under certain model assumptions.
    Hence, the question arises if the results above can be generalized to more practical considerations like moderate \ac{SNR}, finite constellations, or finite block-lengths.
    Here, it has to be mentioned that once the lattice has the structural property $\bm{B} \in \unitriag$, all{UTLR-invariant} methods lead to exactly the same integer vector regardless if \ac{LR} is applied or not.
    We have shown the equivalence w.r.t. the optimization variable.
    Therefore, regardless which metric is used like e.g., moderate \ac{SNR}, finite constellations, or finite block-lengths, once $\bm{B} \in \unitriag$ there will be no difference for{UTLR-invariant} methods with or without \ac{LR}.
    However, it has to be evaluated if the optimized scaling in \eqref{eq:NearestPlaneOptScaling} leading to $\bm{B} \in \unitriag$ in the first place is still reasonable for the more practical considerations.
    }

    {
        In Section~\ref{sec:Numerical}, we will see that the scaling in \eqref{eq:NearestPlaneOptScaling} leading to $\bm{B} \in \unitriag$ is valid also for practical scenarios.
        It is important to note that for moderate or low \ac{SNR} values, a stream selection has to be performed. 
        The scaling in (22) assumes all streams to be active and, hence, (22) is chosen for the effective channel matrix arising from a user selection.
        In comparison to the optimal choice (for moderate and low SNR values, otherwise (22) is optimal), no performance degradation was observed when using (22) (see Section \ref{sec:Numerical}).
    }

    {
        However, there is still the question why the existing literature has observed large gains at high-\ac{SNR} when using the identity scaling together with \ac{LLL} reduction w.r.t. the \ac{BER}.
        In our analysis, regarding the mutual information, the identity scaling cannot be recommended.
        Neither from a theoretical point of view nor does it provide a good performance as can be seen in Section \ref{sec:Numerical}.
        The reason why large gains were observed is that a fixed constellation was used for the users.
        If the constellation is way too small for a given SNR value, the identity scaling leads to a better performance than the scaling in \eqref{eq:NearestPlaneOptScaling}.
        However, when optimizing the constellation instead of using a fixed one, the scaling in \eqref{eq:NearestPlaneOptScaling} is clearly outperforming the identity scaling as suggested by the results based on the mutual information.
    }

    {
        \subsection{Further Generalizations (MSE-optimizing Precoders)}
        The results of this paper can be further generalized. 
        An important aspect is the generalization to more complex precoding structures.
        While a \ac{ZF} precoder with stream allocation gives a good performance also for low and moderate \ac{SNR} values, more complex precoding structures like \ac{MSE}-optimizing schemes can still lead to small improvements.
        Similar to this article, we experienced in our simulations that also for more complex precoding schemes, there is no impact of \ac{LR}.
        However, an analysis of these precoding schemes is not within the scope of this article and a thorough investigation of these more complex precoding structures will be conducted in future work.
    }

\begin{figure}[t!]
    \centering
    \hspace*{15pt}
    \vspace*{2pt}
    \includegraphics*{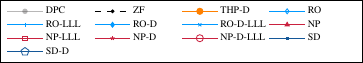}\\
    \includegraphics*{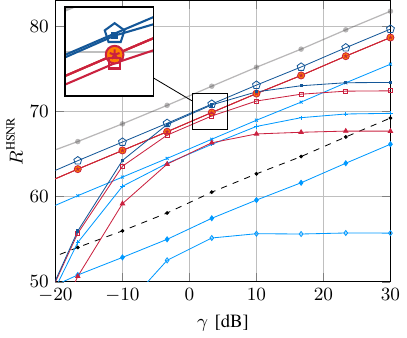}
    \caption{$\gamma=0$ dB, $\kappa=0$, $N=6$, $P_{\text{Tx}}=40\,\text{dB}$ }
    \label{fig:Gamma}
\end{figure}

\begin{figure}[b!]
    \centering
    \hspace*{15pt}
    \vspace*{2pt}
    \includegraphics*{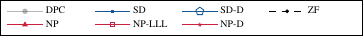}\\
    \includegraphics*{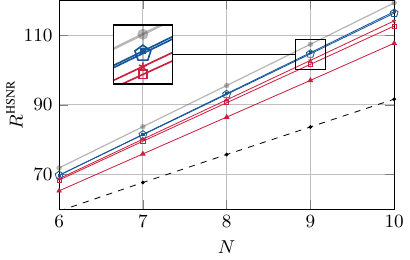}
    \caption{$\gamma = 0$ dB, $\kappa =0$ dB, $P_{\text{dB}} = 40$ dB}
    \label{fig:AntennaGammaZero}
\end{figure}
\section{Numerical Results}\label{sec:Numerical}
We now support the theoretical results of the last sections with simulations.
We consider a quadratic system where $\bm{H}_{\mathbb{C}} \in \cmplx{K \times K}$ {is the complex channel matrix.
A simple Rician model 
\begin{equation}
    \bm{H}_{\mathbb{C}} = \sqrt{\frac{\kappa}{1+\kappa}} \bm{1} \bm{1}^{\T} +  \sqrt{\frac{1}{1+\kappa}} \bm{H}^{\text{Ray}}_{\mathbb{C}} 
\end{equation}
is used where $\bm{H}_{\mathbb{C}}^{\text{Ray}} \in \cmplx{K \times K}$ has i.i.d. $\gaussdist{0}{1}$ entries and $\kappa$ is the Rician factor.
The Rician component is only introduced for controlling the conditioning of the channel matrix and, therefore, the all-ones vector is used for simplicity.
Hence, to worsen the conditioning of the channel matrix, we either choose a non-zero Rician factor $\kappa \neq 0$, or we use a Rayleigh channel model ($\kappa = 0$) but with imposing a path loss $\gamma$ to user $1$.}

Prior to applying the algorithms, a coding order optimization w.r.t. the \ac{MSE} is performed (see \cite{THPMichael}), which is similar to a V-BLAST ordering.
This is especially important for the moderate and low \ac{SNR} regions.
\subsection{Classical Algorithms}
For the algorithms, we use linear ZF and \ac{DPC}, both with Gaussian symbols, as a reference.
All other methods have uniformly distributed symbols and, hence, experience a shaping loss.
For the algorithms, we start with an identity scaling $\bm{D}$ and consider the RO and NP algorithms, their LLL-aided versions RO-LLL and NP-LLL, as well as the SD algorithm.
Additionally, we analyze the RO and NP methods with their optimal scaling matrix $\bm{D}$ from \eqref{eq:RoundingOffOptScaling} and \eqref{eq:NearestPlaneOptScaling} denoted as RO-D and NP-D, respectively.
Furthermore, we use the optimized scaling in \eqref{eq:NearestPlaneOptScaling} with the LLL-aided RO and NP algorithms together with the SD algorithm stated as RO-D-LLL, NP-D-LLL, and SD-D, respectively.
Lastly, we consider \ac{THP}-D with the optimal scaling in \eqref{eq:THPOptScaling}.
For the LLL algorithm, we use the implementation of \cite{WubbenLLLBetter}.

\begin{figure}[t!]
    \centering
    \hspace*{15pt}
    \vspace*{2pt}
    \includegraphics*{figures/LegendAnt.pdf}\\
    \includegraphics*{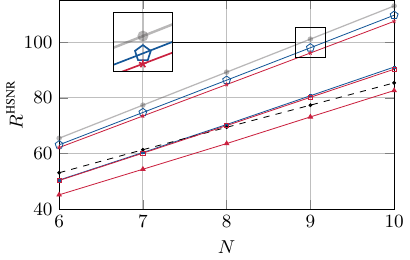}
    \caption{$\gamma = -20$ dB, $\kappa = 0$, $P_{\text{dB}} = 40$ dB}
    \label{fig:AntennaGammaTwenty}
\end{figure}

In Fig. \ref{fig:Gamma}, the high-\ac{SNR} asymptotes of all algorithms are compared with each other for the dimension $N=K=6$.
To see the importance of the matrix $\bm{D}$, the pathloss $\gamma$ of user 1 is modified.
We can observe that all algorithms including the scaling in \eqref{eq:NearestPlaneOptScaling} [or \eqref{eq:THPOptScaling} for THP-D] are robust w.r.t. a change of $\gamma$.
On the other hand, the algorithms relying on the identity scaling, experience a significant degradation in performance when $\gamma$ is modified.
Only for around $\gamma=0\,$dB, which corresponds to a well-conditioned channel, the identity scaling $\bm{D}=\eye$ is valid for this scenario.
For smaller values, the performance decreases rapidly, whereas for higher values, the rate saturates even though the channel gain increases.
The optimality of the scaling in \eqref{eq:NearestPlaneOptScaling} has been shown in Proposition \ref{prop:ScalingMatrixOptRate} in the case of the NP algorithm.
However, we can see that this is, in general, a good choice, as also the \ac{SD} algorithm benefits from it.
The SD-D algorithm clearly outperforms the SD algorithm, with the difference being especially pronounced if the channel is not well conditioned.
Beyond the advantages of the scaling in \eqref{eq:NearestPlaneOptScaling} [or \eqref{eq:THPOptScaling}], Fig. \ref{fig:Gamma} additionally allows to verify Theorem \ref{theo:LLLNPSameTHP} and Corollary \ref{col:LLLTHPSameTHP} by recognizing that THP-D, NP-D, and NP-D-LLL have the same performance.
As stated in Remark \ref{rem:OnlyForScaled}, this only holds for the algorithms relying on the scaling in \eqref{eq:NearestPlaneOptScaling} [or \eqref{eq:THPOptScaling}].
For the algorithms with identity scaling, we can see that \ac{LR} has a strong impact, and NP-LLL leads to a significant performance increase in comparison to NP.
However, as addressed in Remark \ref{rem:OnlyForScaled}, the performance of NP-LLL (and therefore also NP) is worse in comparison to NP-D-(LLL).
Furthermore, we can see that all RO algorithms lead to a worse performance than their NP counterpart.
\ac{LR} leads to a performance increase for all RO versions with RO-D-LLL, i.e., RO with the scaling in \eqref{eq:NearestPlaneOptScaling} together with \ac{LR}, achieving the best performance.
Nevertheless, verifying Proposition~\ref{prop:ROWorseNP}, RO-D-LLL cannot achieve a better performance than NP-D-(LLL) (or THP-D).

\begin{figure}[t!]
    \centering
    \hspace*{15pt}
    \vspace*{2pt}
    \includegraphics*{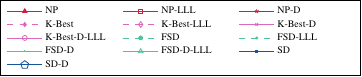}\\
    \includegraphics*{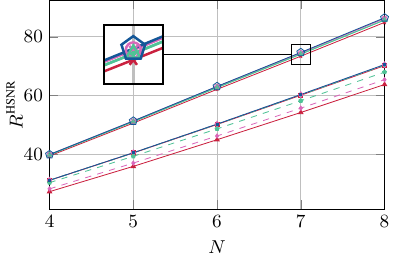}
    \caption{$\gamma = -20$ dB, $\kappa = 0$, $P_{\text{Tx}} = 40$ dB, $B=D=3$}
    \label{fig:C_AntennaMinus20}
    \vspace*{-0.35cm}
\end{figure}

\begin{figure}[b!]
    \centering
    \hspace*{15pt}
    \vspace*{2pt}
    \includegraphics*{figures/C_Legend.pdf}\\
    \includegraphics*{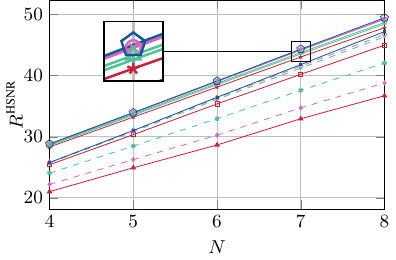}
    \caption{$\gamma = 0$ dB, $\kappa = 20$ dB, $P_{\text{Tx}} = 40$ dB, $B=D=3$}
    \label{fig:C_AntennaRician20}
\end{figure}

As NP-D-LLL, NP-D, and THP-D lead to exactly the same performance, we will only consider NP-D in the following figures.
Additionally, as the RO algorithms lead to a worse performance in comparison to their NP counterparts, the RO methods are also omitted in the following.

We will now further analyze the importance of the scaling matrix $\bm{D}$ and the influence of the channel condition.
According to Fig. \ref{fig:Gamma}, the channel condition has a huge impact on the choice of $\bm{D}$.
We have observed that the identity scaling leads to a significant degradation in the performance for all algorithms.
Only for well-conditioned channels, the identity matrix seems to be a good choice.
This is further illustrated in Figs. \ref{fig:AntennaGammaZero} and \ref{fig:AntennaGammaTwenty}, where we can see the high-\ac{SNR} asymptotes w.r.t. the number of base station antennas (equal to the number of users).
For the well-conditioned channels in Fig. \ref{fig:AntennaGammaZero} ($\gamma=0\,\text{dB}$), the scaling matrix $\bm{D}$ has only a slight impact on the performance.
However, when decreasing the channel gain of user 1 in Fig. \ref{fig:AntennaGammaTwenty} ($\gamma = -20\,\text{dB}$),
this is different, and the impact of optimizing the scaling matrix $\bm{D}$ is very pronounced.
This holds for all methods, including the \ac{SD} algorithm.
Actually, even the low-complexity method NP-D with the optimal scaling in \eqref{eq:NearestPlaneOptScaling} clearly outperforms the \ac{SD} with the identity scaling $\bm{D}=\eye$.

\begin{figure}[t!]
    \centering
    \hspace*{15pt}
    \vspace*{-3pt}
    \includegraphics*{figures/C_Legend.pdf}\\
    \subfigure[K-Best]{
    \includegraphics*{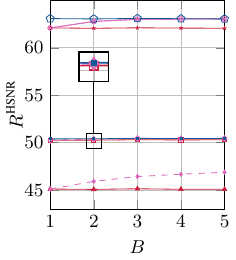}
    }\subfigure[FSD]{
        \includegraphics*{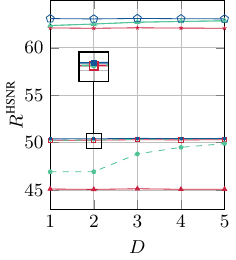}
    }
    \caption{$N=6$, $\gamma = -20$ dB, $\kappa = 0$ dB, $P_{\text{Tx}} = 40$ dB}
    \label{fig:HyperMinus20}
    \vspace*{-0.55cm}
\end{figure}

\subsection{Advanced Algorithms}
Having analyzed the classical {methods} like the \ac{RO} or \ac{NP} algorithm, we are now considering more advanced techniques.
In the following simulations, we are focusing on the K-Best and the \ac{FSD}, which we discussed in Section \ref{sec:Generalization}.
For the K-Best algorithm, we take the $B$ candidates in each step, where, if not stated otherwise, we set $B=3$.
Additionally, each candidate is expanded with $3$ children in each step.
Please see Section \ref{sec:Generalization} for details.
For the \ac{FSD}, we use $D$ dimensions, i.e., $a_{2K-D+1},\dots,a_{2K}$ for exhaustive search whereas the remaining dimensions, i.e., $a_{2K-D},\dots,a_{1}$, are given by the \ac{NP} algorithm for each of the possible candidates.
The exhaustive search is initialized with the NP-(D)-LLL solution, i.e., the constraint set $-\Delta \le a- a^{\text{NP-(D)-LLL}}_{i}\le \Delta$ is considered for $i=2K-D+1,\dots,{2K}$.
We choose $\Delta =1$ as we have observed that the solution is typically close to the NP-(D)-LLL solution.
This small value of $\Delta$ allows a higher number of $D$, i.e., the dimensions with exhaustive search.
We have experienced a better performance with a small $\Delta$ and higher $D$.
Please see Section \ref{sec:Generalization} for details.

The K-Best as well as the \ac{FSD} are analyzed with and without \ac{LR} and, additionally, with and without the scaling matrix in \eqref{eq:NearestPlaneOptScaling}.
This results in the algorithms, K-Best, K-Best-LLL, K-Best-D, K-Best-D-LLL for the K-Best methods and in the algorithms FSD, FSD-LLL, FSD-D, FSD-D-LLL for the \ac{FSD} methods.
To allow a better comparison, we keep the SD-(D) as well as the NP-(D)-(LLL) methods.
Please note again that the NP-D-LLL, NP-D, and the THP-D have the same performance and, hence, only NP-D is given in the following plots.

In Fig. \ref{fig:C_AntennaMinus20}, we analyze the same setup as in Fig. \ref{fig:AntennaGammaTwenty}, where one user has an additional pathloss of $20$ dB.
We can again observe that the algorithms with the scaling in \eqref{eq:NearestPlaneOptScaling} achieve a much better performance than the algorithms with an identity scaling.
This time, the focus is on the more advanced techniques, i.e., the K-Best and the \ac{FSD}.
In case of an identity scaling, the conventional K-Best and the \ac{FSD} are both worse than their LLL-aided versions, K-Best-LLL and FSD-LLL.
The LLL-aided versions have a similar performance, almost overlapping with the optimal \ac{SD}.
Without \ac{LR}, the performance is worse for both algorithms, with the \ac{FSD} achieving a better performance than the K-Best method.
For the methods with the scaling in \eqref{eq:NearestPlaneOptScaling}, the situation is different.
Here, there is no difference between the conventional and the LLL-aided versions.
For the K-Best algorithms, this has been shown in Section \ref{sec:Generalization}, whereas for the \ac{FSD}, this can only be observed for this specific scenario.
Additionally, the K-Best algorithms K-Best-D and K-Best-D-LLL lead to a slightly better performance in comparison to FSD-D and FSD-D-LLL.

We are now considering a different scenario.
In the previous simulation, the condition of the channel matrix was reduced by adding an extra path loss of $20$ dB for one user.
Now, the condition is reduced differently.
Instead of adding an extra path loss, the correlation of the users is {increased by choosing a non-zero Rician factor.}

\begin{figure}[t!]
    \centering
    \hspace*{15pt}
    \vspace*{-3pt}
    \includegraphics*{figures/C_Legend.pdf}\\
    \subfigure[K-Best]{
    \includegraphics*{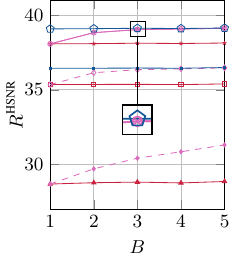}
    }\subfigure[FSD]{
        \includegraphics*{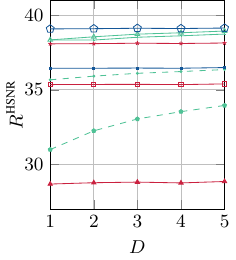}
    }
    \caption{$N=6$, $\gamma = 0$ dB, $\kappa = 20$ dB, $P_{\text{Tx}} = 40$ dB}
    \label{fig:HyperRician20}
\end{figure}

Fig. \ref{fig:C_AntennaRician20} shows the performance of the different methods for a Rician factor of $\kappa = 20$ dB.
We observe again that the methods with the identity scaling perform worse in comparison to the ones based on \eqref{eq:NearestPlaneOptScaling}.
However, the more advanced LLL-aided methods K-Best-LLL, \ac{FSD}-LLL, and especially the \ac{SD} algorithm improve for an increasing dimension of the channel matrix.
For the simple NP-LLL, this improvement cannot be observed.
As in the scenario before, \ac{LR} improves the algorithms considerably in the case of the identity scaling.
Interestingly, the K-Best leads again to a worse performance than FSD.
However, the LLL-aided version K-Best-LLL achieves a higher rate than FSD-LLL.
When focusing on the algorithms based on the scaling \eqref{eq:NearestPlaneOptScaling}, we see again that the K-Best-D achieves the same rate as K-Best-D-LLL according to Section \ref{sec:Generalization} and, additionally, leads to a better performance than the \ac{FSD} methods FSD-D and FSD-D-LLL.
In comparison to Fig. \ref{fig:C_AntennaMinus20}, a slight difference appears between FSD-D-LLL and FSD-D.

We analyze now the performance of the K-Best and \ac{FSD} method w.r.t. their complexity parameters $B$ and $D$ for both scenarios above and a dimension of $N=6$.
Starting with the i.i.d. Rayleigh scenario where one user has an additional pathloss of $20$ dB, we can see the dependence on the parameters $B$ and $D$ for K-Best and FSD in Fig. \ref{fig:HyperMinus20}.
We can see that the K-Best improves a lot faster than the \ac{FSD}.
It has to be noted that the computational complexity scales polynomially for K-Best with $B$ but exponentially for the \ac{FSD} with $D$.
Only the K-Best with the identity scaling and without \ac{LR} has a poor performance, significantly worse than the FSD method (also with identity scaling and without \ac{LR}).

The same behaviour can be seen in Fig. \ref{fig:HyperRician20} where all users have the same pathloss, but a Rician model with Rician factor $\kappa=20$ dB is considered.
Here, FSD-D and FSD-D-LLL lead to different rates.

{
    \subsection{Rate Region}
We are now validating Proposition \ref{prop:WSRLRNoImpact} where the results were generalized from the sum rate to the whole rate region.
In Fig. \ref{fig:RateRegionIID}, we see a two user scenario where their individual rates are given in a rate plot.
NP-D-SR is the NP algorithm for the sum rate (see Proposition \ref{prop:ScalingMatrixOptRate}) which was already condidered previously.
NP-D-WSR is the NP algorithm for the weighted sum rate (see Proposition \ref{prop:WSRLRNoImpact}).
Please note that also the optimized encoding order in Lemma \ref{lem:OptimizedEncodingOrder} is considered.
We can see that \ac{LR} gives no improvement for the whole rate region resulting in the exact same integer vector.
For the identity scaling, we already know that \ac{LR} yields an improvement, but both NP and NP-LLL do no not lie on the boundary, highlighting their suboptimality.
This suboptimality would be enhanced when considering a worse conditioned matrix as was already analyzed for the sum rate.
It has to be noted that in order to achieve the linear connection between the NP-D-SR sum rate points, time sharing has to be considered.
}
{
    \subsection{Practical Considerations}
    We demonstrate now that the results according to Theorem~\ref{theo:GeneralGreedy} are generalizeable to any metric and any practical scenario as noted in  Section \ref{subsec:PracCons}.
    The reason is that once $\bm{B} \in \unitriag$, the integer solution vector with \ac{LR} is exactly the same without \ac{LR} and the same performance arises regardless of the evaluation metric.
    Hence, the scaling in \eqref{eq:NearestPlaneOptScaling}, leading to $\bm{B} \in \unitriag$ in the first place, is now evaluated.
    In Fig. \ref{fig:DiagScaling}, we consider finite \ac{SNR}.
For finite \ac{SNR}, we use again the \ac{ZF} precoder but with an additional stream selection.
All possible allocations are considered and for each, the \ac{ZF} precoder is used. 
Then, the best one is taken. 
For each allocation, we use the identity scaling for the NP(-LLL) algorithm whereas the scaling in \eqref{eq:NearestPlaneOptScaling} is used for NP-D(-LLL).
In order to evaluate the scaling in \eqref{eq:NearestPlaneOptScaling}, we also consider NP-DOpt(-LLL).
Here, we use \eqref{eq:NearestPlaneOptScaling} as an initialization but then a local optimization is used in order to improve on NP-D(-LLL).
We can see that the scaling in \eqref{eq:NearestPlaneOptScaling} leading to $\bm{B} \in \unitriag$ is valid over the whole \ac{SNR} region clearly outperforming $\bm{D}=\eye$.}
\begin{figure}[t!]
    \centering
    \vspace*{2pt}
    \includegraphics*[scale =1]{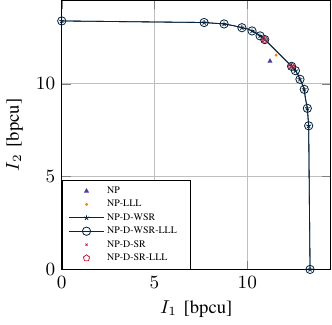}
    \caption{$N=K=2$, $\gamma = 0$ dB, $\kappa = 0$, $P_{\text{Tx}} = 40$ dB}
    \label{fig:RateRegionIID}
    \vspace*{-0.35cm}
\end{figure}

{
With a stream allocation also non-quadratic scenarios can be considered like $K=12$ and $N=6$ in Fig. \ref{fig:Overloaded}. 
However, due to the size of the system, we consider a stream selection similar to \cite{VPSumRate} where the users are successively added in a greedy manner such that the rate is maximized.
Even though we have more users than transmit antennas, the observations remain the same.
The optimized scaling \eqref{eq:NearestPlaneOptScaling} leads to a better performance than the identity scaling, with \ac{LLL} reduction providing no improvement.}

{
In the simulations so far, we assumed a uniform distribution over the Voronoi region.
However, in practice, a discrete constellation will be used.
In Fig. \ref{fig:FiniteConst}, we can see the performance for finite QAM constellations where NP-D and NP-D-LLL are again leading to the same performance regardless of the evaluation metric because $\bm{B} \in \unitriag$.
Additionally, the optimized scaling in \eqref{eq:NearestPlaneOptScaling} leading to $\bm{B} \in \unitriag$ is valid over the whole \ac{SNR} range, clearly outperforming the identity scaling.
However, this is only true under the assumption that, depending on the \ac{SNR}, the appropriate constellation is chosen and that the constellation is not too small for a given \ac{SNR} point.}
\begin{figure}[t!]
    \centering
    \hspace*{15pt}
    \vspace*{2pt}
    \includegraphics*{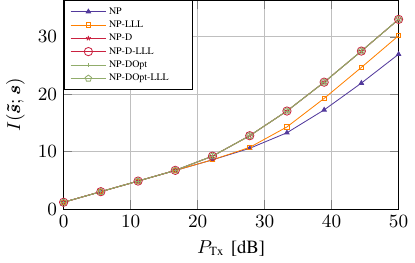}
    \caption{$\gamma = 0$ dB, $\kappa = 20$ dB, $N=K=3$}
    \label{fig:DiagScaling}
\end{figure}
\begin{figure}[b!]
    \centering
    \vspace*{2pt}
    \includegraphics*[scale =1]{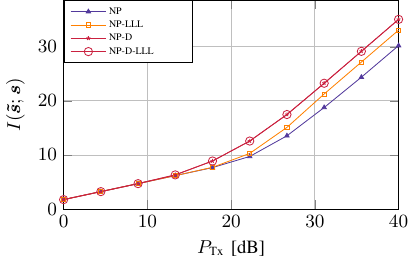}
    \caption{Overloaded system with $N=6$ and $K=12$ with $\gamma=0$ dB and $\kappa = 15$ dB and a ZF scheme with successive user selection.}
    \label{fig:Overloaded}
    \vspace*{-0.35cm}
\end{figure}
{
In the existing literature, large gains for the identity scaling with \ac{LLL} reduction has been shown w.r.t. the \ac{BER}.
However, fixed constellations were chosen in these simulations.
In practice, each user would adapt their constellation depending on their link quality.
Regarding Fig. \ref{fig:FiniteConst} for the 16-QAM constellation, the identity scaling with \ac{LR} achieving an equal effective \ac{SNR} per user channel, outperforms the rate optimal diagonal scaling according to \eqref{eq:NearestPlaneOptScaling} for a high enough \ac{SNR} value.
However, at such a high \ac{SNR}, a much higher constellation should have been chosen.
Hence, when adjusting the modulation depending on the \ac{SNR}, the optimized scaling \eqref{eq:NearestPlaneOptScaling} is clearly outperforming the identity scaling.
Actually, in comparison to Fig. \ref{fig:FiniteConst} where each user has the same constellation, each user should be allowed to have an individual constellation.
The reason is that in comparison to the identity scaling, the scaling \eqref{eq:NearestPlaneOptScaling} leads to different effective \acp{SNR} for the users.
Users with a good link quality need a large constellation and users with a poor link quality need a small constellation.
For the identity scaling, where each user channel has the same \ac{SNR}, also the same constellation can be chosen.}
\\
\begin{figure}[t!]
    \centering
    \vspace*{2pt}
    \includegraphics*[scale =1]{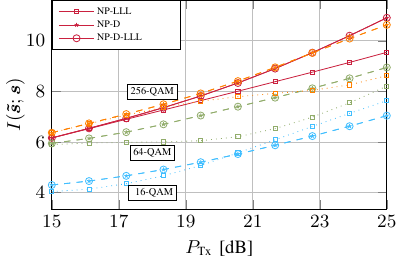}
    \caption{$\gamma = 0$ dB, $\kappa=20$ dB, $N=K=3$}
    \label{fig:FiniteConst}
    \vspace*{-0.35cm}
\end{figure}
{We have verified the results now for finite \ac{SNR} and finite constellations. 
However, we still used the mutual information as our performance metric.
Therefore, we evaluate our findings now with a practical scenario.
In particular, we illustrate that the theoretical results in this paper also hold in a practical scenario with finite \ac{SNR}, finite constellations, and a finite blocklength using channel coding.
For this, we consider a two-user scenario with 396 symbols per block (and user). 
    The \ac{BS} has $N=6$ antennas and can select between a 4-QAM, 16-QAM, 64-QAM, or 256-QAM symbol for each user individually according to their effective channel qualities.
    For the overall symbol including the precoder and lattice perturbation a power budget of $P_{\text{Tx}}$ is used.
    The noise is modeled as i.i.d. Gaussian with unit-variance (complex domain, 0.5 for each real axis).
    A LDPC $\frac{1}{3}$ code is used with the amount of bits given in the following table.
        \begin{table}[h!]
            \centering
                    \begin{tabular}{|r|r|r|}
            \hline
            constellation&coded-bits&info-bits\\
            \hline
            4-QAM&792&264\\
            \hline
            16-QAM&1584&528\\
            \hline
            64-QAM&2376&792\\
            \hline
            256-QAM&3168&1056\\
\hline
        \end{tabular}
        \caption{Number of bits per symbol.}
        \label{Table:BitTable}
    \end{table}
    \begin{figure}[b!]
        \centering
        \hspace*{15pt}
        \vspace*{-3pt}
        \includegraphics*{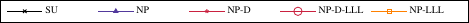}\\
        \subfigure[$\gamma=0$]{
        \includegraphics*{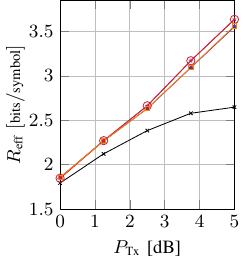}
        \label{fig:PracticalIID}
        }\subfigure[$\gamma = -10$ dB]{
            \includegraphics*{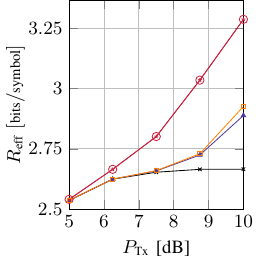}
            \label{fig:PracticalMinus10}
        }
        \caption{Practical LDPC-coded scenario for two users.}
        \label{fig:Practical}
    \end{figure} 
    To obtain the performance metric, we sum all of the successfully transmitted bits to both users.
    A block is only considered if it is perfectly decoded without any bit errors contributing either the total amount of information bits according to TABLE \ref{Table:BitTable} or 0 bits.
    Afterwards, the number of bits is divided by the number of symbols (396) giving the effective rate $R_{\text{eff}}$.
    After averaging over 1000 channel realizations, Fig. \ref{fig:PracticalIID} and Fig. \ref{fig:PracticalMinus10} are obtained.
    For each user, the optimal constellation is chosen.
    This is done by trying all constellations for each channel realization (16 possibilities for 2 users) and then taking the largest one for which the block can be decoded.
    Additionally, a user allocation is performed which allocates either the best single user or both users. 
    SU is also given which is the best single user rate.
}

{
    Firstly, we use i.i.d. Rayleigh fading in Fig.~\ref{fig:PracticalIID} and we can see that the observations are very similar to the theoretical considerations regarding the rate.
    Also in this LDPC-coded scenario, the optimized scaling \eqref{eq:NearestPlaneOptScaling} leads to a better performance than the identity scaling with LLL providing no improvement.
    }
{
    In Fig.~\ref{fig:PracticalMinus10}, we additionally give one of the two users a $10$ dB extra pathloss and by that worsening the channel condition.
    Similar to the simulations regarding the rate, we can see that also in this practical scenario, the identity scaling has a large performance degradation and is not suited if the channel has a worse conditioning.
}

\section{Conclusion}
{
We have demonstrated that for a certain lattice structure \ac{LR} has no effect on a complete class of algorithms.
This lattice structure appears naturally when optimizing the ZF precoder in \ac{VP} w.r.t. the mutual information.
In comparison to existing literature where \ac{SER} or \ac{BER} were considered, we showed that \ac{LLL} reduction is not able to improve the performance when the optimized ZF precoder is chosen.
With that, we show that already very simple algorithms like THP or NP are very effective.
Hence, for system designers targeting achievable rate with ZF precoding, this work suggests that the computational overhead of \ac{LR} can be entirely avoided by using simple, optimized \ac{THP}/\ac{NP}, provided the rate allocation matrix is correctly chosen.
The results of this work can be extended to more complex precoders.
In future work, we will focus on the extension to \ac{MSE}-based precoder design in the moderate-\ac{SNR} regime.
Additionally, the extension for \ac{VP} over multiple time instances will be considered in future work.
}

\vspace*{-0.1cm}

\bibliographystyle{IEEEtran}
\bibliography{refs}

\begin{thebibliography}{10}
\providecommand{\url}[1]{#1}
\csname url@samestyle\endcsname
\providecommand{\newblock}{\relax}
\providecommand{\bibinfo}[2]{#2}
\providecommand{\BIBentrySTDinterwordspacing}{\spaceskip=0pt\relax}
\providecommand{\BIBentryALTinterwordstretchfactor}{4}
\providecommand{\BIBentryALTinterwordspacing}{\spaceskip=\fontdimen2\font plus
\BIBentryALTinterwordstretchfactor\fontdimen3\font minus
  \fontdimen4\font\relax}
\providecommand{\BIBforeignlanguage}[2]{{%
\expandafter\ifx\csname l@#1\endcsname\relax
\typeout{** WARNING: IEEEtran.bst: No hyphenation pattern has been}%
\typeout{** loaded for the language `#1'. Using the pattern for}%
\typeout{** the default language instead.}%
\else
\language=\csname l@#1\endcsname
\fi
#2}}
\providecommand{\BIBdecl}{\relax}
\BIBdecl

\bibitem{DPCCapacity}
H.~Weingarten, Y.~Steinberg, and S.~Shamai, ``{T}he {C}apacity {R}egion of the
  {G}aussian {M}ultiple-{I}nput {M}ultiple-{O}utput {B}roadcast {C}hannel,''
  \emph{IEEE Transactions on Information Theory}, vol.~52, no.~9, pp.
  3936--3964, 2006.

\bibitem{VP}
B.~Hochwald, C.~Peel, and A.~Swindlehurst, ``{A} vector-perturbation technique
  for near-capacity multiantenna multiuser communication-part ii:
  perturbation,'' \emph{IEEE Transactions on Communications}, vol.~53, no.~3,
  pp. 537--544, 2005.

\bibitem{SphereDecoder}
E.~Viterbo and J.~Boutros, ``{A} universal lattice code decoder for fading
  channels,'' \emph{IEEE Transactions on Information Theory}, vol.~45, no.~5,
  pp. 1639--1642, 1999.

\bibitem{SphereDecoderFinckePohst}
\BIBentryALTinterwordspacing
U.~Fincke and M.~Pohst, ``{I}mproved {M}ethods for {C}alculating {V}ectors of
  {S}hort {L}ength in a {L}attice, {I}ncluding a {C}omplexity {A}nalysis,''
  \emph{Mathematics of Computation}, vol.~44, no. 170, pp. 463--471, 1985.
  [Online]. Available: \url{http://www.jstor.org/stable/2007966}
\BIBentrySTDinterwordspacing

\bibitem{SphereDecoderDetection}
M.~Damen, H.~El~Gamal, and G.~Caire, ``{O}n maximum-likelihood detection and
  the search for the closest lattice point,'' \emph{IEEE Transactions on
  Information Theory}, vol.~49, no.~10, pp. 2389--2402, 2003.

\bibitem{SphereDecoderVLSI}
A.~Burg, M.~Borgmann, M.~Wenk, M.~Zellweger, W.~Fichtner, and H.~B{\"o}lcskei,
  ``{VLSI} implementation of {MIMO} detection using the sphere decoding
  algorithm,'' \emph{IEEE Journal of Solid-State Circuits}, vol.~40, no.~7, pp.
  1566--1577, 2005.

\bibitem{SphereDecoderExpCompl}
B.~Hassibi and H.~Vikalo, ``{O}n the sphere-decoding algorithm {I}. {E}xpected
  complexity,'' \emph{IEEE Transactions on Signal Processing}, vol.~53, no.~8,
  pp. 2806--2818, 2005.

\bibitem{Babai}
\BIBentryALTinterwordspacing
L.~Babai, ``{O}n {L}ov{\'a}sz’ lattice reduction and the nearest lattice
  point problem,'' \emph{Combinatorica}, vol.~6, no.~1, pp. 1--13, March 1986.
  [Online]. Available: \url{https://doi.org/10.1007/BF02579403}
\BIBentrySTDinterwordspacing

\bibitem{LLL}
\BIBentryALTinterwordspacing
A.~K. Lenstra, H.~W. Lenstra, and L.~Lov{\'a}sz, ``{F}actoring polynomials with
  rational coefficients,'' \emph{Mathematische Annalen}, vol. 261, no.~4, pp.
  515--534, December 1982. [Online]. Available:
  \url{https://doi.org/10.1007/BF01457454}
\BIBentrySTDinterwordspacing

\bibitem{LRDetection}
H.~Yao and G.~Wornell, ``Lattice-reduction-aided detectors for {MIMO}
  communication systems,'' in \emph{Global Telecommunications Conference, 2002.
  GLOBECOM '02. IEEE}, vol.~1, 2002, pp. 424--428 vol.1.

\bibitem{THPOne}
M.~Tomlinson, ``\BIBforeignlanguage{English}{{N}ew automatic equaliser
  employing modulo arithmetic},''
  \emph{\BIBforeignlanguage{English}{Electronics Letters}}, vol.~7, pp.
  138--139(1), March 1971.

\bibitem{THPTwo}
H.~Harashima and H.~Miyakawa, ``{M}atched-{T}ransmission {T}echnique for
  {C}hannels {W}ith {I}ntersymbol {I}nterference,'' \emph{IEEE Transactions on
  Communications}, vol.~20, no.~4, pp. 774--780, 1972.

\bibitem{THPRates}
R.~Wesel and J.~Cioffi, ``{A}chievable rates for {T}omlinson-{H}arashima
  precoding,'' \emph{IEEE Transactions on Information Theory}, vol.~44, no.~2,
  pp. 824--831, 1998.

\bibitem{THPMichael}
M.~Joham, D.~A. Schmidt, J.~Brehmer, and W.~Utschick, ``{F}inite-{L}ength
  {MMSE} {T}omlinson–{H}arashima {P}recoding for {F}requency {S}elective
  {V}ector {C}hannels,'' \emph{IEEE Transactions on Signal Processing},
  vol.~55, no.~6, pp. 3073--3088, 2007.

\bibitem{THPM}
J.~Nossek, M.~Joham, and W.~Utschick, ``Transmit processing in {MIMO} wireless
  systems,'' in \emph{Proceedings of the IEEE 6th Circuits and Systems
  Symposium on Emerging Technologies: Frontiers of Mobile and Wireless
  Communication (IEEE Cat. No.04EX710)}, vol.~1, 2004, pp. I--18.

\bibitem{WubbenLLLBetter}
D.~W{\"u}bben, D.~Seethaler, J.~Jald{\'e}n, and G.~Matz, ``{L}attice
  {R}eduction,'' \emph{IEEE Signal Processing Magazine}, vol.~28, no.~3, pp.
  70--91, 2011.

\bibitem{LLLBetterTwo}
C.~Windpassinger, R.~Fischer, and J.~Huber, ``{L}attice-reduction-aided
  broadcast precoding,'' \emph{IEEE Transactions on Communications}, vol.~52,
  no.~12, pp. 2057--2060, 2004.

\bibitem{LLLBetterThree}
M.~Taherzadeh, A.~Mobasher, and A.~K. Khandani, ``{C}ommunication {O}ver {MIMO}
  {B}roadcast {C}hannels {U}sing {L}attice-{B}asis {R}eduction,'' \emph{IEEE
  Transactions on Information Theory}, vol.~53, no.~12, pp. 4567--4582, 2007.

\bibitem{VPSumRate}
A.~Razi, D.~J. Ryan, I.~B. Collings, and J.~Yuan, ``{S}um rates, rate
  allocation, and user scheduling for multi-user {MIMO} vector perturbation
  precoding,'' \emph{IEEE Transactions on Wireless Communications}, vol.~9,
  no.~1, pp. 356--365, 2010.

\bibitem{VPMMSE}
\BIBentryALTinterwordspacing
A.~Razi, D.~J. Ryan, J.~Yuan, and I.~B. Collings, ``{S}um rates for multi-user
  {MIMO} vector perturbation precoding with regularization,'' \emph{Physical
  Communication}, vol.~13, pp. 187--196, 2014. [Online]. Available:
  \url{https://www.sciencedirect.com/science/article/pii/S1874490714000731}
\BIBentrySTDinterwordspacing

\bibitem{MMSEVPDavid}
D.~Schmidt, M.~Joham, and W.~Utschick, ``{M}inimum mean square error vector
  precoding,'' \emph{European Transactions on Telecommunications}, vol.~19, pp.
  219--231, 2007.

\bibitem{VPRateMultiDimLattice}
Y.~Avner, B.~M. Zaidel, and S.~Shamai~Shitz, ``{O}n {V}ector {P}erturbation
  {P}recoding for the {MIMO} {G}aussian {B}roadcast {C}hannel,'' \emph{IEEE
  Transactions on Information Theory}, vol.~61, no.~11, pp. 5999--6027, 2015.

\bibitem{IntegerForcing}
S.-N. Hong and G.~Caire, ``{C}ompute-and-{F}orward {S}trategies for
  {C}ooperative {D}istributed {A}ntenna {S}ystems,'' \emph{IEEE Transactions on
  Information Theory}, vol.~59, no.~9, pp. 5227--5243, 2013.

\bibitem{IntegerForcingULDL}
W.~He, B.~Nazer, and S.~Shamai~Shitz, ``{U}plink-{D}ownlink {D}uality for
  {I}nteger-{F}orcing,'' \emph{IEEE Transactions on Information Theory},
  vol.~64, no.~3, pp. 1992--2011, 2018.

\bibitem{IntegerForcingTWC}
D.~Silva, G.~Pivaro, G.~Fraidenraich, and B.~Aazhang, ``{O}n
  {I}nteger-{F}orcing {P}recoding for the {G}aussian {MIMO} {B}roadcast
  {C}hannel,'' \emph{IEEE Transactions on Wireless Communications}, vol.~16,
  no.~7, pp. 4476--4488, 2017.

\bibitem{IntegerForcingRx}
J.~Zhan, B.~Nazer, U.~Erez, and M.~Gastpar, ``{I}nteger-{F}orcing {L}inear
  {R}eceivers,'' \emph{IEEE Transactions on Information Theory}, vol.~60,
  no.~12, pp. 7661--7685, 2014.

\bibitem{DitheringRO}
W.~Lanneer, C.~Nuzman, Y.~Lefevre, P.~Tsiaflakis, W.~Coomans, and M.~Moonen,
  ``{L}attice {R}eduction {A}ided {P}recoding {D}esign in {D}ownstream {G}.fast
  {DSL} {N}etworks,'' \emph{IEEE Access}, vol.~8, pp. 19\,208--19\,220, 2020.

\bibitem{EnergyBound}
D.~J. Ryan, I.~B. Collings, I.~V.~L. Clarkson, and R.~W. Heath~Jr., ``A
  lattice-theoretic analysis of vector perturbation for multi-user {MIMO}
  systems,'' in \emph{2008 IEEE International Conference on Communications},
  2008, pp. 3340--3344.

\bibitem{LLLAnalysis}
X.-W. Chang, J.~Wen, and X.~Xie, ``{E}ffects of the {LLL} {R}eduction on the
  {S}uccess {P}robability of the {B}abai {P}oint and on the {C}omplexity of
  {S}phere {D}ecoding,'' \emph{IEEE Transactions on Information Theory},
  vol.~59, no.~8, pp. 4915--4926, 2013.

\bibitem{KBestVLSI}
K.~Wong, C.~Tsui, R.~Cheng, and W.~Mow, ``A {VLSI} architecture of a {K}-best
  lattice decoding algorithm for {MIMO} channels,'' in \emph{2002 IEEE
  International Symposium on Circuits and Systems (ISCAS)}, vol.~3, 2002, pp.
  III--III.

\bibitem{KBestImpl}
Z.~Guo and P.~Nilsson, ``{A}lgorithm and implementation of the {K}-best sphere
  decoding for {MIMO} detection,'' \emph{IEEE Journal on Selected Areas in
  Communications}, vol.~24, no.~3, pp. 491--503, 2006.

\bibitem{KBestImp}
Q.~Li and Z.~Wang, ``Improved k-best sphere decoding algorithms for {MIMO}
  systems,'' in \emph{2006 IEEE International Symposium on Circuits and Systems
  (ISCAS)}, 2006, pp. 4 pp.--1162.

\bibitem{FPGAQRDMFSD}
M.~Joham, L.~G. Barbero, T.~Lang, W.~Utschick, J.~Thompson, and T.~Ratnarajah,
  ``{FPGA} implementation of {MMSE} metric based efficient near-{ML}
  detection,'' in \emph{2008 International ITG Workshop on Smart Antennas},
  2008, pp. 139--146.

\bibitem{KBestLLLOne}
X.-F. Qi and K.~Holt, ``{A} {L}attice-{R}eduction-{A}ided {S}oft {D}emapper for
  {H}igh-{R}ate {C}oded {MIMO}-{OFDM} {S}ystems,'' \emph{IEEE Signal Processing
  Letters}, vol.~14, no.~5, pp. 305--308, 2007.

\bibitem{KBestLLLTwo}
M.~Shabany and P.~G. Gulak, ``The application of lattice-reduction to the
  k-best algorithm for near-optimal {MIMO} detection,'' in \emph{2008 IEEE
  International Symposium on Circuits and Systems (ISCAS)}, 2008, pp. 316--319.

\bibitem{KBestLLLThree}
Q.~Zhou and X.~Ma, ``An improved {LR}-aided k-best algorithm for {MIMO}
  detection,'' in \emph{2012 International Conference on Wireless
  Communications and Signal Processing (WCSP)}, 2012, pp. 1--5.

\bibitem{FSD}
L.~G. Barbero and J.~S. Thompson, ``{F}ixing the {C}omplexity of the {S}phere
  {D}ecoder for {MIMO} {D}etection,'' \emph{IEEE Transactions on Wireless
  Communications}, vol.~7, no.~6, pp. 2131--2142, 2008.

\bibitem{FSDTwo}
J.~Jalden, L.~G. Barbero, B.~Ottersten, and J.~S. Thompson, ``{T}he {E}rror
  {P}robability of the {F}ixed-{C}omplexity {S}phere {D}ecoder,'' \emph{IEEE
  Transactions on Signal Processing}, vol.~57, no.~7, pp. 2711--2720, 2009.

\bibitem{FSDThree}
L.~G. Barbero and J.~S. Thompson, ``{E}xtending a {F}ixed-{C}omplexity {S}phere
  {D}ecoder to {O}btain {L}ikelihood {I}nformation for {T}urbo-{MIMO}
  {S}ystems,'' \emph{IEEE Transactions on Vehicular Technology}, vol.~57,
  no.~5, pp. 2804--2814, 2008.

\bibitem{FSDFour}
C.~Zheng, X.~Chu, J.~McAllister, and R.~Woods, ``{R}eal-{V}alued
  {F}ixed-{C}omplexity {S}phere {D}ecoder for {H}igh {D}imensional {QAM}-{MIMO}
  {S}ystems,'' \emph{IEEE Transactions on Signal Processing}, vol.~59, no.~9,
  pp. 4493--4499, 2011.

\bibitem{FSE}
M.~Mohaisen and K.~Chang, ``{F}ixed-complexity sphere encoder for multi-user
  {MIMO} systems,'' \emph{Journal of Communications and Networks}, vol.~13,
  no.~1, pp. 63--69, 2011.

\bibitem{FSETwo}
M.~Barrenechea, L.~Barbero, M.~Mendicute, and J.~Thompson, ``Design and
  hardware implementation of a low-complexity multiuser vector precoder,'' in
  \emph{2010 Conference on Design and Architectures for Signal and Image
  Processing (DASIP)}, 2010, pp. 160--167.

\bibitem{FSEThree}
M.~Barrenechea, M.~Mendicute, J.~Del~Ser, and J.~Thompson, ``Wiener
  filter-based fixed-complexity vector precoding for the {MIMO} downlink
  channel,'' in \emph{2009 IEEE 10th Workshop on Signal Processing Advances in
  Wireless Communications}, 2009, pp. 216--220.

\bibitem{VPSumRateHighSNR}
M.~Barrenechea, M.~Joham, M.~Mendicute, and W.~Utschick, ``Analysis of vector
  precoding at high {SNR}: Rate bounds and ergodic results,'' in \emph{2010
  IEEE Global Telecommunications Conference GLOBECOM 2010}, 2010, pp. 1--5.

\bibitem{Forney}
G.~Forney, M.~Trott, and S.-Y. Chung, ``{S}phere-bound-achieving coset codes
  and multilevel coset codes,'' \emph{IEEE Transactions on Information Theory},
  vol.~46, no.~3, pp. 820--850, 2000.

\bibitem{MultiBranchTHP}
K.~Zu, R.~C. de~Lamare, and M.~Haardt, ``{M}ulti-{B}ranch
  {T}omlinson-{H}arashima {P}recoding {D}esign for {MU}-{MIMO} {S}ystems:
  {T}heory and {A}lgorithms,'' \emph{IEEE Transactions on Communications},
  vol.~62, no.~3, pp. 939--951, 2014.

\bibitem{NonlinearRIS}
D.~Semmler, M.~Joham, and W.~Utschick, ``Nonlinear precoding in the {RIS}-aided
  {MIMO} broadcast channel,'' in \emph{ICASSP 2025 - 2025 IEEE International
  Conference on Acoustics, Speech and Signal Processing (ICASSP)}, 2025, pp.
  1--5.

\bibitem{LLLPseudo}
D.~Wubben, R.~Bohnke, V.~Kuhn, and K.-D. Kammeyer, ``{MMSE}-based
  lattice-reduction for near-{ML} detection of {MIMO} systems,'' in \emph{ITG
  Workshop on Smart Antennas (IEEE Cat. No.04EX802)}, 2004, pp. 106--113.

\bibitem{LatticeQuantizationNoiseWhite}
R.~Zamir and M.~Feder, ``{O}n lattice quantization noise,'' \emph{IEEE
  Transactions on Information Theory}, vol.~42, no.~4, pp. 1152--1159, 1996.

\bibitem{HKZReduction}
M.~Yasuda, ``{A} {S}urvey of {S}olving {SVP} {A}lgorithms and {R}ecent
  {S}trategies for {S}olving the {SVP} {C}hallenge,'' in \emph{International
  Symposium on Mathematics, Quantum Theory, and Cryptography}.\hskip 1em plus
  0.5em minus 0.4em\relax Singapore: Springer Singapore, 2021, pp. 189--207.

\bibitem{SchnorrEuchner}
\BIBentryALTinterwordspacing
C.~P. Schnorr and M.~Euchner, ``{L}attice {B}asis {R}eduction: {I}mproved
  {P}ractical {A}lgorithms and {S}olving {S}ubset {S}um {P}roblems,''
  \emph{Mathematical Programming}, vol.~66, no.~1, pp. 181--199, 1994.
  [Online]. Available: \url{https://doi.org/10.1007/BF01581144}
\BIBentrySTDinterwordspacing

\bibitem{ClosestLatticeSearch}
E.~Agrell, T.~Eriksson, A.~Vardy, and K.~Zeger, ``{C}losest point search in
  lattices,'' \emph{IEEE Transactions on Information Theory}, vol.~48, no.~8,
  pp. 2201--2214, 2002.

\end{thebibliography}
\end{document}